\documentclass[11pt]{article}
\usepackage[utf8]{inputenc}
\usepackage{style}

\title{Sampling from the Hardcore Model on Random\\ Regular Bipartite Graphs above the Uniqueness Threshold}

\author{
\begin{tabular}[h!]{cc}
   \FormatAuthor{Nicholas Kocurek}{nichok6@cs.washington.edu}{University of Washington}
   \FormatAuthor{Shayan Oveis Gharan}{shayan@cs.washington.edu}{University of Washington}
   \FormatAuthor{Dante Tjowasi}{dtjowasi@cs.washington.edu}{University of Washington}
\end{tabular}
} %

\date{\small\today}

\begin{document}
\maketitle
\allowdisplaybreaks
\begin{abstract}
    We design an efficient sampling algorithm to generate samples from the hardcore model on random regular bipartite graphs as long as $\lambda\lesssim \frac{1}{\sqrt{\Delta}}$, where $\Delta$ is the degree. Combined with recent work of Jenssen, Keevash and Perkins this implies an $\FPRAS$ for the partition function of the hardcore model on random regular bipartite graphs at any fugacity. Our algorithm is shown by analyzing two new Markov chains that work in complementary regimes. Our proof then proceeds by showing the corresponding simplicial complexes are top-link spectral expanders and appealing to the trickle-down theorem to prove fast mixing. 
\end{abstract}
\thispagestyle{empty}

\newpage

{
\hypersetup{hidelinks}
\tableofcontents
}
\thispagestyle{empty}
\newpage

\setcounter{page}{1}

\section{Introduction}
\label{sec:introduction}

A simplicial complex $\calX$ on a finite ground set $[n]=\{1,\dots,n\}$
is a downwards closed collection of subsets of $[n]$, i.e., if $\tau\in \calX$ and $\sigma \subseteq \tau$, then $\sigma \in \calX$. 
The elements of $\calX$ are called {\bf faces}, and the maximal faces are called {\bf facets}. 
We say that a face $\tau$ is of dimension $k$ if $\abs{\tau} = k$ and write $\text{dim}(\tau) = k$\footnote{Note that this differs from the typical topological definition of dimension for faces of a simplicial complex.}. 
A simplicial complex $\calX$ is a {\bf pure} $d$-dimensional complex if every facet has dimension $d$. 

Given a  $d$-dimensional complex $\calX$,
for any   $0 \leq i \leq d$,    define  $\calX(i) = \{\tau \in \calX \mid \text{dim}(\tau) = i\}$.  Moreover, the codimension of a face $\tau \in \calX$ is defined as $\codim (\tau)  = d - \dim (\tau)$. For a face $\tau \in \calX$, define the  {\bf link} of $\tau$ as the simplicial complex $\calX_{\tau} = \{\sigma \setminus \tau \mid \sigma \in \calX, \sigma \supseteq \tau\}$. 

Given a complex $(\calX,\mu)$, the (weighted) 1-skeleton of $\calX$, $G_\varnothing$, is the weighted graph with vertex set $\calX(1)$ and edge set $\calX(2)$ where the weight of an edge $$w(u,v)=\Pr_{\sigma\sim\mu}[u,v \in\sigma].$$
We let $\A_\varnothing$ be the adjacency matrix of this graph, and $\P_\varnothing$ be the transition probability matrix of the simple random walk. More generally, for any face $\tau\in \calX$ with $\codim(\tau)\geq 2$, we let $\A_\tau, \P_\tau$ be the adjacency matrix and the random walk matrix of the 1-skeleton of the link $\calX_\tau$. Note that for any $\tau \in \calX$, $\mu$ induces a probability distribution, $\mu_{|\tau}$ on the facets of $\calX_\tau$ where the probability of a facet $\sigma'$ is $\P_{\sigma\sim\mu}[\sigma'\subseteq \sigma \mid \tau \subseteq \sigma]$. Lastly, we say $\calX$ is a {\bf connected} complex if for every $\tau$ of codimension at least 2, the 1-skeleton of $\calX_\tau$ is a connected graph.

\begin{definition}[(Top-link) Spectral Expanders]
We say that $(\calX,\mu)$ is an $\alpha$-local spectral expander if for any link $\tau \in \calX$,
 $\lambda_2(\P_\tau)\leq \alpha$.
    We say that $(\calX,\mu)$ is a $\alpha$-top-link spectral expander if for any $\tau\in \calX$ with $\codim(\tau)=2$, $\lambda_2(\P_\tau)\leq \alpha$.
\end{definition}

Given a $d$-dimensional complex $(\calX,\mu)$ and a facet $\tau\in \calX(d)$, one can run a Markov chain called the  the down-up walk to generate random samples from $\mu$: each time step we choose $v \in \tau$  uniformly at random, and among all facets $\sigma\supseteq \tau \setminus \{v\}$ we choose one proportional to its weight $\mu(\sigma)$. It turns out that if $\calX$ is connected, this Markov chain converges to $\mu$. So, a natural question is to find sufficient conditions for the Markov chain to mix rapidly.

Over the last few years (top-link) high-dimensional expanders have been used extensively in the analysis of Markov chains.  The following local-to-global theorem is central in many such applications: 
\begin{theorem}[\cite{DinurK17, Oppenheim18,AlevL20}]\label{thm:oppenheims}
If $(\calX,\mu)$ is a connected $d$-dimensional $\frac{1-\delta}{d}$-top-link spectral expander then $\calX$ is a $\frac{1-\delta}{\delta d}$-local spectral expander. In particular, the down-up walk mixes in time $\poly(n^\delta,d,\min_\tau \log\frac{1}{\mu(\tau)}).$
\end{theorem}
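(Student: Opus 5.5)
Our plan is to derive the statement from Garland's method, i.e.\ the trickle-down theorem of Oppenheim, applied by downward induction on the dimension of the link. The key inequality to establish first is the following: for a connected face $\tau$ with $\codim(\tau)\ge 3$, if every link $\calX_{\tau\cup\{v\}}$ (for $v\in\calX_\tau(1)$) satisfies $\lambda_2(\P_{\tau\cup\{v\}})\le\gamma$, then
\[
\lambda_2(\P_\tau)\le \frac{\gamma}{1-\gamma}.
\]

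To prove this, we fix $f\perp \mathbf 1$ in $\ell^2(\pi_\tau)$, where $\pi_\tau$ is the stationary distribution of $\P_\tau$, and take $f$ to be an eigenvector of eigenvalue $\lambda$. The steps are:
\begin{itemize}
\item Write $\langle f,\P_\tau f\rangle_{\pi_\tau}$ as an average over $v\sim\pi_\tau$ of $\langle f_v,\P_{\tau\cup\{v\}}f_v\rangle$, where $f_v$ is the restriction of $f$ to the link of $v$. This decomposition holds because the weights are induced by $\mu$, so the local measures are the conditional ones.
\item Split each $f_v$ into its constant part, which equals $(\P_\tau f)(v)$, and an orthogonal part. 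Apply the hypothesis to the orthogonal part.
\item Re-sum to obtain
\[
\lambda\|f\|^2\le \gamma\|f\|^2+(1-\gamma)\lambda^2\|f\|^2 .
\]
\end{itemize}
Solving this quadratic gives either $\lambda\le\gamma/(1-\gamma)$ or $\lambda\ge1$. Connectivity of the $1$-skeleton excludes $\lambda=1$ for $f\perp\mathbf 1$; a continuity/averaging argument, or the standard trick of working with the largest eigenvalue strictly below $1$, then excludes the other large branch.

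Iterating from the top: links of codimension $2$ have $\gamma_2=\frac{1-\delta}{d}$. The recursion $\gamma_{k+1}=\gamma_k/(1-\gamma_k)$ is equivalent to $1/\gamma_{k+1}=1/\gamma_k-1$, so
\[
\frac{1}{\gamma_k}=\frac{d}{1-\delta}-(k-2).
\]
At $k=d$, i.e.\ the empty face, this gives $1/\gamma\ge \frac{d}{1-\delta}-d=\frac{\delta d}{1-\delta}$. Hence every link is a $\frac{1-\delta}{\delta d}$-expander, and the denominators remain positive throughout.

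For the mixing statement, we invoke the local-to-global theorem of Alev--Lau, or Kaufman--Oppenheim. For an $(\gamma_0,\dots,\gamma_{d-2})$-local spectral expander, the spectral gap of the down-up walk is at least $\frac1d\prod_{k}(1-\gamma_k)$. Using our $\gamma_k$, this product telescopes: each factor $1-\gamma_k=\gamma_k/\gamma_{k+1}$, so the product is about $\gamma_2/\gamma_d$, which is polynomial in $d$ and $1/\delta$. We expect a sharper accounting, as in the cited works, to yield the stated $n^{\delta}$-type dependence; possibly the stated bound is $\poly(n^{1/\delta},\dots)$. Mixing time then follows from the gap times $\log(1/\min\mu)$.

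The main obstacle is the trickle-down step, specifically the careful measure bookkeeping showing that the local walks average to the global one. The product bound in the local-to-global step is the second delicate point.
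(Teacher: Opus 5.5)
Your argument is correct and is the standard proof from the works the paper cites: Oppenheim's trickle-down via Garland's decomposition, followed by the Alev--Lau local-to-global bound. The paper itself does not reprove this theorem.

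The loose end you flagged needs no further work. Your telescoped product equals $\delta+\frac{1-\delta}{d}\ge\max\{\delta,\frac1d\}$, so the down-up gap is at least $\frac{1}{d}\bigl(\delta+\frac{1-\delta}{d}\bigr)\ge\frac{1}{d^2}$. The mixing time is therefore $O\bigl(d^2\log\frac{1}{\varepsilon\min_\sigma\mu(\sigma)}\bigr)$, which already implies the stated bound, so no sharper accounting toward an $n^{\delta}$ or $n^{1/\delta}$ factor is needed. Likewise, the other branch of your quadratic is $\lambda\ge 1$. Since the local walk is stochastic, that means $\lambda=1$, which connectivity already excludes, so no continuity argument is needed.
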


\paragraph{Hardcore Model.} 
Given a graph $G= (V,E)$ and a parameter $\lambda>0$, let $\mu$  be the probability distribution over independent sets of $G$ where $\mu(I)= \frac{\lambda^{|I|}}{Z_G(\lambda)}$. We call $Z_G(\lambda) = \sum_{I \in \calI(G)} \lambda^{\abs{I}}$ the {\it partition function} and note exact computation of $Z_G(\lambda)$ is classically $\#\PTIME$-hard \cite{Valiant79B}. Recently, the above framework has been shown to be very successful in sampling from the hardcore model \cite{AnariLOG20, ChenLV20, ChenLV21, ChenFYZ21, BlancaCCPSV22, AnariJKPV22, ChenE22, ChenCYZ25, ChenCCYZ25}, where it is shown that the down-up walk mixes in polynomial time on graphs with maximum degree $\Delta$ so long as $\lambda \leq \lambda_c(\Delta) \coloneqq \frac{(\Delta-1)^{\Delta-1}}{(\Delta-2)^\Delta}$. This threshold, $\lambda_c(\Delta)$, has long been known as the {\it tree uniqueness threshold} \cite{Kelly85} and marks the critical threshold for which the Gibbs distribution for the hardcore model on the infinite $\Delta$-regular tree is unique if and only if $\lambda < \lambda_c(\Delta)$. It is further known that unless $\NP = \RP$, there exists no polynomial time algorithm to approximate $Z_G(\lambda)$ approximating $Z_G(\lambda)$ \cite{Sly10, SlyS12, GalanisGSVY14, GalanisSV15, GalanisSV16}, giving a near complete characterization of the computational complexity of computing $Z_G(\lambda)$, at least given polynomial time.

A fundamental open problem in the field is whether we can estimate $Z_G(\lambda)$ above the uniqueness threshold $\lambda_c(\Delta)$ when $G$ is a bipartite graph. When $\lambda = 1$, this task reduces to to count independent sets in a bipartite graph, aptly named $\BIS$. This problem, more generally for $\lambda > \lambda_c(\Delta)$, turns out to be a natural starting point for approximation-preserving reductions for a large class of intermediate problems known as $\RHPI$ \cite{DyerGGJ00}. Formally, this class $\RHPI$ has been shown to form an approximate counting trichotomy theorem (along with the classes $\FP$ and $\#\PTIME$) for approximately counting solutions to Boolean CSPs \cite{DyerGJ10}, akin to classical CSP dichotomy theorems. Many consider the conjecture of whether $\BIS$ is $\#\PTIME$-hard to be a counting and sampling analogue of the Unique Games Conjecture of \cite{Khot02}. Unlike the Unique Games Conjecture, we have little evidence to justify or refute this conjecture. Perhaps most notably, it is known that many variants of a natural chain called the {\it Glauber dynamics} mix only in exponential time even when $G$ is a uniformly random $\Delta$-regular graph and $\lambda$ is above the uniqueness threshold.

 \begin{theorem}[\cite{DyerFJ99, MosselWW07}]
    \label{thm:failtomix}
     Fix $\Delta \geq 3$ and let $G = (X,Y,E)$ be a random $\Delta$-regular bipartite graph. Then the Glauber dynamics (the down-up walk on the independent set complex) requires exponential-time in $\abs{X}+\abs{Y}$ to mix. Moreover, the same result holds for any $o(\abs{X}+\abs{Y})$-cautious Markov chain. A Markov chain is said to be $\ell$-cautious if it adds or deletes at most $\ell$ vertices in each step.
 \end{theorem}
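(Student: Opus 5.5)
The plan is a conductance (bottleneck) argument in the style of Dyer--Frieze--Jerrum and Mossel--Weitz--Wormald. First I would fix the fugacity $\lambda > \lambda_c(\Delta)$; for $\lambda=1$ and $\Delta\geq 6$ this is automatic, and in general we need $\lambda$ above the bipartite threshold. Then I would partition the state space $\calI(G)$ by the \emph{imbalance} $\phi(I) = (\abs{I\cap X}-\abs{I\cap Y})/n$, where $n=\abs{X}=\abs{Y}$. Let $S=\{I:\phi(I)>0\}$. The symmetry $X\leftrightarrow Y$ of the random bipartite model (or simply the fact that $S$ and its complement play symmetric roles up to relabeling) gives $\mu(S)\approx \mu(S^c)\approx 1/2$, up to the mass of the balanced states. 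The goal is to show that the boundary layer $B_\epsilon=\{I: \abs{\phi(I)}\le \epsilon\}$ has exponentially small measure, $\mu(B_\epsilon)\le e^{-cn}$ for some constant $\epsilon>0$, with high probability over $G$.

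The first key step is therefore a first-moment computation. For the configuration model, compute $\mathbb{E}[Z_{\alpha,\beta}]$, the expected weighted count of independent sets with $\alpha n$ vertices in $X$ and $\beta n$ vertices in $Y$. This equals
\[
\binom{n}{\alpha n}\binom{n}{\beta n}\lambda^{(\alpha+\beta)n}\,
\frac{\binom{(1-\alpha)\Delta n}{\beta \Delta n}}{\binom{\Delta n}{\beta\Delta n}},
\]
so $\frac1n\log \mathbb{E}Z_{\alpha,\beta}\to\Phi(\alpha,\beta)$ for an explicit function $\Phi$. Above uniqueness, the maximizers of $\Phi$ lie off the diagonal $\alpha=\beta$: they are the two semi-translation-invariant fixed points of the tree recursion, and $\sup_{\alpha=\beta}\Phi$ is strictly smaller than $\max\Phi$. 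By Markov's inequality, $Z(B_\epsilon)\le e^{n(\sup_{\abs{\alpha-\beta}\le\epsilon}\Phi+o(1))}$ with high probability.

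The second key step is a matching lower bound: w.h.p.\ $Z_G\ge e^{n(\max\Phi-o(1))}$. This is the step I expect to be the main obstacle. I would handle it by a second-moment computation on $Z_{\alpha^*,\beta^*}$ combined with small subgraph conditioning, which is what transfers the result from the configuration model to simple random regular graphs. Alternatively, I would cite the known concentration of $\log Z_G$ on random bipartite regular graphs. Verifying that the second moment's overlap function is maximized at the independent point requires some analysis, but it is standard in the cited works.

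Finally, I would combine the two estimates. Any $\ell$-cautious chain with $\ell=o(n)$, in particular the Glauber dynamics with $\ell=1$, changes $\phi$ by at most $\ell/n<\epsilon$ per step. Hence every transition from $S$ to $S^c$ passes through $B_\epsilon$, so the conductance satisfies
\[
\Phi_S\le \frac{\mu(B_\epsilon)}{\mu(S)}\le e^{-c'n}.
\]
By the standard conductance lower bound on mixing time, the chain needs $e^{\Omega(n)}$ steps to mix from a state in $S$.
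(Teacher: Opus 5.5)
The paper gives no proof of this statement; it is quoted from Dyer--Frieze--Jerrum and Mossel--Weitz--Wormald. Your outline is essentially the bottleneck argument of those works: a first moment on near-balanced configurations, a lower bound on the unbalanced phases, and the observation that an $o(n)$-cautious chain cannot step over the layer $B_\epsilon$. You are right to add the hypothesis $\lambda>\lambda_c(\Delta)$, which the statement as written omits. Some such condition is necessary, since below $\lambda_c(\Delta)$ Glauber dynamics mixes rapidly on every graph of maximum degree $\Delta$. Your first-moment formula and the cautious-chain step are fine. But one step fails as written, and another carries all the difficulty.

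\textbf{The failing step.} You claim that the $X\leftrightarrow Y$ symmetry gives $\mu(S)\approx\mu(S^c)\approx 1/2$. That symmetry belongs to the \emph{law} of $G$, not to a fixed realization, so it says nothing about how a given $G$ splits its mass between the two phases. The conductance of $S$ is $\mu(S,S^c)/\min\{\mu(S),\mu(S^c)\}$, not $\mu(S,S^c)/\mu(S)$. So you need $\min\{\mu(S),\mu(S^c)\}\ge e^{-o(n)}$, whereas a lower bound on the total $Z_G$ only shows that $\mu(B_\epsilon)$ is exponentially small.

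You can repair this by lower bounding each phase separately:
\begin{itemize}
\item Your second-moment argument applied to $Z_{\alpha^*,\beta^*}$ gives $Z(S)\ge e^{n(\max\Phi-o(1))}$ w.h.p.
\item The identical argument applied to $Z_{\beta^*,\alpha^*}$, which has the same law after swapping $X$ and $Y$, gives the same bound for $Z(S^c)$.
\item A union bound yields both at once.
\end{itemize}
Alternatively, $\log Z(\{\phi>0\})$ and $\log Z(\{\phi<0\})$ each change by $O(\log(1+\lambda))$ under a switching. Hence they concentrate within $O(\sqrt{n\log n})$ of their means, and those means coincide by the distributional symmetry. This is the rigorous form of your symmetry heuristic.

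\textbf{The step carrying the difficulty.} This is the lower bound itself, which you defer, and it is not a routine verification for every $\lambda>\lambda_c(\Delta)$. Showing that the second-moment overlap function is maximized at the uncorrelated overlap is precisely the delicate part. It is why Mossel--Weitz--Wormald obtain the result only for $\lambda$ slightly above $\lambda_c(\Delta)$, and Dyer--Frieze--Jerrum work only at $\lambda=1$, $\Delta\ge 6$. The full non-uniqueness range was handled only in later work, e.g.\ Sly--Sun, who bypass the second moment method altogether. So your sketch proves the theorem exactly where that estimate, or the later free-energy result, is available, which is the range the citation actually covers.

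A small simplification: for a bound with $e^{-o(n)}$ slack you do not need small subgraph conditioning. An exponential-order second-moment estimate plus Paley--Zygmund gives $Z(S)\ge\frac12$ times the expectation of $Z_{\alpha^*,\beta^*}$ with probability $e^{-o(n)}$. The concentration of $\log Z(S)$ noted above then upgrades this to w.h.p.
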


Our main contribution is to design and analyze  Markov chains to estimate the partition function on random $\Delta$-regular bipartite graphs as long as $\lambda \lesssim \frac{1}{\sqrt{\Delta}}$. Together with the recent work of \cite{JenssenKP19} which works in the complementary regime $\lambda \gtrsim \frac{\log^2\Delta}{\Delta}$, this shows that one can estimate the partition function of random $\Delta$-regular bipartite graphs at any fugacity $\lambda$. We expect our findings to help better our understanding of the computational complexity of $\BIS$ in the future. 

 \subsection{Main results}

The following is our main theorem. 
 \begin{theorem}[Main]
    \label{thm:main}
    Let $G = (X, Y, E)$ be a random $\Delta$-regular bipartite graph. Then with high probability over the randomness of $G$, there is an $\FPRAS$ for the partition function $Z_G(\lambda)$ of the hardcore model on $G$ at fugacity $\lambda \lesssim \frac{1}{\sqrt{\Delta}}$.
\end{theorem}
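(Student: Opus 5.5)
The plan is to reduce the $\FPRAS$ to approximate sampling. We fix a sequence of fugacities $0=\lambda_0<\lambda_1<\dots<\lambda_m=\lambda$ with consecutive ratios $1+O(1/n)$ and write $Z_G(\lambda)$ as a telescoping product of ratios $Z_G(\lambda_{i+1})/Z_G(\lambda_i)$. Each ratio is an expectation of $(\lambda_{i+1}/\lambda_i)^{|I|}$ under $\mu_{\lambda_i}$, with bounded variance. Standard self-reducibility (simulated annealing) then says it suffices to sample $\epsilon$-approximately from the hardcore measure at every $\lambda'\le\lambda$ in polynomial time. Because Theorem~\ref{thm:failtomix} rules out Glauber dynamics and every cautious chain, the sampler has to make global moves. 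We would build a pure complex, not the independent set complex, whose facets encode an independent set together with auxiliary data, and whose down-up walk resamples large pieces at once.

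Concretely, we would work in two complementary regimes, as the abstract suggests. The first regime is $\lambda\lesssim 1/\Delta$. Here we use the bipartition $(X,Y)$ and encode $I$ as the pair $(I\cap X, I\cap Y)$. The facet is built from one element per vertex of $X$ (and of $Y$) recording "in $I$" or "out of $I$". Equivalently, we take a product-type complex of dimension $d=|X|+|Y|$ with weight $\mu$, so that links of codimension $2$ are two-vertex conditionals. Resampling one side given the other is then a product measure, which gives a block move. The second regime runs up to $\lambda\lesssim 1/\sqrt\Delta$. There we would introduce a lifted complex in which each facet also selects, for each vertex, a neighbor "witness" or a partner on the other side. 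The point of this step is to make the $\Delta$ dependence in top links enter as $\lambda^2\Delta$ rather than $\lambda\Delta$. For each complex, the key steps in order are these. (i) Define the facets and $\mu$ and check connectivity. (ii) Compute a codimension-$2$ link: it is determined by fixing all but two coordinates, so its 1-skeleton is a small weighted graph, or a bipartite-like graph between the choices for the two free coordinates. (iii) Bound $\lambda_2(\P_\tau)\le\frac{1-\delta}{d}$ by bounding how much fixing one free coordinate shifts the marginal of the other. This reduces to local statistics of $G$, namely how many common neighbors or short cycles the two vertices share.

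Step (iii) is where the randomness of $G$ enters, and we expect it to be the main obstacle. We need $\lambda_2\le(1-\delta)/d$ with $d\asymp n$, so every pair of coordinates must be nearly independent in every top link, uniformly over all codimension-$2$ faces. For far-apart vertices this is exact. For adjacent vertices, or vertices sharing neighbors, the dependence has size about $\lambda$ per shared structure. We would use that a random $\Delta$-regular bipartite graph is, with high probability, locally tree-like with few short cycles and has spectral expansion $\lambda_2(A_G)\lesssim 2\sqrt{\Delta-1}$ (Friedman-type bounds). This spectral bound is what should let $\lambda\sqrt\Delta\lesssim 1$ suffice: the link matrix's off-diagonal part is governed by $\lambda$ times a normalized adjacency operator, whose nontrivial eigenvalues are $O(\lambda\sqrt\Delta)$ rather than $O(\lambda\Delta)$. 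If top-links turn out to be too small a structure for this global argument, we would first try choosing the complex so that its top links are themselves indexed by $G$'s edges. Once the bound holds, Theorem~\ref{thm:oppenheims} gives polynomial mixing. The term $\log\frac1{\mu(\tau)}$ is $O(n\log(1/\lambda))$, which is polynomial. Combining the two regimes over all $\lambda'\le\lambda$ and applying the annealing reduction yields the $\FPRAS$.
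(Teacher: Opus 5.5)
Your annealing reduction is fine; the gap is step (iii). It cannot be closed within your framework, for two separate reasons.

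\emph{The local reason.} In your product encoding, a codimension-$2$ link frees only two vertices $u,v$, so its 1-skeleton has at most four vertices. Suppose $u\sim v$ and neither has another occupied neighbor. Then the link is the path on $(u,1),(v,0),(u,0),(v,1)$ with edge weights $\lambda,1,\lambda$, and its walk has $\lambda_2(\P_\tau)=\frac{\lambda}{1+\lambda}$. This is nowhere near $\frac{1-\delta}{d}$ with $d\asymp n$, even for $\lambda\lesssim 1/\Delta$. As you yourself suspect, the spectrum of $\A_G$ has no way to enter a link this small. Trickle-down at scale $1/d$ needs top links whose 1-skeleta have $\Theta(n)$ vertices.

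\emph{The global reason.} In any complex whose ground elements each carry the data of $O(1)$ vertices, one down-up step changes the occupancy of only $O(1)$ vertices. Both of your constructions are of this type, including the witness lift. At the target fugacity $\lambda\in(\lambda_c(\Delta),c/\sqrt{\Delta}]$, the hardcore measure on a random regular bipartite graph has phase coexistence: an $X$-dominated and a $Y$-dominated phase, separated by configurations of tiny probability. For instance, configurations with both sides above $\alpha|X|$ have probability at most $e^{-(|X|+|Y|)}$, cf.\ \Cref{prop:bipartiteindsetbound}. The conductance argument behind \Cref{thm:failtomix} transfers verbatim to your walk, because the lifted measure of states projecting to near-balanced configurations equals their projected measure. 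So your walk mixes in exponential time, and by \Cref{thm:oppenheims} no such complex can satisfy (iii). Annealing in $\lambda$ does not help, since the bottleneck is already present at the final fugacity.

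\textbf{The missing idea: never sample across the phase boundary.} The paper's two complementary regimes are in the \emph{occupancy}, not the fugacity, and both are used at the single target $\lambda$. Set $\alpha=\frac{\log\Delta}{(2+o_\Delta(1))\Delta}$ and $\beta=\Theta(\lambda)$. The paper approximates $Z_G(\lambda)$ by $\hat Z_G(\lambda)$, the sum of three pieces:
\begin{itemize}
\item the weight of independent sets with $|I\cap X|\le\alpha|X|$ and $|I\cap Y|\le\alpha|Y|$;
\item the weight of those with $\alpha|X|<|I\cap X|\le\beta|X|$;
\item the weight of those with $\alpha|Y|<|I\cap Y|\le\beta|Y|$.
\end{itemize}
\Cref{lem:indsetbound} and \Cref{prop:bipartiteindsetbound} bound the dropped and double-counted weight by $e^{-(|X|+|Y|)}Z_G(\lambda)$.

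The first piece is a polynomial-size sum of two-sided slice counts $|\calI_{k_X,k_Y}(G)|$. A top link there is the bipartite complement of $G$ on the unblocked vertices. Anti-expansion, $|Y\setminus N[\tau_X]|\gtrsim\frac{\log\Delta}{\sqrt\Delta}|Y|$, together with Friedman's bound gives $\lambda_2(\P_\tau)\lesssim 1/|\tau|$.

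The other two pieces are partition functions of the one-sided slice $\mu^{(k)}_\lambda(S)\propto\lambda^{k}(1+\lambda)^{|Y\setminus N[S]|}$ on $\binom{X}{k}$. This is your ``integrate out the other side'' idea, but with $|I\cap X|=k$ held fixed, so no chain ever has to cross phases. There, the comparison with $\lambda\,\lambda_2(\A_G)^2\approx 4\lambda\Delta$ is what forces $\lambda\lesssim 1/\sqrt\Delta$.

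Finally, counting reduces to sampling through the links of these complexes, not through annealing in $\lambda$.
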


We note that, to the best of our knowledge, there were no prior works analyzing the performance of MCMC algorithms for sampling from the hardcore model on (random regular) {\em bipartite} graphs above the uniqueness threshold; the only result which comes close is \cite{ChenGGPSV21}, an MCMC-based adaptation of \cite{JenssenKP19}, working in the high fugacity regime, i.e., when $\lambda \gg \Delta$ for expander graphs. The main result of \cite{JenssenKP19} also holds for random regular graphs, working when $\lambda \gtrsim \frac{\log^2\Delta}{\Delta}$, and by combining this with \Cref{thm:main} we immediately get an $\FPRAS$ for random $\Delta$-regular bipartite graphs at any fugacity $\lambda > 0$.

\begin{corollary}[\Cref{thm:main} + Theorem 2 of \cite{JenssenKP19}]
    Let $G = (X, Y, E)$ be a random $\Delta$-regular bipartite graph. Then with high probability over the randomness of $G$, there is an $\FPRAS$ for the partition function $Z_G(\lambda)$ of the hardcore model on $G$ at any fugacity $\lambda > 0$.
\end{corollary}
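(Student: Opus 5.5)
The corollary is a case split on $\lambda$, combining two algorithms whose ranges of validity overlap. Fix the constants implicit in the two ``$\lesssim$'' statements. \Cref{thm:main} gives $c_1>0$ such that, with high probability over $G$, there is an $\FPRAS$ for $Z_G(\lambda)$ whenever $\lambda\le c_1/\sqrt{\Delta}$. Theorem 2 of \cite{JenssenKP19} gives $C_2$ such that, with high probability, there is an $\FPRAS$ (in fact an FPTAS via cluster expansion) whenever $\lambda\ge C_2\log^2\Delta/\Delta$.

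The first step is to check that these two ranges cover all of $(0,\infty)$. Since $\log^2\Delta/\sqrt{\Delta}\to 0$, there is a $\Delta_0$ such that for all $\Delta\ge\Delta_0$ we have $C_2\log^2\Delta/\Delta\le c_1/\sqrt{\Delta}$. For such $\Delta$, every $\lambda>0$ falls into at least one of the two regimes.

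For the finitely many degrees $3\le\Delta<\Delta_0$, which are constants, I would not rely on the asymptotic overlap. Instead I would use two other facts.
\begin{itemize}
\item For $\lambda<\lambda_c(\Delta)$, the down-up walk results for bounded-degree graphs cited in the introduction give an $\FPRAS$.
\item For $\lambda\ge\lambda_c(\Delta)$, I would check that the hypothesis of \cite{JenssenKP19} covers $\lambda$. Their result for random regular bipartite graphs is stated for all $\lambda$ above a threshold of order $\log^2\Delta/\Delta$, and I would verify that this threshold is at most $\lambda_c(\Delta)\approx e/\Delta$ up to constants.
\end{itemize}
Failing that, I would enlarge the constant in \Cref{thm:main}'s regime so that it covers small $\Delta$, using that $\lambda_c(\Delta)$ and $c_1/\sqrt\Delta$ are both constants when $\Delta$ is bounded.

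The last step is to combine the probabilistic statements. Each algorithm requires $G$ to satisfy a high-probability property: expansion or top-link spectral conditions for ours, and expansion or isoperimetry for \cite{JenssenKP19}. A union bound over these two events, for the given $\Delta$, shows that both hold simultaneously with high probability. On that event, the algorithm reads $\lambda$ and runs whichever method applies. Since the two properties depend only on $G$ and not on $\lambda$, one event handles all fugacities at once.

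The only delicate step is the constant bookkeeping for small $\Delta$: making sure the constants hidden in ``$\lesssim$'' and ``$\gtrsim$'' really leave no gap. Whether this works depends on exactly how the constants in \Cref{thm:main} and in \cite{JenssenKP19} are quantified, and I would check that first.
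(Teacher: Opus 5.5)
Your argument for $\Delta\ge\Delta_0$ is exactly the paper's. The paper notes only that $\log^2\Delta/\Delta\ll 1/\sqrt{\Delta}$, so the regime of \Cref{thm:main} and that of Theorem~2 of \cite{JenssenKP19} together cover every $\lambda>0$. Your remark that the high-probability events on $G$ do not depend on $\lambda$ usefully makes explicit a quantifier point the paper leaves implicit: one event serves all fugacities.

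The paragraph on $3\le\Delta<\Delta_0$, however, should be deleted rather than repaired. The corollary is implicitly an asymptotic statement, because \Cref{thm:main} is only proved for $\Delta$ sufficiently large. For example, \Cref{lem:bipartiteconc} relies on $\gamma=\Theta\bigl(\frac{\log\log\Delta+\log\ell}{\log\Delta}\bigr)<1$, \Cref{lem:bipartitelargeconc} holds only for large $\Delta$, and \Cref{prop:bipartiteindsetbound} is proved ``for large enough $\Delta$''. So the paper claims nothing for bounded $\Delta$, and neither of your proposed fixes would supply it.

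\begin{itemize}
\item Your second bullet needs $C_2\log^2\Delta/\Delta\le\lambda_c(\Delta)$. Since $\lambda_c(\Delta)\sim e/\Delta$ is asymptotically smaller than $\log^2\Delta/\Delta$, at best this would be a numerical coincidence about an unspecified constant. The window $[\lambda_c(\Delta),\,C_2\log^2\Delta/\Delta)$ is precisely the part of the non-uniqueness regime that \Cref{thm:main} exists to cover, so you cannot expect \cite{JenssenKP19} to reach down to $\lambda_c$.
\item Enlarging $c_1$ is also unavailable. The proof forces $\lambda\sqrt{\Delta}$ to be at most a small absolute constant, e.g.\ to get $(4+o_\Delta(1))\lambda\Delta^{1/2+a}/|X|\le\Delta^a/(3|X|)$ in the proof of \Cref{thm:bipartite2}, and $(\ell+1)e(2+\gamma)\sqrt{\Delta}\,\lambda<1$ in \Cref{prop:bipartiteindsetbound}. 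Nothing in the argument improves when $\Delta$ is bounded.
\end{itemize}

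State the hypothesis $\Delta\ge\Delta_0$ explicitly instead.
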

 
Our proof of \Cref{thm:main} goes through studying the down-up walk on a \emph{pair} of complexes closely related to the hardcore model, which we introduce here. The first is the {\it two-sided slice}, a simple modification where instead of considering the whole independent set complex, we truncate to sets with fixed sizes in both partitions of the graph.

\begin{definition}[Two-sided independent set slice]
    Let $G = (X, Y, E)$ be a bipartite graph and fix $k_X \leq \abs{X}, k_Y \leq \abs{Y}$. We define
    \begin{align*}
        \calI_{k_X, k_Y}(G) = \cbra{I \in \calI_k(G) \mid k = k_X + k_Y, \abs{I \cap X} = k_X, \abs{I \cap Y} = k_Y}\mper
    \end{align*}
    Define $\mu^{(k_X, k_Y)}$ to be the uniform distribution over $\calI_{k_X, k_Y}(G)$. We define the associated pure $(k_X + k_Y)$-dimensional simplicial complex $(\calX, \mu^{(k_X, k_Y)})$ by taking the closure of $\calI_{k_X, k_Y}(G)$ over ground set $X \cup Y$.
\end{definition}

The second complex we introduce is what we call the {\it one-sided slice} of the hardcore model. This distribution is over fixed-size subsets of just one side $\binom{X}{k}$ and weights each set proportional to the total weight of all independent sets with its intersection on $X$. 

\begin{definition}[One-sided slice of the hardcore model]
    Let $G = (X, Y, E)$ be a bipartite graph and fix $k \leq \abs{X}$. We define $\mu_\lambda^{(k)}$ over $\binom{X}{k}$ at fugacity $\lambda > 0$ via
    \begin{align*}
        \mu_\lambda^{(k)}(S) \propto \sum_{\substack{I \in \calI(G)\\ I \cap X = S}} \lambda^{\abs{I}}\mper
    \end{align*}
    for any $S \in \binom{X}{k}$. We define the associated pure $k$-dimensional simplicial complex $(\calX, \mu_\lambda^{(k)})$ by taking the complete complex on $\binom{X}{k}$ over $\mu_\lambda^{(k)}$.
\end{definition}

To the best of our knowledge, none of the previous works on the subject have studied these complexes and we expect our findings help in better understanding the complexity of $\BIS$. 

As our main technical result, we establish the near exact regimes for which these complexes are expanders for a random regular bipartite graph. Starting with the two-sided independent set slice, we prove the following.

\begin{theorem}
    \label{thm:bipartite}
    Let $G = (X, Y, E)$ be a random $\Delta$-regular bipartite graph. Then with high probability over the randomness of $G$, the two-sided independent set slice $(\calX, \mu^{(k_X, k_Y)})$ for $k_X, k_Y \leq \alpha \abs{X}$ with $\alpha =  \frac{\log \Delta}{(2+o_\Delta(1))\Delta}$ is a connected $\frac{1}{2(k_X + k_Y)}$-top-link spectral expander, and as a result, the down-up walk mixes in polynomial-time.
\end{theorem}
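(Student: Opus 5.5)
We would verify the hypothesis of \Cref{thm:oppenheims} directly. We classify the codimension-$2$ faces $\tau$ of the two-sided slice and bound $\lambda_2(\P_\tau)$ in each case by $\frac{1}{2(k_X+k_Y)}$. Taking $\delta=1/2$ in \Cref{thm:oppenheims} then gives a $\frac{1}{k_X+k_Y}$-local spectral expander and polynomial mixing of the down-up walk, provided we also check connectivity. Fix $\tau$ with $\abs{\tau}=k_X+k_Y-2$, and let $X_\tau = X\setminus(\tau\cup N(\tau\cap Y))$ and $Y_\tau=Y\setminus(\tau\cup N(\tau\cap X))$ be the ``available'' vertices. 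There are three cases.
\begin{itemize}
\item \emph{Both missing vertices lie in $X$.} $X$ is independent and $\mu^{(k_X,k_Y)}$ is uniform, so the link's $1$-skeleton is the complete graph on $X_\tau$ with uniform weights. Hence $\lambda_2(\P_\tau)=-\frac{1}{\abs{X_\tau}-1}\le 0$.
\item \emph{Both missing vertices lie in $Y$.} This is symmetric to the first case.
\item \emph{One missing vertex in each side.} This is the main case. The link graph is the bipartite complement of $G[X_\tau,Y_\tau]$: the complete bipartite graph $K_{a,b}$, with $a=\abs{X_\tau}$ and $b=\abs{Y_\tau}$, minus the edges $B$ of $G$ between $X_\tau$ and $Y_\tau$.
\end{itemize}

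In the third case we would write the adjacency matrix as $J_{a,b}-B$ in bipartite block form, with degrees $b-\deg_B(x)$ and $a-\deg_B(y)$, each within $\Delta$ of the degrees of $K_{a,b}$. The walk on $K_{a,b}$ has spectrum $\{1,0,\dots,0,-1\}$. A perturbation argument (Weyl's inequality on the normalized matrix $D^{-1/2}(J-B)D^{-1/2}$, after projecting out the top eigenvector) bounds the second eigenvalue. The leading term is essentially $\sigma(B)/\sqrt{ab}$, where $\sigma(B)$ is the largest singular value of $B$ on the space orthogonal to the constant vectors, plus lower-order corrections of order $\Delta/\min(a,b)$ coming from degree irregularity. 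So the goal reduces to two random-graph facts holding with high probability.
\begin{enumerate}
\item[(i)] \emph{A lower bound on available sets.} Whenever $\tau$ extends to an element of $\calI_{k_X,k_Y}(G)$ with $k_X,k_Y\le\alpha n$, we have $a,b\ge n\,e^{-\alpha\Delta(1+o(1))}\approx n\Delta^{-1/2+o(1)}$.
\item[(ii)] \emph{A uniform spectral bound.} For every pair of subsets $S\subseteq X$, $T\subseteq Y$ of at least this size, the bipartite graph $G[S,T]$ has nontrivial singular values $O(\sqrt{\Delta})$, uniformly over all such subsets.
\end{enumerate}

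For (i), we would use a first-moment union bound over sets: for a random $\Delta$-regular bipartite graph (configuration model), no set $S$ of size $\alpha n$ has $\abs{N(S)}$ much larger than $n(1-(1-\alpha)^{\Delta})$. Therefore the complement of $N(\tau\cap Y)$ in $X$ has size about $n e^{-\alpha\Delta}$. Since $\tau\cap X$ occupies at most $\alpha n \ll n e^{-\alpha\Delta}$ of it, this leaves $a$ of the stated order, and the same argument applies to $b$. This is precisely where $\alpha=\frac{\log\Delta}{(2+o(1))\Delta}$ enters.

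Substituting into the perturbation bound gives $\lambda_2(\P_\tau)\lesssim \sqrt{\Delta}\cdot\Delta^{1/2-o(1)}/n$. We need this to be at most $\frac{1}{2(k_X+k_Y)}\ge\frac{\Delta}{2n\log\Delta\,(1+o(1))}$. The $o_\Delta(1)$ slack in $\alpha$ is what absorbs the $\log\Delta$ and constant factors, so we would tune $\alpha$ there. Connectivity of every link follows from the same estimates: complete graphs are connected, and $K_{a,b}$ minus a subgraph of maximum degree $\Delta\ll \min(a,b)$ is connected. For the lower-dimensional links, whose global connectivity is needed in \Cref{thm:oppenheims}, we would argue by exchanging one vertex at a time.

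The main obstacle is (ii). The subsets $X_\tau, Y_\tau$ depend adversarially on $\tau$, so we need a spectral bound uniform over all large induced bipartite subgraphs. A naive bound $\sigma(B)\le\Delta$ loses a $\sqrt{\Delta}$ factor and does not suffice. We would first try an expander-mixing/discrepancy statement proved by a union bound in the configuration model: for all $S,T$ of size at least $n\Delta^{-1/2}$, $e(S,T)$ is close to $\Delta\abs{S}\abs{T}/n$ up to $O(\sqrt{\Delta\abs{S}\abs{T}})$. We would then convert this into a singular value bound via a Bilu--Linial-type argument. If the logarithmic loss in that conversion is too costly, we would absorb it into the $o_\Delta(1)$ in $\alpha$.
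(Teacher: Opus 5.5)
Your case split, your bound (i) on the available sets (this is the paper's anti-expansion statement, \Cref{lem:bipartiteconc}), and your connectivity discussion are fine. The spectral step in the mixed case, however, does not go through as you describe it. You bound $\lambda_2$ by $\sigma(B)/\sqrt{ab}$ plus ``lower-order'' corrections of order $\Delta/\min(a,b)$, and then your final estimate keeps only the first term. But with $\sigma(B)=O(\sqrt{\Delta})$ and $a\approx b$, that correction is $\sqrt{\Delta}$ times \emph{larger} than the main term. Since $\min(a,b)\le\abs{X}$, it is at least $\Delta/\abs{X}$, which already exceeds the target $\frac{1}{2(k_X+k_Y)}\approx\frac{\Delta}{(2+o(1))\abs{X}\log\Delta}$ at $k_X=k_Y=\alpha\abs{X}$. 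No choice of the $o_\Delta(1)$ in $\alpha$ can absorb it.

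The loss is real for the route you describe, not just loose bookkeeping. Suppose you project out only $D^{1/2}\mathbf{1}$ and apply Weyl with all of $B$ as the perturbation, and write $S=X_\tau$, $T=Y_\tau$. The vector $(D_S^{1/2}\mathbf{1}_S,-D_T^{1/2}\mathbf{1}_T)$ is orthogonal to $D^{1/2}\mathbf{1}$, because both halves have squared norm equal to the number of link edges. At this vector the Rayleigh quotient of $-D^{-1/2}BD^{-1/2}$ is $e(S,T)/(ab-e(S,T))$. By expander mixing (using $a,b\gg\abs{X}/\sqrt{\Delta}$) this is $\approx\Delta/\abs{X}$. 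The culprit is the mean part $\frac{\Delta}{\abs{X}}J_{a,b}$ of $B$. In the full matrix it is cancelled by $J_{a,b}$ in the direction $(\mathbf{1}_S,-\mathbf{1}_T)$, but Weyl bounds the two pieces separately and loses that cancellation. So this route only reaches occupancies of order $1/\Delta$.

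The missing idea, which is the paper's proof of \Cref{lem:main}, is to let the global matrix carry this cancellation. The link adjacency matrix $J_{a,b}-B$ is a principal submatrix of the bipartite complement $\overline{\A_G}$. Cauchy interlacing therefore gives $\lambda_2(J_{a,b}-B)\le\lambda_2(\overline{\A_G})=\lambda_2(\A_G)\le 2\sqrt{\Delta-1}+o(1)$. The equality holds because on $\mathbf{1}_X^\perp\times\mathbf{1}_Y^\perp$ the block $J-A$ acts as $-A$. The degrees then enter only multiplicatively: by Courant--Fischer, $\lambda_2(\P_\tau)=\lambda_2(D^{-1/2}(J_{a,b}-B)D^{-1/2})\le\lambda_2(\A_G)/(\min(a,b)-\Delta)$.

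Equivalently, in your Weyl language, split $J_{a,b}-B=(1-\frac{\Delta}{\abs{X}})J_{a,b}-(B-\frac{\Delta}{\abs{X}}J_{a,b})$. After conjugation by $D^{-1/2}$, the first piece still has exactly one positive eigenvalue. The second is a principal submatrix of $\A_G-\frac{\Delta}{\abs{X}}J$, where $J$ is the all-ones bipartite matrix on $(X,Y)$, and that matrix has norm exactly $\lambda_2(\A_G)$.

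This also dissolves your ``main obstacle'' (ii). For $x\perp\mathbf{1}_S$ and $y\perp\mathbf{1}_T$, extending by zero keeps them orthogonal to $\mathbf{1}_X$ and $\mathbf{1}_Y$. Hence $x^\top By\le\lambda_2(\A_G)\norm{x}{2}\norm{y}{2}$, deterministically and uniformly over all $S,T$. No discrepancy union bound or Bilu--Linial conversion is needed; as you sketched it, that conversion would in any case require discrepancy for all sub-pairs of $S\times T$, not only large ones. With this fix in place, your (i) finishes the argument exactly as in the paper.
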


In other words, for sufficiently small $k_X, k_Y$ the two-sided slice expands enough to apply the trickle-down theorem of \Cref{thm:oppenheims}. The bound for $\alpha$ here turns out to be essentially tight for random regular graphs; beyond this point the complex is not only likely not an expander, it is not necessarily even connected. So instead of sampling using this chain, we show that above this bound the one-sided slice complex is complementary, in the sense that it is a top-link expander.

\begin{theorem}
    \label{thm:bipartite2}
    Let $G = (X, Y, E)$ be a random $\Delta$-regular bipartite graph. Then with high probability over the randomness of $G$, the one-sided slice of the hardcore model $(\calX, \mu_\lambda^{(k)})$ for $\alpha \abs{X} \leq k \leq \beta \abs{X}$ with $\alpha =  \frac{\log \Delta}{(2+o_\Delta(1))\Delta}$ and $\beta = \Theta(\lambda)$ at fugacity $\lambda \lesssim \frac{1}{\sqrt{\Delta}}$ is a connected $\frac{1}{2k}$-top-link spectral expander, and as a result, the down-up walk mixes in polynomial-time.
\end{theorem}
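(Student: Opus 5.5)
We will verify the hypothesis of \Cref{thm:oppenheims} directly: for every face $\tau\in\calX$ with $\codim(\tau)=2$, the two-point link has $\lambda_2(\P_\tau)\le \frac{1}{2k}$. Connectivity of all links is then automatic, since the one-sided complex is the complete complex on $\binom{X}{k}$ and every $k$-subset has positive weight. Once this bound holds, \Cref{thm:oppenheims} with $\delta=1/2$ gives polynomial mixing. The only remaining input is $\min_S \log\frac1{\mu(S)}=\poly(n)$, which is immediate because each weight lies between $\lambda^{k}$ and $\lambda^k(1+\lambda)^{|Y|}$.

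Fix $T\in\binom{X}{k-2}$. The link $\calX_T$ is a weighted complete graph on $X\setminus T$, with edge weights $w(u,v)\propto \lambda^{k}(1+\lambda)^{|Y\setminus N(T\cup\{u,v\})|}$. Write $U=Y\setminus N(T)$ for the free part of $Y$. Then
\[
w(u,v)\propto (1+\lambda)^{-|N_U(u)\cup N_U(v)|}=a_u a_v (1+\lambda)^{|N_U(u)\cap N_U(v)|},\qquad a_u=(1+\lambda)^{-|N_U(u)|}.
\]
So $\A_T=D_aJD_a+D_a E D_a$, where $E_{uv}=(1+\lambda)^{c_{uv}}-1$ and $c_{uv}$ counts common free neighbours, off the diagonal. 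The rank-one part $D_aJD_a$ (with its zero diagonal corrected) is a complete-graph walk, which has second eigenvalue about $-1/(n-k)$. The second eigenvalue of $\P_T$ is therefore bounded by roughly $\|D^{-1/2}D_aED_aD^{-1/2}\|$ plus the diagonal correction. Here the degrees $d_u\approx a_u\sum_v a_v$ are of order $a_u\,\bar a\, n$.

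The main obstacle is bounding the perturbation term. Since $c_{uv}\le \Delta$ and $c_{uv}\ne0$ only for pairs at distance $2$ in $G$, we have $E_{uv}\le \lambda c_{uv}(1+\lambda)^{\Delta}$, and $E$ is supported on the distance-2 graph of $X$, which has degree at most $\Delta^2$. This requires $(1+\lambda)^\Delta$ to stay controlled. For $\lambda\lesssim 1/\sqrt\Delta$ we have $(1+\lambda)^{\Delta}\le e^{\sqrt\Delta}$, which is not small, so a crude bound would fail. Instead we must use the concentration of the $a_u$: we need $a_u/\bar a$ bounded for all but few $u$.

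Here randomness and the constraint $k\ge\alpha|X|$ enter. With $k\ge \frac{\log\Delta}{2\Delta}|X|$, expansion of random regular bipartite graphs implies that $N(T)$ covers all but about a $(1-\alpha)^{\Delta}\approx \Delta^{-1/2}$ fraction of $Y$, uniformly over $T$, up to small-set expansion error. Hence typical $|N_U(u)|\approx \sqrt\Delta$, and $\lambda|N_U(u)|=O(1)$ when $\lambda\lesssim 1/\sqrt\Delta$. This makes the $a_u$ comparable to within constants, and gives $E_{uv}=O(\lambda c_{uv})$ with $\sum_v c_{uv}\le \Delta\,|N_U(u)|\lesssim \Delta^{3/2}$. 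The normalized perturbation then has norm $O\big(\lambda\Delta^{3/2}/n\big)$ against degrees of order $n$, and this is $o(1/k)$ because $k\le\beta n=\Theta(\lambda n)$.

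The step I expect to be delicate is the worst-case $T$. An adversarial $T$ can leave some $u$ with many free neighbours, for instance all $\Delta$ of them. For such $u$ I would use a Schur-test/row-sum bound with weights $a_u$, which shows that these heavy vertices have small degree $d_u$ and so contribute little after normalization. I would bound their number via the expansion properties that hold with high probability for random $\Delta$-regular bipartite graphs, and perhaps use the upper limit $\beta=\Theta(\lambda)$ to keep the diagonal correction $a_u^2/d_u$ below $1/(4k)$.
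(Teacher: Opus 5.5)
\textbf{There is a genuine gap.} It is in the perturbation estimate, which you treat as routine, not in the worst-case-$T$ step.

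You bound the full operator norm of the normalized perturbation $D^{-1/2}D_aED_aD^{-1/2}$ by a row sum and get $O(\lambda\Delta^{3/2}/n)$. This is not $o(1/k)$. With $k\le\beta n=\Theta(\lambda n)$, the product $k\cdot\lambda\Delta^{3/2}/n$ is $\Theta(\lambda^2\Delta^{3/2})$, which equals $\Theta(\sqrt{\Delta})$ at $\lambda\asymp\Delta^{-1/2}$. At the bottom of the range, $k\approx\frac{\log\Delta}{2\Delta}n$ and free degrees are $\approx\sqrt{\Delta}\log\Delta$, so you get $\Theta(\lambda\sqrt{\Delta}\log^2\Delta)$, again unbounded.

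This is not slack in the row-sum estimate. Every $y\in U$ has all $\Delta$ neighbours in $X\setminus T$, so the matrix $A_{H_T}=(c_{uv})$ has average row sum about $(\Delta-1)\Delta_T$, where $\Delta_T$ is the average free degree. Testing $E\approx\lambda A_{H_T}$ against the stationary direction shows the normalized perturbation really has an eigenvalue of order $\lambda\Delta\Delta_T/n$. Hence no argument that bounds the whole perturbation in norm can reach $\frac{1}{2k}$, a weighted Schur test for heavy vertices included. Separately, $E_{uv}=O(\lambda c_{uv})$ does not follow from concentration of the $a_u$, since $E_{uv}$ depends only on $c_{uv}$. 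What you need is that $c_{uv}$ is bounded.

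\textbf{The missing idea.} The large part of $E$ lies along the stationary direction. It must be merged with $D_aJD_a$ into a single rank-one term before taking norms. What remains is then controlled by the spectral gap of $G$, which your plan never uses. This is the paper's route:
\begin{itemize}
\item With high probability $c_{uv}\le 2$, and each $u$ has at most one partner $v$ with $c_{uv}=2$ (\Cref{lem:neighborconc}, \Cref{lem:neighborbound}). So $E=\lambda A'_{H_T}+\lambda^2B$, where $A'_{H_T}$ is $A_{H_T}$ with its nonnegative diagonal removed and $\|B\|\le 1$. Hence $E\preceq\lambda A_{H_T}+\lambda^2\Id$.
\item Let $P$ be the coordinate projection onto $X\setminus T$. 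Then $A_{H_T}=P\A_G\Id_U\A_GP\preceq P\A_G^2P$.
\item By interlacing, $\lambda_{\max}\bigl(P\A_G^2P-\frac{\Delta^2}{|X|}P\mathbf{1}\mathbf{1}^\top P\bigr)\le\lambda_2(\A_G)^2\le 4(\Delta-1)+o(1)$, using \Cref{thm:friedman2}.
\item So after subtracting one rank-one term, what is left of $E$ is $\preceq(4\lambda\Delta+\lambda^2)D_a^2$. This beats your row sum by a factor of about $\Delta_T/4$.
\item Normalize by the stationary measure $\pi(u)\propto a_uZ^{(u)}$, where $Z^{(u)}=\sum_{v\neq u}a_v(1+\lambda)^{c_{uv}}$. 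The diagonal correction $-D_a^2$ is negative and absorbs the $\lambda^2$. This gives $\lambda_2(\P_T)\le\max_u\frac{4\lambda\Delta\,a_u}{Z^{(u)}}\le\frac{4\lambda\Delta}{\sum_v a_v-1}$.
\item By Jensen, $\sum_v a_v\ge(|X|-|T|)(1+\lambda)^{-\Delta_T}$. So only the average free degree matters, and heavy vertices are harmless automatically.
\item It then suffices that $(1+\lambda)^{\Delta_T}\le\Delta^{a-1/2}$ when $|T|\le|X|/\Delta^a$ (from \Cref{lem:bipartiteconc}), and $(1+\lambda)^{\Delta_T}\le 2$ above that (from \Cref{lem:bipartitelargeconc}). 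Together with $k\le 4\lambda|X|$, this gives the $\frac{1}{2k}$ bound for $\lambda\le c/\sqrt{\Delta}$ with $c$ a small constant.
\end{itemize}
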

 
Importantly, the threshold $\alpha$ here matches that of \Cref{thm:bipartite} exactly. Finally, by applying the same techniques to the natural down-up walk for the independent set slice (see below) on a random $\Delta$-regular graph, we additionally get the following result.

\begin{theorem}
    \label{thm:regular}
    Let $G = (V, E)$ be a random $\Delta$-regular graph. Then with high probability over the randomness of $G$, the independent set slice $(\calX, \mu^{(k)})$ for $k \leq \alpha \abs{V}$ with $\alpha = \frac{\log \Delta}{(2+o_\Delta(1))\Delta}$ is a connected $\frac{1}{2k}$-top-link spectral expander, and as a result, the down-up walk mixes in polynomial-time.
\end{theorem}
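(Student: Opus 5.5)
Our plan is to verify the hypothesis of \Cref{thm:oppenheims} directly: for every face $\tau$ of codimension $2$ we compute the link's $1$-skeleton and bound its second eigenvalue by $\frac{1}{2k}$. Fix $\tau \in \calX(k-2)$, an independent set of size $k-2$ that extends to an independent $k$-set. The link $\calX_\tau$ has as vertices the set $U_\tau = V \setminus (\tau \cup N(\tau))$ of vertices still available, and $\{u,v\}$ is an edge exactly when $\{u,v\} \subseteq U_\tau$ and $uv \notin E$. Because $\mu^{(k)}$ is uniform, every facet containing $\tau$ carries the same weight. Hence the $1$-skeleton of the link is the \emph{complement} of the induced graph $H_\tau = G[U_\tau]$, with uniform edge weights. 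Write $m = |U_\tau|$ and let $J$ be the all-ones matrix. The adjacency matrix is $J - I - \A_{H_\tau}$. If $H_\tau$ were regular, the nontrivial eigenvalues of the walk would be $\frac{-1-\theta}{m-1-d}$ for eigenvalues $\theta$ of $\A_{H_\tau}$. Since $\theta \ge -\lambda_{\max}(H_\tau)\ge -\Delta$, this gives $\lambda_2(\P_\tau) \lesssim \frac{\Delta}{m-\Delta}$. In general $H_\tau$ is not regular, so we handle degrees by a perturbation argument. The degree of $u$ in the link is $m-1-\deg_{H_\tau}(u) \in [m-1-\Delta, m-1]$. We then compare $\P_\tau$ to the walk on the complete graph $K_m$, whose nontrivial eigenvalue is $-\frac{1}{m-1}$, using the bound $\lambda_2(D^{-1/2}(J-I-A)D^{-1/2})\le \frac{\|A\|+O(\Delta)}{m-1-\Delta}$ on the space orthogonal to the top eigenvector. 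This yields $\lambda_2(\P_\tau)\le \frac{C\Delta}{m}$.

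It therefore suffices that $m = |U_\tau| \ge 2Ck\Delta$ for every $\tau$ of size $k-2$. Trivially $|N(\tau)|\le (k-2)\Delta$, so $m \ge n - k(\Delta+1)$, and this is at least $2Ck\Delta$ only when $k \lesssim n/\Delta$. That is enough when $\alpha \approx \frac{\log\Delta}{2\Delta}$ only if the constant works out, which it does not: we lose a factor of $\log\Delta$. The main obstacle is thus to get an eigenvalue bound for the complement of $H_\tau$ that is sharper than $\Delta/m$. Here the randomness of $G$ must be used. With high probability, a random $\Delta$-regular graph has all nontrivial adjacency eigenvalues bounded by $2\sqrt{\Delta-1}+o(1)$ (Friedman). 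Moreover, every induced subgraph of $G$ has second eigenvalue and negative spectrum bounded by $O(\sqrt{\Delta})$, by interlacing applied to the normalized-adjacency/expander mixing form. With this input, the perturbation above becomes $\lambda_2(\P_\tau) \le \frac{O(\sqrt{\Delta}) + (\text{degree irregularity})}{m}$. We control the irregularity by bounding $\max_u \deg_{H_\tau}(u)-\bar d$ through the expander mixing lemma applied to the sets $U_\tau$ and $N(\tau)$.

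Concretely, the steps in order are as follows. (i) By expander mixing, since $\tau$ is independent, $|N(\tau)| \ge |\tau|\Delta(1-o(1))$ can fail only slightly, and $m \approx n\, e^{-\Theta(k\Delta/n)}$. At the threshold $k \le \frac{\log\Delta}{(2+o(1))\Delta}n$ this gives $m \gtrsim n/\sqrt{\Delta}^{\,1+o(1)}$. This is exactly where the threshold comes from: the random-graph typical density of uncovered vertices is $(1-\alpha)^{\Delta}\approx e^{-\alpha\Delta}=\Delta^{-1/2}$. (ii) Show $\lambda_2(\P_\tau) \le \frac{O(\sqrt{\Delta})}{m}$ via the spectral bound on induced subgraphs. (iii) Combine: we need $\frac{O(\sqrt\Delta)}{m}\le \frac{1}{2k}$, i.e.\ $m \ge O(\sqrt{\Delta})k$. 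This holds since $m \gtrsim n\Delta^{-1/2-o(1)}$ and $k\lesssim n\log\Delta/\Delta$, with the $o_\Delta(1)$ in $\alpha$ absorbing the logarithmic and constant losses. (iv) Establish connectivity of every link. For codimension $\ge 2$ the link's $1$-skeleton is the complement of a graph of max degree $\Delta$ on $m \gg \Delta$ vertices, hence connected. Finally, invoke \Cref{thm:oppenheims} with $d=k$ and $\delta=\frac12$ to get polynomial mixing; note $\log\frac{1}{\mu(\tau)}\le k\log n$. We expect step (i), the uniform lower bound on $|U_\tau|$ over \emph{all} (not typical) independent sets $\tau$, to be the delicate part. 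We would attack it with a first-moment/union bound over sets $\tau$ in the configuration model: we show no set of size $\alpha n$ has neighborhood covering more than $n(1-\Delta^{-1/2-o(1)})$ vertices, mirroring the classical computation of the independence ratio $\frac{2\log\Delta}{\Delta}$ of random regular graphs.
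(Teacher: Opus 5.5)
Your architecture matches the paper's: the link of a codimension-$2$ face is the complement of $G\sbra{U_\tau}$ with $U_\tau=V\setminus(\tau\cup N\sbra{\tau})$; Friedman plus interlacing controls the numerator; a uniform anti-expansion bound on $m=\abs{U_\tau}$ controls the denominator; the slack $\gamma$ in $\alpha$ absorbs logarithms; connectivity and trickle-down go as you say.

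\textbf{Step (ii) fails as set up.} You expect a bound $\frac{O(\sqrt{\Delta})+(\text{irregularity})}{m}$ and plan to bound $\max_u\deg_{H_\tau}(u)-\bar d$ by expander mixing. Since $\tau$ ranges over \emph{all} independent $(k-2)$-sets, you may take $\tau$ disjoint from the radius-$2$ ball around some $u$, so $\deg_{H_\tau}(u)=\Delta$. Meanwhile, for $\abs{\tau}$ near $\alpha n$, expander mixing applied to $\tau$ and $U_\tau$ gives $m=O(n/\log\Delta)$, hence $\bar d\le\Delta m/n+O(\sqrt{\Delta})=o(\Delta)$. So the irregularity is $\Theta(\Delta)$ and you are back at $\Delta/m$; expander mixing says nothing about a single vertex in any case.

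The missing observation is that irregularity never needs to be controlled. Let $\Gamma$ be the link degree matrix and write $\A_{G_\tau}=\mathbf{1}\mathbf{1}^\top-\Id-\A_{H_\tau}$. Apply Courant--Fischer with test direction $\Gamma^{-1/2}\mathbf{1}$. For $x\perp\Gamma^{-1/2}\mathbf{1}$ and $y=\Gamma^{-1/2}x$ one has $\mathbf{1}^\top y=0$, so the all-ones term vanishes and
\[
\frac{x^\top\Gamma^{-1/2}\A_{G_\tau}\Gamma^{-1/2}x}{x^\top x}=\frac{-y^\top(\Id+\A_{H_\tau})y}{y^\top\Gamma y}\le\frac{-1-\lambdamin(\A_G)}{m-1-\Delta}.
\]
This uses $\lambdamin(\A_{H_\tau})\ge\lambdamin(\A_G)$ (interlacing) and $\min_u\Gamma(u,u)\ge m-1-\Delta$. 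Degrees enter only through the minimum degree. This is \Cref{lem:regular}, which the paper phrases as interlacing inside $\overline{\A_G}$ followed by $\norm{\Gamma^{-1}}{2}$.

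\textbf{Step (i) also needs repair.} Spectral arguments such as expander mixing give only expansion, i.e.\ upper bounds on $m$, never the lower bound you need. Also, $\abs{N\sbra{\tau}}\ge\abs{\tau}\Delta(1-o(1))$ is impossible here, since $\abs{\tau}\Delta\approx\frac{\log\Delta}{2}n>n$.

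Your fallback, a union bound in the pairing model, is the paper's route (\Cref{lem:regularconc}). But a bare first-moment count, as in the independence-ratio computation, does not suffice. For each fixed $\tau$ you need an exponential lower-tail bound on $\abs{U_\tau}$ that beats $\binom{n}{\abs{\tau}}\le\exp\pbra{\frac{\log^2\Delta}{(2+\gamma)\Delta}n}$. The paper gets this from the edge-exposure martingale over the cloud $C(\tau)$, whose increments are at most $2$ by a switching argument, together with Azuma's inequality. The tail $\exp(-t^2/(8\Delta\abs{\tau}))$ forces a deviation $t\approx\frac{\log^{3/2}\Delta}{\sqrt{\Delta}}n$. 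This is absorbed by choosing $\gamma=\Theta\pbra{\frac{\log\log\Delta+\log\ell}{\log\Delta}}$ so that $\E\abs{U_\tau}\ge(\ell+2)\frac{\log^{3/2}\Delta}{\sqrt{\Delta}}n$.

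Relatedly, the $o(1)$ in your steps (i) and (iii) has the wrong sign. You need $m\ge C\sqrt{\Delta}\,k\approx\frac{C\log\Delta}{2\sqrt{\Delta}}n$, so $m$ must exceed $n/\sqrt{\Delta}$ by at least a $\log\Delta$ factor. That is exactly what the slack buys, since $\E m\approx\Delta^{-1/(2+\gamma)}n$. A bound $m\gtrsim n\Delta^{-1/2-o(1)}$ is too weak for step (iii).
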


\subsection{Related work and discussion}

\paragraph{Independent Set Slice.} To best understand our results, it is useful to introduce the notion of the independent set slice. Given a graph $G = (V,E)$, let $\calI_k(G)$ be the set of independent sets $I \in \calI(G)$ with $\abs{I} = k$ and let $\mu^{(k)}$ be the uniform distribution over $\calI_k(G)$. We can also define the $k$-dimensional complex $(\calX, \mu^{(k)})$ analogously. The history of the approximability of $|\calI_1(G)|,|\calI_2(G)|,\dots$ follows a similar course to the hardcore model, with the parameter of interest being the \emph{occupancy fraction} rather than the fugacity. The occupancy fraction of $G$ is the quantity $\alpha \coloneqq \E_{I \sim \mu}\sbra{\abs{I}}$ where $\mu$ is the hardcore model for $G$. Just as the infinite $\Delta$-regular tree plays an important role in defining the uniqueness threshold for fugacity, it also helps define a similar occupancy threshold as follows: while the Gibbs measure on the tree is not unique at fugacity $\lambda > \lambda_c(\Delta)$, there is a natural translation invariant Gibbs measure capturing most interesting behavior \cite{BhatnagarST14}. By defining $\alpha = \alpha(\lambda, \Delta)$ as the density of this measure we have the relation
\begin{align*}
    \lambda = \frac{\alpha}{1-2\alpha} \pbra{\frac{1-\alpha}{1-2\alpha}}^{\Delta-1}\mper
\end{align*}
We skip the exact derivation (see \cite{BhatnagarST14}) and instead highlight the values important to computational complexity. Similar to the hardcore model, $\alpha_c(\Delta)$, which is the value of $\alpha$ at the critical fugacity $\lambda_c(\Delta)$, marks a computational phase transition where for $k < \alpha_c(\Delta)$ there is an $\FPRAS$ for $|\calI_k(G)|$ and when $k > \alpha_c(\Delta)$ such a result would imply $\NP = \RP$ \cite{DaviesP21}. It was also shown in a sequence of recent works \cite{AlevL20, JainMPV23} that the down-up walk on the simplicial complex $(\calX, \mu^{(k)})$ mixes in near-linear time up to this threshold.

Up to this point, the approximability of the hardcore model and independent set slice have coincided cleanly. Forgoing the worst-case and instead assuming the graph $G$ of interest is a random regular graph, \cite{ChenCCYZ25} showed that the down-up walk on the standard independent set complex mixes in polynomial-time up to $\lambda \lesssim \frac{1}{\sqrt{\Delta}}$. As they point out, the corresponding occupancy fraction to this fugacity is $\alpha \approx \frac{\log \Delta}{2 \Delta}$. Our \Cref{thm:regular} then parallels their result, showing that the down-up walk mixes in polynomial-time up to the analogous threshold, significantly past the critical occupancy threshold. Moreover, our result goes through with a single application of standard trickle-down, whereas \cite{ChenCCYZ25} requires going through the field dynamics (a continuous analogue of the Glauber dynamics) and uses some heavy machinery in the form of a generalized trickle-down theorem for localization schemes \cite{AnariKV24}. However, we do concede that their result is \emph{derandomized}, in the sense that it works for any near-Ramanujan graph, whereas our result depends on certain combinatorial structure within random regular graphs.

Although our main result and proof techniques have some resemblance to the recent work of \cite{ChenCCYZ25}, we emphasize that the Glauber dynamics (analyzed in their work) does not mix on random regular bipartite graphs by \Cref{thm:failtomix}. A priori, it is unclear whether their techniques extend to our regime. Instead, to prove our main results, we redo their work analyzing the hardcore model on random regular graphs (\Cref{thm:regular}) using the independent set slice complex, and we manage to extend this approach to our defined slices of the bipartite case as well, proving our main theorem.

\subsection{Proof overview}

The proof of our main result, \Cref{thm:main}, follows primarily from two ideas: (i) we prove fast mixing of the down-up walks for the two-sided and one-sided independent set slices in complementary regimes for random regular bipartite graphs by using a combination of properties of random graphs and trickle-down theorems on high-dimensional expanders (\Cref{thm:oppenheims}) and (ii) we show a natural way to approximate the partition function of the hardcore model given approximate samplers for the aforementioned distributions.

    \paragraph{Part (i): Overview of \Cref{thm:bipartite}.} To prove fast mixing from \Cref{thm:oppenheims}, it is enough to establish $\lambda_2(\P_\tau) < \frac{1}{\abs{\tau}}$ for all links $\tau$ of codimension $2$. We start by showing how this process works for the two-sided independent set slice. Recall that after fixing $k_X \leq \abs{X}, k_Y \leq \abs{Y}$, the two-sided slice is the $(k_X+k_Y)$-dimensional complex $(\calX, \mu^{(k_X, k_Y)})$. 
    
    Since each facet has $k_X$ elements from $X$ and $k_Y$ from $Y$, there are two kinds of links of codimension 2: (1) those missing two elements from the same side and (2) those missing one element from each side. Showing the links of the former are expanders is straightforward as they correspond to a complete graph.

    So, we are left to bound the second eigenvalue for links of the latter type, which we end up doing by establishing
    \begin{equation}\label{eq:overviewgoal1}
        \lambda_2(\P_\tau) \lesssim \frac{\lambda_2(\A_G)}{\abs{(X \cup Y) \setminus (\tau \cup N\sbra{\tau})}}\mper
    \end{equation}
    
    To see why \eqref{eq:overviewgoal1} is sufficient, we show that both of these quantities are well-controlled in random regular graphs. First, since random regular graphs are near-Ramanujan we have $\lambda_2(\A_G) \lesssim 2 \sqrt{\Delta-1}$. The question then is what $\abs{(X \cup Y) \setminus (\tau \cup N\sbra{\tau})}$ looks like as a function of $\abs{\tau}$. Our primary observation is that when $\abs{\tau_X} \leq \frac{\log \Delta}{(2+o_\Delta(1))\Delta}\abs{X}$, with high probability we have $\abs{Y \setminus N\sbra{\tau_X}} \gtrsim \frac{\log \Delta}{\sqrt\Delta} \abs{Y}$ for all such $\tau$ (see \Cref{lem:bipartiteconc}). The proof of this fact uses a martingale concentration argument for the pairing model. Plugging these two values in above then gives the bound of $\frac{1}{\abs{\tau}}$ as desired.
    
    The key observation towards \eqref{eq:overviewgoal1} is then that for links of the latter type, the adjacency matrix satisfies
    \begin{align*}
        \A_\tau = \overline{\A_{G\sbra{(X \cup Y) \setminus (\tau \cup N\sbra{\tau})}}}\mcom
    \end{align*}
    where the complement here is the bipartite complement, given by flipping only the edges crossing $X$ and $Y$. This matrix is simply what is left over after removing $\tau$ and its neighbors from $G$, with the bipartite complement being taken because independent sets in these links require non-edges across partitions. To get from here to the bound in \eqref{eq:overviewgoal1}, we simply observe the eigenvalues of $\A_\tau$ interlace $\overline{\A_G}$ and argue $\lambda_2(\overline{\A_G}) \leq \lambda_2(\A_G)$.

    While this result works for $\abs{\tau} \leq \frac{\log \Delta}{(2+o_\Delta(1)) \Delta}$, which we recall is the occupancy fraction corresponding to $\lambda = O\pbra{\frac{1}{\sqrt{\Delta}}}$, it is not clear how to push it beyond this threshold. Indeed, we believe the size of $\abs{(X \cup Y) \setminus (\tau \cup N\sbra{\tau})}$ begins to shrink quickly beyond this threshold, leading to links possibly becoming disconnected.

    \paragraph{Part (i): Overview of \Cref{thm:bipartite2}.} To get around this, we take a completely different route above the critical occupancy threshold; namely, we show that a very different simplicial complex, the one-sided slice $(\calX,\mu_\lambda^{(k)})$ is a top-link expander as long as $k\geq \frac{\log\Delta}{(2+o_\Delta(1))\Delta}|X|$ and $\lambda\lesssim \frac{1}{\sqrt{\Delta}}$. Links of codimension 2 of this complex are in one sense easier to analyze than in the previous complex, as their support is always a complete graph with exactly $|X|-k$ many vertices. The difficulty, however, is that this is now a weighted graph. It turns out that for a link $\tau$ (of codimension 2), the adjacency matrix satisfies
    \begin{align*}
        \A_\tau(u, v) = (1+\lambda)^{-\abs{N_\tau(u)}-\abs{N_\tau(v)}+\abs{N_\tau(u) \cap N_\tau(v)}}\mcom
    \end{align*}
    where $N_\tau(u)$ is the set of neighbors of $u$ in the graph $G \setminus N[\tau]$, where $N[\tau]$ is the set of neighbors of $\tau$.
    For a random regular graph, $|N_\tau(u)\cap N_\tau(v)|\in \{0,1\}$ with high probability; however, $|N_\tau(\cdot)|$ could still vary in the range $[0,\Delta]$. In our regime of $\lambda\lesssim\frac{1}{\sqrt{\Delta}}$ we mainly care whether $|N_\tau(u)|\ll \sqrt{\Delta}$, for which $(1+\lambda)^{\abs{N_\tau(u)}}$ is constant, and our main observation is that with high probability for all $\tau$, almost all vertices $u\in \calX_\tau(1)$ indeed satisfy $|N_\tau(u)| \lesssim \sqrt{\Delta}$. This simplifies the structure of $\A_\tau$ as we can show that (up to a normalization) $\lambda_{\max}(\A_\tau - \mathbf{1}\mathbf{1}^\top) \leq \lambda \cdot \lambda_2(\A_G^2)$ using an interlacing argument, which leads to a bound $\lambda_2(\P_\tau)<\frac{1}{|\tau|}$.

\paragraph{Part (ii): Overview of \Cref{thm:main}.} Assuming \Cref{thm:bipartite} and \Cref{thm:bipartite2}, we prove \Cref{thm:main}.
\begin{proof}[Proof of \Cref{thm:main}]
 We start by defining an approximation to the partition function. Fix $\alpha = \frac{\log \Delta}{(2+o_\Delta(1)) \Delta}$ and $\beta = \Theta(\lambda)$ as in \Cref{thm:bipartite} and \Cref{thm:bipartite2} and define
    \begin{align*}
        \hat{Z}_G(\lambda) = \sum_{\substack{I \in \calI(G)\\ \abs{I \cap X} \leq \alpha \abs{X}\\ \abs{I \cap Y} \leq \alpha \abs{Y}}} \lambda^{\abs{I}} + \sum_{\substack{I \in \calI(G)\\ \alpha \abs{X} < \abs{I \cap X} \leq  \beta\abs{X}}} \lambda^{\abs{I}} + \sum_{\substack{I \in \calI(G)\\ \alpha \abs{Y} < \abs{I \cap Y} \leq  \beta\abs{Y}}} \lambda^{\abs{I}}\mper
    \end{align*}

    Our goal is to show that with high probability over the randomness $G$ we have
    \begin{equation}\label{eq:approx}
        \abs{Z_G(\lambda)-\hat{Z}_G(\lambda)} \leq \exp(-(\abs{X} + \abs{Y})) \cdot Z_G(\lambda)\mper
    \end{equation}
    Given this, an $\FPRAS$ for $\hat{Z}_G(\lambda)$ immediately implies an $\FPRAS$ for $Z_G(\lambda)$. This is because if $\exp(-(\abs{X} + \abs{Y})) < \frac{\varepsilon}{2}$, we simply approximate $\hat{Z}_G(\lambda)$ to accuracy $1+\frac{\varepsilon}{2}$, yielding a $(1+\varepsilon)$-approximation to $Z_G(\lambda)$. If $\frac{\varepsilon}{2} < \exp(-(\abs{X}+\abs{Y}))$ we can brute force $Z_G(\lambda)$ directly.

    We now prove \eqref{eq:approx} holds. Observe that by partitioning the partition function based on the thresholds $\alpha, \beta$ and applying triangle inequality we have
    \begin{align*}
        \abs{Z_G(\lambda)-\hat{Z}_G(\lambda)} \leq \sum_{\substack{I \in \calI(G)\\ \abs{I \cap X} > \beta \abs{X}\\ \abs{I \cap Y} > \beta \abs{Y}}} \lambda^{\abs{I}} + \sum_{\substack{I \in \calI(G)\\ \alpha \abs{X} < \abs{I \cap X} \leq  \beta\abs{X}\\ \alpha \abs{Y} < \abs{I \cap Y} \leq  \beta\abs{Y} }} \lambda^{\abs{I}}\mcom
    \end{align*}

    since $\hat{Z}_G(\lambda)$ drops any independent set exceeding occupancy $\beta$ within $X$ or $Y$ and double counts any independent set with occupancy in $[\alpha, \beta]$ in both. The bound in \eqref{eq:approx} then follows immediately from the following lemmas, which show that the weight of the two terms above is exponential small in $Z_G(\lambda)$.

    \begin{lemma}
        \label{lem:indsetbound}
        Let $G = (V, E)$ be a graph and $\mu$ the corresponding hardcore model at fugacity $\lambda > 0$. Then,
        \begin{align*}
            \Pr_{I \sim \mu}\sbra{\abs{I} \geq 4\lambda \abs{V}} \leq \exp(-\abs{V})\mper
        \end{align*}
    \end{lemma}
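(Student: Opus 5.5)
The plan is a first-moment (union) bound over large independent sets, normalized by the trivial lower bound $Z_G(\lambda)\geq 1$ coming from the empty set. Write $n=\abs{V}$. Since every $I\in\calI_k(G)$ is in particular a $k$-subset of $V$, we have $\abs{\calI_k(G)}\leq\binom{n}{k}\leq\pbra{\frac{en}{k}}^k$. Hence
\begin{align*}
\Pr_{I\sim\mu}\sbra{\abs{I}\geq 4\lambda n} = \frac{1}{Z_G(\lambda)}\sum_{k\geq 4\lambda n}\abs{\calI_k(G)}\,\lambda^k \leq \sum_{k\geq 4\lambda n}\pbra{\frac{e\lambda n}{k}}^k \leq \sum_{k\geq 4\lambda n}\pbra{\frac{e}{4}}^k \mcom
\end{align*}
because $k\geq 4\lambda n$ gives $\frac{e\lambda n}{k}\leq \frac{e}{4}<1$. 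The last sum is a geometric series, so it is at most $\frac{1}{1-e/4}\,(e/4)^{4\lambda n}=O\pbra{\exp(-c\lambda n)}$ with $c=4\log(4/e)\approx 1.54$.

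The main obstacle is getting the stated rate $\exp(-n)$ rather than $\exp(-\Omega(\lambda n))$. With the threshold $4\lambda n$ this cannot hold for arbitrary $\lambda$. For example, if $\lambda\ll 1/n$ then $4\lambda n<1$, and the event becomes $\abs{I}\geq 1$, whose probability is about $\lambda n$, not $e^{-n}$. So in writing the proof I would make the dependence on $\lambda$ explicit in one of two ways.

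\begin{itemize}
\item[(a)] Keep the bound $\exp(-\Omega(\lambda n))$ obtained above.
\item[(b)] Raise the threshold to $C\lambda n$ with $C=C(\lambda)$ large enough that $C\lambda\log(C/e)\geq 1+o(1)$. The same computation then gives $\exp(-\abs{V})$ outright.
\end{itemize}

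For the application in the proof of \Cref{thm:main} this costs nothing. There $\lambda\asymp 1/\sqrt{\Delta}$ is a constant independent of $\abs{X}+\abs{Y}$, so $\exp(-\Omega(\lambda n))$ is still exponentially small in $n$. Such a bound suffices for \eqref{eq:approx} once the constant in the exponent there is adjusted, or once the constant in $\beta=\Theta(\lambda)$ is chosen large enough.

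To control the first error term in the proof of \Cref{thm:main}, apply the bound to the whole graph $V=X\cup Y$. An independent set with $\abs{I\cap X}>\beta\abs{X}$ and $\abs{I\cap Y}>\beta\abs{Y}$ has $\abs{I}>\beta(\abs{X}+\abs{Y})$, so it falls under the lemma once $\beta$ is at least $4\lambda$ (or at least $C\lambda$ in version (b)).
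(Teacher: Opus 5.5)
Your argument is correct, and so is your diagnosis of the statement. With threshold $4\lambda|V|$ the rate $\exp(-|V|)$ fails not only when $\lambda\ll 1/|V|$, but for every fixed $\lambda$ below a small absolute constant. On the edgeless graph $|I|\sim\mathrm{Bin}(|V|,\lambda/(1+\lambda))$, and its upper tail at $4\lambda|V|$ is $\exp(-\Theta(\lambda|V|))$. This covers the paper's regime $\lambda\asymp 1/\sqrt{\Delta}$. The paper's own proof also only yields $(e/4)^{4\lambda|V|}$, and its restated corollary states exactly that.

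Your key input differs from the paper's. The paper first proves the single-site bound $\Pr_{I\sim\mu}[v\in I]\le\lambda/(1+\lambda)$. It chains this by sequential conditioning (pinning $v\in I$ leaves a hardcore model on a smaller graph) to get $\Pr[S\subseteq I]\le(\lambda/(1+\lambda))^{|S|}$. It then union-bounds over the $\binom{|V|}{m}$ sets of size exactly $m=4\lambda|V|$, since every $I$ with $|I|\ge m$ contains one. You use only the normalization $Z_G(\lambda)\ge 1$ and count $k$-subsets for all $k\ge m$. Your version is more elementary. The paper's avoids the sum over $k$, so it has no $1/(1-e/4)$ factor and an extra $(1+\lambda)^{-m}$ to spare. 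Because it only uses a marginal bound valid under any pinning, it also applies verbatim to conditional hardcore distributions.

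One caution about your remarks on the application. You offer two ways to absorb the weaker rate into \eqref{eq:approx}, and only relaxing the exponent works. The brute-force fallback for $\varepsilon<2e^{-c\lambda n}$ then costs $2^{n}\le(2/\varepsilon)^{O(1/\lambda)}$, which is polynomial for fixed $\Delta$. Recovering $\exp(-n)$ by enlarging $\beta=C\lambda$ needs $C\lambda\log(C/e)\ge 1$. When $\lambda\asymp1/\sqrt{\Delta}$ this means $\beta\gtrsim 1/\log\Delta$, which is no longer $\Theta(\lambda)$. It also lies well outside the range $k\lesssim|X|/\sqrt{\Delta}$ handled in the proof of \Cref{thm:bipartite2}.
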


    \begin{lemma}
        \label{prop:bipartiteindsetbound}
        Let $G = (X, Y, E)$ be a random $\Delta$-regular bipartite graph and $\mu$ the corresponding hardcore model at fugacity $\lambda \lesssim \frac{1}{\sqrt{\Delta}}$ and let $\alpha = \frac{\log \Delta}{(2+o_\Delta(1))\Delta}$. Then with high probability over the randomness in $G$,
        \begin{align*}
            \Pr_{I \sim \mu}\sbra{\abs{I \cap X} > \alpha\abs{X}, \abs{I \cap Y} > \alpha \abs{Y} } \leq \exp(-(\abs{X} + \abs{Y}))\mper
        \end{align*}
    \end{lemma}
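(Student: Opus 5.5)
The plan is a first-moment computation over the random graph, with a deterministic lower bound on $Z_G(\lambda)$ on the other side; \Cref{lem:indsetbound} disposes of the large-occupancy tail. Write $n=\abs{X}=\abs{Y}$. Split the event according to $a=\abs{I\cap X}$ and $b=\abs{I\cap Y}$. By \Cref{lem:indsetbound} applied to $G$, the part of the event with $a+b\ge 8\lambda n$ already has probability at most $\exp(-2n)=\exp(-(\abs{X}+\abs{Y}))$. It therefore suffices to bound the weight of independent sets with $\alpha n<a,b$ and $a+b<8\lambda n$, adding the two contributions at the end. Since $Z_G(\lambda)\ge\sum_{S\subseteq X}\lambda^{\abs{S}}=(1+\lambda)^n$ deterministically, the probability of this remaining event is at most
\begin{align*}
  (1+\lambda)^{-n}\sum_{a,b}\lambda^{a+b}\,N_{a,b}(G),
\end{align*}
where $N_{a,b}(G)$ counts pairs $S\in\binom{X}{a}$, $T\in\binom{Y}{b}$ with no edge between $S$ and $T$.

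The next step is to bound $\E[N_{a,b}]$ in the pairing model. In that model, the probability that the $\Delta a$ half-edges at $S$ all avoid the $\Delta b$ half-edges at $T$ is at most $(1-b/n)^{\Delta a}\le e^{-\Delta ab/n}$. Hence
\begin{align*}
  \E\Big[\sum_{a,b}\lambda^{a+b}N_{a,b}\Big]\le\sum_{a,b}\binom{n}{a}\binom{n}{b}\lambda^{a+b}e^{-\Delta ab/n}.
\end{align*}
For $a,b>\alpha n$, symmetrize using $ab/n\ge\tfrac{\alpha}{2}(a+b)$. The sum then factorizes into at most $\bigl(\sum_a\binom{n}{a}(\lambda e^{-\Delta\alpha/2})^a\bigr)^2$. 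Because $\Delta\alpha=\tfrac{\log\Delta}{2+o(1)}$, we have $e^{-\Delta\alpha/2}=\Delta^{-1/4+o(1)}$, so this factor is $\bigl(1+\lambda\Delta^{-1/4+o(1)}\bigr)^{2n}$.

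To gain more than this crude bound, I would instead use $\binom{n}{a}\lambda^a\le(e\lambda n/a)^a$ and keep the full interaction $e^{-\Delta ab/n}\le e^{-\Delta\alpha a}$. Each $X$-term then carries the base $e\lambda\Delta^{-1/2+o(1)}/x$ with $x=a/n>\alpha$. For $\lambda\le c/\sqrt\Delta$ this base is $O(c/\log\Delta)$, so for $c$ small and $\Delta$ large it is at most $e^{-K}$ with $K=K(c,\Delta)\to\infty$, and symmetrically in $Y$. Summing over the $O(n^2)$ choices of $(a,b)$ and applying Markov's inequality over $G$ with a $\poly(n)$ loss gives, with high probability, the bound $\exp\bigl(-K\alpha n\cdot 2+O(\log n)\bigr)\cdot(1+\lambda)^{-n}$ on the remaining event.

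The main obstacle is the literal constant in the exponent: I must show $2K\alpha n\ge 2n$, i.e.\ $K\alpha\ge1$. Since $\alpha\approx\log\Delta/(2\Delta)$, this forces $K\gtrsim\Delta/\log\Delta$, which means choosing the hidden constant in $\lambda\lesssim1/\sqrt\Delta$, and the $o_\Delta(1)$ in $\alpha$, to leave an additional $\Delta^{-\Omega(1)}$ slack in the base. I would first try to obtain this by not symmetrizing at all. On the event $a+b<8\lambda n$, with $a\ge b$ say, the interaction $e^{-\Delta ab/n}$ is at least as strong as $e^{-\Delta\alpha a}$, and $a$ alone can be as large as $8\lambda n$. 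I would then optimize the product of the two one-sided large-deviation terms jointly over $(x,y)$, rather than bounding each term separately. If the exponent still falls short of $2n$, I would restrict the statement to the part of the fugacity range where this optimization closes, and state explicitly which constant is being used.
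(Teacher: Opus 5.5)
\textbf{Verdict: genuine gap.} The inequality you are aiming at is false in the regime of interest, so the obstacle you flag at the end is not a matter of tuning constants.

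\textbf{The target cannot hold.} The bound $\exp(-(\abs{X}+\abs{Y}))=e^{-2n}$ fails for $\lambda=c/\sqrt{\Delta}$ once $\Delta$ is large, so no refinement of the optimization over $(x,y)$ can produce it. By the anti-expansion half of \Cref{lem:bipartiteconc}, with high probability any $S\subseteq X$ with $\abs{S}=\lfloor\alpha n\rfloor+1$ has $\abs{Y\setminus N[S]}\ge\frac{\ell\log\Delta}{\sqrt{\Delta}}n-\Delta\gg\alpha n$. So the event contains an independent set $I$ with $\abs{I}=2\lfloor\alpha n\rfloor+2$. Since $Z_G(\lambda)\le(1+\lambda)^{2n}$, this one set already has $\mu(I)\ge\lambda^{\abs{I}}(1+\lambda)^{-2n}\ge\exp\bigl(-O\bigl(\tfrac{\log^2\Delta}{\Delta}+\tfrac{1}{\sqrt{\Delta}}\bigr)n-O(\log\Delta)\bigr)=e^{-o_\Delta(1)n}$.

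More generally, $e^{-2n}$ is only possible when $\alpha\log(1/\lambda)\gtrsim 1$, i.e.\ $\lambda\le\exp(-\Theta(\Delta/\log\Delta))$. Your fallback of restricting the fugacity range would therefore discard the entire regime $\lambda\asymp 1/\sqrt{\Delta}$. The right repair is to weaken the exponent.

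That weaker bound is also all the paper's argument actually establishes. It conditions on $I\cap X$ and uses the expansion half of \Cref{lem:bipartiteconc} to get $\abs{Y\setminus N[I\cap X]}\le\frac{(\ell+1)\log\Delta}{\sqrt{\Delta}}n$. It then observes that $I\cap Y$ is a product measure with marginals $\frac{\lambda}{1+\lambda}$ on this free set, and union-bounds over its $\alpha n$-subsets. This gives $\bigl(e(\ell+1)(2+\gamma)\sqrt{\Delta}\,\lambda\bigr)^{\alpha n}\le\exp(-\abs{Y}/\Delta)$. An $\exp(-\Omega_\Delta(n))$ bound of this kind is the true scale, and it suffices for the reduction in \Cref{thm:main}: the brute-force branch then costs $\varepsilon^{-O(\Delta)}$.

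\textbf{Two steps of your computation also fail on their own.}

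First, you cannot use \Cref{lem:indsetbound} at face value. Its proof gives only $(e/4)^{4\lambda\abs{V}}$, and on the edgeless graph $\Pr[\abs{I}\ge 4\lambda\abs{V}]$ really is $e^{-\Theta(\lambda\abs{V})}$. So discarding the part $a+b\ge 8\lambda n$ does not cost only $e^{-2n}$. The split is unnecessary anyway: once the effective weight $\lambda e^{-\Delta b/n}$ is below $\alpha/2$, the sum over $a\ge\alpha n$ is dominated by its first term.

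Second, ``symmetrically in $Y$'' spends the single factor $e^{-\Delta ab/n}$ twice. The bound $e^{-\Delta ab/n}\le e^{-\Delta\alpha(a+b)}$ requires $ab\ge\alpha n(a+b)$, which fails near $a=b=\alpha n$. That corner is exactly where the sum concentrates. There, for $\lambda=c/\sqrt{\Delta}$, the term $\binom{n}{a}\binom{n}{b}\lambda^{a+b}e^{-\Delta ab/n}$ is $\exp\bigl((\tfrac14+o(1))\tfrac{\log^2\Delta}{\Delta}n\bigr)$, not $e^{-2K\alpha n}$.

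\textbf{What can be salvaged.} If you spend the interaction only once, this corner term is still beaten by $(1+\lambda)^n\le Z_G(\lambda)$. A corrected version of your first-moment route should then give an $\exp(-\Omega(\lambda n))$-type bound. That would be a genuinely different proof from the paper's: it replaces the uniform expansion lemma and its martingale concentration with a direct first moment over non-adjacent pairs $(S,T)$, plus Markov and contiguity. As proposed, however, it aims at an inequality that does not hold.
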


    We delay the proofs of the above to \Cref{sec:bipartiteprops} and finish by arguing that there is an $\FPRAS$ for $\hat{Z}_G(\lambda)$. We suggestively wrote $\hat{Z}_G(\lambda)$ in three parts and will argue each has an $\FPRAS$ separately, from which the result follows. Note that we can rewrite the first term as
    \begin{align*}
        \sum_{\substack{I \in \calI(G)\\ \abs{I \cap X} \leq \alpha \abs{X}\\ \abs{I \cap Y} \leq \alpha \abs{Y}}} \lambda^{\abs{I}} = \sum_{k_X = 1}^{\alpha\abs{X}} \sum_{k_Y = 1}^{\alpha \abs{Y}} \abs{\calI_{k_X,k_Y}(G)} \cdot \lambda^{k_X + k_Y} \mper
    \end{align*}
    Since there are only polynomial many terms in the latter sum, it suffices to have an $\FPRAS$ for each $\abs{\calI_{k_X, k_Y}(G)}$, the two-sided slice of $G$, within the parameter regime of \Cref{thm:bipartite}. Similarly, we observe the second and third terms of $\hat{Z}_G(\lambda)$ are just the partition function for the one-sided slice of $G$ from $X$ and $Y$ respectively, within the parameter regime of \Cref{thm:bipartite2}, so an $\FPRAS$ for each suffices. To finish then, we just need to convert the approximate samplers given by \Cref{thm:bipartite} and \Cref{thm:bipartite2} to approximate counters.

    Reducing counting to sampling here turns out to be a bit subtle. The standard reduction \cite{JerrumVV86} does not apply black-box in our setting, since the problem is not self-reducible under the random $\Delta$-regular assumption. Luckily, the simplicial complex view of \Cref{thm:bipartite} and \Cref{thm:bipartite2} provide a natural fix, as the conditional distributions arising in the standard reduction correspond to the distributions over links in the simplicial complex. By the local-to-global arguments in \Cref{thm:oppenheims}, we also get polynomial-time mixing of the down-up walk within each link. We can then estimate the probability of any heavy element using standard Monte Carlo methods, recursing on the conditional each time, which suffices for the standard reduction from $\FPAUS$ to $\FPRAS$.
\end{proof}

\subsection{Structure of the paper}

The rest of the paper is organized as follows. In \Cref{sec:bipartiteprops}, we prove a collection of properties of random regular bipartite graphs to be used throughout the paper. In \Cref{sec:bipartite1}, we prove our first main result, \Cref{thm:bipartite}, that the two-sided independent set slice is a top-link spectral expander up to occupancy $\alpha = \frac{\log \Delta}{(2+o_\Delta(1))\Delta}$. In \Cref{sec:bipartite2}, we prove our second main result, \Cref{thm:bipartite2}, that the one-sided slice of the hardcore model is a top-link spectral expander beyond occupancy $\alpha$. Finally, in \Cref{sec:regular}, we show \Cref{thm:regular}, that the standard independent set slice is a top-link spectral expander up to $\alpha$ and in \Cref{sec:slowmix} we show a slow mixing example for the one-sided slice.

\subsection{Acknowledgements}
Our research is supported by NSF grant CCF-2203541, a Simons Investigator Award 928589, and a Lazowska Endowed Professorship in Computer Science \& Engineering. This work was also supported by NSF CAREER award IIS2541127.

\section{Preliminaries}
\label{sec:preliminaries}

\subsection{Basic notation}

For a rectangular matrix $A \in \R^{m \times n}$, let $A^\top$ denote its transpose. We let $\norm{A}{2} \coloneqq \max_{\substack{x \in \R^m, y \in \R^n\\ \norm{x}{2} = \norm{y}{2} = 1}} x^\top A y$ denote the spectral norm of $A$. For any $S \subseteq [n]$, we write $\Id_S \in \R^{n\times n}$ to be diagonal matrix $\Id_S(i,i) = 1$ for $i \in S$ and $\Id_S(i,i) = 0$ otherwise. Given a graph $G = (V, E)$ and a subset $S \subseteq V$, we define the set of neighbors of $S$ by the set $N\sbra{S} = \{T \subseteq V \setminus S \mid \forall u \in T, \exists v \in S, \{u,v\} \in E\}$. For sets $S, \tau \subseteq V$, we let $N_\tau\sbra{S}$ denote $N\sbra{S} \setminus N\sbra{\tau}$. For shorthand we write $N(v) \coloneqq N\sbra{\{v\}}$ for $v \in V$.

\subsection{Graph theory}

Throughout we assume all graphs are simple and refer to any graph with multiedges as a multigraph. We consider the following process for sampling a $\Delta$-regular multigraph, commonly called the pairing model. We start by defining some notation.

\begin{definition}[$\frac{1}{\Delta}$-vertices]
Given a graph $G = (V,E)$, a $\frac{1}{\Delta}$-vertex is an element of $V \times [\Delta]$ and we call $(v, i)$ the $i$th copy of $v$. We also refer to the set $C(v) \coloneqq \cbra{(v,i) \mid i \in [\Delta]}$, the set of all copies, as the cloud of $v$ and use $C(S) \coloneqq \bigsqcup_{v \in S} C(v)$ for $S \subseteq V$.
\end{definition}

The notion of $\frac{1}{\Delta}$-vertices allows us to sample a random regular graph by simply sampling a perfect matching on $\Delta$ copies of each vertex.

\begin{enumerate}
    \item Consider the set $V \times [\Delta]$ and assume it has even cardinality.
    \item Sample a perfect matching $\pi$ on the set of $\frac{1}{\Delta}$-vertices, $V \times [\Delta]$.
    \item Construct the induced multigraph $G$ on $V$ by doing the following:  for every $\{(u,i), (v,j)\} \in \binom{V \times [\Delta]}{2}$ matched in $\pi$, add the edge $\{u,v\}$ to $G$.
\end{enumerate}

By conditioning on the pairing model not giving any self-loops or multiedges in $G$, this process gives the uniform distribution on simple $\Delta$-regular graphs on $V$, which is what we mean when we say $G$ is a random $\Delta$-regular graph.

A similar model exists for sampling a uniformly random $\Delta$-regular bipartite graph.

\begin{definition}[Pairing model for random regular bipartite graphs] 

The pairing model for random regular bipartite graphs is a graph generated by the following process:

\begin{enumerate}
    \item For $X = [n/2]$ and $Y = [n/2]$, consider the sets $X \times [\Delta]$ and $Y \times [\Delta]$.
    \item Sample a perfect matching on $\frac{1}{\Delta}$-vertices $X \times [\Delta]$ and $Y \times [\Delta]$.
    \item Construct the induced bipartite multigraph on $G = (X, Y)$ by adding $\{x, y\}$ for every matched pair $\{(x,i), (y,j)\} \in X \times [\Delta] \times Y \times \Delta$.
\end{enumerate}

Once again conditioning on the pairing model not giving multiedges, this process gives the uniform distribution on $\Delta$-regular bipartite graphs on $[n]$. 
\end{definition}

By arguing the probability the pairing model gives a simple graph can be bound independent of $n$, \cite{Friedman03} established the following result on the near-Ramanujan property of random regular graphs.

\begin{theorem}[\cite{Friedman03}]
    \label{thm:friedman}
    Let $G = (V, E)$ be a random $\Delta$-regular graph on $n$ vertices. Then with high probability $|\lambdamin(\A_G)| \leq 2\sqrt{\Delta-1} + o(1)$.
\end{theorem}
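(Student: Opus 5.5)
The plan is not to reprove \Cref{thm:friedman} from scratch in Friedman's original form. I would follow the non-backtracking-operator route of Bordenave, which gives the same conclusion $|\lambdamin(\A_G)| \le 2\sqrt{\Delta-1}+o(1)$ (indeed the same bound for every nontrivial eigenvalue) with a cleaner argument.

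\textbf{Reduction to the pairing model.} First I work entirely in the pairing model described above. The probability that the pairing produces a simple graph is bounded below by a constant $c(\Delta)>0$ independent of $n$. Hence any event of probability $1-o(1)$ in the pairing model also has probability $1-o(1)$ after conditioning on simplicity. So it suffices to prove the spectral bound for the random multigraph.

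\textbf{Passing to the non-backtracking matrix.} Second, I replace $\A_G$ by the non-backtracking matrix $B$, indexed by directed edges, with $B_{(u\to v),(v\to w)}=1$ iff $w\neq u$. By the Ihara--Bass formula,
\[
\det(I - zB)=(1-z^2)^{|E|-|V|}\det\bigl(I - z\A_G + (\Delta-1)z^2 I\bigr).
\]
Consequently every eigenvalue $\theta$ of $\A_G$ corresponds to eigenvalues $\mu$ of $B$ with $\theta=\mu+(\Delta-1)/\mu$. The trivial eigenvalue $\theta=\Delta$ corresponds to $\mu=\Delta-1$. If every other eigenvalue of $B$ satisfies $|\mu|\le\sqrt{\Delta-1}+\varepsilon$, then every nontrivial $\theta$ is real and lies within $2\sqrt{\Delta-1}+O(\sqrt{\varepsilon})$ of the origin. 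Taking $\varepsilon\to0$ slowly gives the $o(1)$ term. So the goal becomes
\[
\bigl\|B^{\ell}\big|_{\mathbf 1^{\perp}}\bigr\| \le (\log n)^{C}(\Delta-1)^{\ell/2}
\quad\text{for }\ell = c\log_{\Delta-1} n .
\]
Here $\mathbf{1}^\perp$ is the orthogonal complement of the Perron direction, and the bound implies the spectral-radius statement after taking $\ell$-th roots.

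\textbf{Bounding $B^\ell$ by the trace method.} Third, I would prove this norm bound with a high-moment computation. Write
\[
\E\,\|B^{\ell}-\text{(rank-one part)}\|^{2m} \le \E\,\mathrm{tr}\bigl[(B^{\ell}-\dots)(B^{\ell}-\dots)^{\top}\bigr]^{m}.
\]
Expand the trace into sums over concatenations of $2m$ non-backtracking walks of length $\ell$. Compute the expectation of each product of pairing indicators, centred so that edges traversed once contribute $O(1/n)$. The combinatorial count then groups walks by their shape (vertices, edges, genus) and shows the total is $n^{O(1)}(\Delta-1)^{m\ell}$ when $m\asymp\log n/\log\log n$. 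Markov's inequality then finishes.

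\textbf{Main obstacle: tangles.} The step I expect to be hardest is controlling walks that revisit short cycles. In the trace expansion these are exactly the configurations whose weight is not beaten by the $1/n$ factors. I would first show that whp $G$ is $\ell$-tangle-free, meaning every ball of radius $\ell$ contains at most one cycle; this is a first-moment count over small subgraphs with excess two. On this event, $B^{\ell}$ coincides with the matrix $B^{(\ell)}$ that counts only tangle-free paths, and only those paths enter the moment expansion. I would then decompose $B^{(\ell)}$ into a centred part plus error terms $R_k$ corresponding to the first tangle. Each term is bounded by the same path-counting argument, which yields the desired $(\log n)^{C}(\Delta-1)^{\ell/2}$ bound.
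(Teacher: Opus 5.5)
The paper does not prove this statement. It imports it from \cite{Friedman03} and only remarks that a pairing-model result transfers to simple graphs because simplicity has probability bounded below independently of $n$. That is exactly your first step. Beyond that, your route differs from the cited one. Friedman's original argument is a long trace-method analysis of (selective) traces of non-backtracking walks, expanded asymptotically in $1/n$, with tangles handled by heavy combinatorial machinery. What it buys is more precise control of the exceptional probability, which is polynomially small in $n$ and governed by tangles. Your outline is Bordenave's proof:
\begin{itemize}
\item pass to the non-backtracking matrix via Ihara--Bass;
\item truncate to tangle-free paths;
\item split $B^{(\ell)}$ into a centred part, a rank-one-type term and remainders $R_t$;
\item bound $\|B^{\ell}|_{\mathbf{1}^{\perp}}\|$ by high moments.
\end{itemize}
This is much shorter, gives an explicit error $O((\log\log n/\log n)^2)$, and extends to random lifts. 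For this paper either route suffices, since only the qualitative bound is used, in the proof of \Cref{thm:regular}.

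When you write it out, make the following points explicit.
\begin{itemize}
\item $\mathbf{1}^{\perp}$ is $B$-invariant because $\mathbf{1}$ is also a \emph{left} eigenvector: $B^{\top}\mathbf{1}=(\Delta-1)\mathbf{1}$ by regularity. Since $B$ is not normal, this is what lets a norm bound on $\mathbf{1}^{\perp}$ control eigenvalues.
\item The same identity, $\mathbf{1}^{\top}B^{s}=(\Delta-1)^{s}\mathbf{1}^{\top}$, together with $B^{s}=B^{(s)}$ on the tangle-free event, is what kills the rank-one cross term.
\item The moment computation must be run on the centred tangle-free matrix and on the $R_t$, not on $B^{\ell}$ itself. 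The moments of $B^{\ell}$ are dominated by rare tangled graphs.
\item In the pairing model the matching indicators are dependent, so centring does not make singly-traversed edges vanish. It only gains an extra factor of order $\mathrm{poly}(\ell m)/\sqrt{n}$ per such edge, and this needs a dedicated lemma on products of centred matching indicators. Most of the real work sits here.
\item The first-moment tangle count requires $c<1/4$.
\item The Ihara--Bass step loses only $O(\varepsilon^{2})$, so your $O(\sqrt{\varepsilon})$ is correct but loose.
\end{itemize}
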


An analogous result holds for random regular bipartite graphs.
\begin{theorem}[\cite{Friedman03}]
    \label{thm:friedman2}
    Let $G = (X, Y, E)$ be a random $\Delta$-regular bipartite graph on $n$ vertices. Then with high probability $\lambda_2(\A_G) \leq 2\sqrt{\Delta-1}+o(1)$.
\end{theorem}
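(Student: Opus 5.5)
The plan is to follow Bordenave's simplified proof of Friedman's theorem, adapted to the bipartite pairing model, rather than reproving Friedman's original trace-method argument. First I reduce to the pairing model. The probability that the bipartite pairing model produces no multiedges tends to a positive constant $e^{-(\Delta-1)^2/2}$ that does not depend on $n$. Hence any property holding with high probability in the pairing model also holds with high probability for the uniform simple $\Delta$-regular bipartite graph. It also helps to use contiguity: the bipartite pairing model is contiguous to the union of $\Delta$ independent uniform perfect matchings between $X$ and $Y$, i.e.\ $\Delta$ random permutations of $[n/2]$. That permutation model is the most convenient one for moment computations.

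Next I pass to the non-backtracking operator $B$ on the $\Delta n$ directed edges. By the Ihara--Bass identity, every eigenvalue $\mu$ of $\A_G$ with $|\mu| > 2\sqrt{\Delta-1}$ corresponds to an eigenvalue $\theta$ of $B$ with $\mu = \theta + (\Delta-1)/\theta$ and $|\theta| > \sqrt{\Delta-1}$. The map $\theta \mapsto \theta+(\Delta-1)/\theta$ is continuous near $|\theta|=\sqrt{\Delta-1}$. So it suffices to show that, with high probability, every eigenvalue of $B$ other than the trivial ones $\pm(\Delta-1)$ has modulus at most $\sqrt{\Delta-1}+\varepsilon_n$ with $\varepsilon_n \to 0$. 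Bipartiteness forces both $\pm(\Delta-1)$ to appear. But $\lambda_2(\A_G)$ concerns only the second largest eigenvalue, not the second largest in absolute value, so $-\Delta$ causes no trouble: we only need to control the remaining spectrum.

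The core step is a high-trace bound on $B^\ell$ with $\ell \approx c\log n$, restricted to the orthogonal complement of the two trivial eigenvectors. I would decompose $B^\ell = \underline{B}^{(\ell)} + R$, where $\underline{B}^{(\ell)}$ counts only non-backtracking paths that are \emph{tangle-free}, meaning they contain at most one cycle in each ball of radius $\ell$. A standard first-moment count over the pairing shows that the graph is $\ell$-tangle-free with high probability for $\ell \le c\log_{\Delta-1} n$ with $c$ small. On that event $R=0$. Then I center the entries, replacing each edge indicator $\mathbf{1}[\pi(u,i)=(v,j)]$ by its deviation from $\frac{1}{\Delta n/2}$. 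This expands $\underline{B}^{(\ell)}$ into a centered part plus lower-order terms of rank essentially one. Finally I bound $\E\,\mathrm{tr}\big[(\Delta^{(\ell)} \Delta^{(\ell)\top})^{m}\big]$ for $m \approx \log n/\log\log n$ by enumerating the path shapes. Each tangle-free path contributes at most $(\Delta-1)^{\ell/2}$ per step up to $n^{o(1)}$ factors, because edges traversed once have centered expectation $\approx 0$. Markov's inequality then gives spectral norm at most $(\Delta-1)^{\ell/2}(\log n)^{O(1)}$, and taking $\ell$-th roots yields $\sqrt{\Delta-1}+o(1)$.

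I expect the main obstacle to be this combinatorial path-counting estimate. Specifically, one has to bound the number of tangle-free path configurations with a given number of vertices, edges and excess, and control the weak dependence of pairing-model edges compared with independent edges; the bipartite constraint slightly changes the counts but not the structure. Bordenave's framework handles the dependence through a careful expectation bound for products of centered matching indicators. I would first try to import that lemma essentially verbatim, replacing the one-sided matching on $V\times[\Delta]$ by the bipartite matching between $X\times[\Delta]$ and $Y\times[\Delta]$. The remaining bookkeeping, namely the error terms from the rank-one corrections and the transfer back through Ihara--Bass, is routine.
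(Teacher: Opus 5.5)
The paper gives no proof of this statement. It is quoted from \cite{Friedman03}, and the only argument in the text is the reduction you also make first: the bipartite pairing model is simple with probability bounded away from $0$ uniformly in $n$ (your constant $e^{-(\Delta-1)^2/2}$ is right), so whp statements transfer to the uniform model.

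Your route is genuinely different from the cited one. Friedman's proof is a long trace-method analysis of closed walks using tangles and selective traces, and it is for non-bipartite models. You instead run Bordenave's non-backtracking argument, and your reductions are sound:
\begin{itemize}
\item Ihara--Bass transfers the bound from $B$ to $\lambda_2$.
\item $\mathbf{1}$ and the sign vector $\chi$ are simultaneously left and right eigenvectors of $B$ for $\pm(\Delta-1)$, so every other right eigenvector lies in $\{\mathbf{1},\chi\}^{\perp}$.
\item Tangle-freeness holds at $\ell\approx c\log n$.
\item The moment bound uses $m\approx \log n/\log\log n$.
\end{itemize}
This is essentially how the bipartite statement is established in the literature (Brito--Dumitriu--Harris).

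Your permutation-model detour also gives a shortcut. $\Delta$ independent perfect matchings between $X$ and $Y$ form exactly a random $(n/2)$-lift of the two-vertex graph with $\Delta$ parallel edges. So Bordenave's random-lift theorem (or Friedman--Kohler) applies verbatim. The price is that you then need the bipartite contiguity result as an input, not just the simplicity bound. The citation buys brevity; your plan buys an argument that actually fits the bipartite setting.

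As written, however, your proposal is an outline whose whole difficulty lies in the deferred path counting, and one step is misdescribed. Telescoping the centered product does produce the low-rank terms $\underline{B}^{(t-1)}KB^{(\ell-t)}$. These are rank two here, since the mean of the bipartite matching is the bipartite all-ones block, and they are harmless on $\{\mathbf{1},\chi\}^{\perp}$ on the tangle-free event. But the telescoping also produces remainder terms $R^{(\ell)}_t$, coming from concatenations of tangle-free pieces that are themselves tangled. These require their own trace bounds and are not routine bookkeeping.
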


In general, any with high probability (in $n$) event in the pairing model can be transferred to the random regular graph or bipartite graph case respectively using contiguity \cite{Wormald99}.

\begin{definition}[Contiguity]
    Let $\{\mu_n\}_{n \in \N}, \{\nu_n\}_{n \in \N}$ be two families of probability distributions parameterized by $n$. We say $\{\mu_n\}_{n \in \N}$ and $\{\nu_n\}_{n \in \N}$ are contiguous if any event $E$ happening with high probability in one also occurs with high probability in the other.
\end{definition}

\begin{fact}
    \label{fact:contiguity}
    The pairing model for $n, \Delta$ is contiguous to the random $\Delta$-regular graph distribution on $n$ vertices, and the bipartite pairing model for $n, \Delta$ is contiguous to the $\Delta$-regular bipartite graph distribution on $n$ vertices.
\end{fact}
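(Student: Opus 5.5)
The plan is to derive everything from two standard facts about the pairing model. First, conditioned on being simple, it is exactly uniform. Second, for fixed $\Delta$ the probability of being simple is bounded below by a positive constant independent of $n$. I would then prove the direction the paper actually uses: any event that holds with high probability in the pairing model also holds with high probability for the random $\Delta$-regular (bipartite) graph. This is how the fact is invoked, for instance to transfer \Cref{thm:friedman} and the martingale concentration arguments.

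\emph{Step 1: uniformity given simplicity.} For the non-bipartite model, fix a simple $\Delta$-regular graph $H$ on $V$. The pairings $\pi$ producing $H$ correspond bijectively to choosing, for each vertex $v$, a bijection between the $\Delta$ edges of $H$ at $v$ and the copies $C(v)$. Because $H$ has no loops or multiedges, distinct choices give distinct pairings. So exactly $(\Delta!)^{|V|}$ pairings produce $H$, independent of $H$. The bipartite case is identical, using $X\times[\Delta]$ and $Y\times[\Delta]$. Hence the pairing model conditioned on the event $\mathcal S$ that the multigraph is simple is uniform.

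\emph{Step 2: $\Pr[\mathcal S]\ge c_\Delta>0$.} Let $L$ be the number of loops and $D$ the number of double edges; in the bipartite model only $D$ is relevant. I would show by the method of factorial moments that $(L,D)$ converges jointly to independent Poisson variables with means $\frac{\Delta-1}{2}$ and $\frac{(\Delta-1)^2}{4}$. In the bipartite case $D$ converges to a Poisson variable with mean $\frac{(\Delta-1)^2}{2}$. For each fixed $r$, the $r$-th factorial moment is a sum over configurations of $r$ loops or double edges. Configurations that overlap at a vertex contribute $O(1/n)$ in total. Disjoint ones contribute a product: a specified set of $m$ pairs occurs with probability $\prod_{j<m}(\Delta|V|-1-2j)^{-1}$, and counting the placements gives the Poisson moments. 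Consequently $\Pr[\mathcal S]=\Pr[L=D=0]\to e^{-(\Delta^2-1)/4}$, or $e^{-(\Delta-1)^2/2}$ in the bipartite case. In particular $\Pr[\mathcal S]\ge c_\Delta$ for all large $n$.

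\emph{Step 3: transfer.} Let $E$ be an event with $\Pr_{\text{pair}}[E^c]=o(1)$. Then
\[
\Pr_{\text{unif}}[E^c]=\Pr_{\text{pair}}[E^c\mid\mathcal S]\le \frac{\Pr_{\text{pair}}[E^c]}{c_\Delta}=o(1).
\]
This is the direction used throughout the paper. The reverse direction cannot hold literally: the event $\mathcal S$ itself has probability $1$ for the uniform model but only about $c_\Delta<1$ in the pairing model. So I would state the converse for events closed under the multigraph structure, or note that it follows from the full contiguity theory of \cite{Wormald99} via small subgraph conditioning, which is not needed here.

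The main obstacle is the moment computation in Step 2. In particular one must control the overlapping configurations uniformly in $r$, which is needed for convergence of every factorial moment. This is routine but bookkeeping-heavy, and I would cite Bollobás/Bender–Canfield for it if allowed.
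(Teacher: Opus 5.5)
Your proposal is correct and follows the standard argument the paper relies on when it cites \cite{Wormald99}: uniformity of the pairing model conditioned on simplicity, plus $\Pr[\mathcal S]$ bounded away from $0$ independently of $n$, giving $\Pr_{\text{unif}}[E^c]\le \Pr_{\text{pair}}[E^c]/c_\Delta$. You are also right that the two-sided statement, read literally under the paper's definition of contiguity, fails because of the event $\mathcal S$ itself; only the direction ``w.h.p.\ in the pairing model $\Rightarrow$ w.h.p.\ in the uniform model'' holds, and that is the only direction the paper ever uses.

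Two minor corrections, neither a gap:
\begin{itemize}
\item The Poisson limit via factorial moments only needs each fixed-$r$ moment to converge, so you do not need uniformity in $r$.
\item In the bipartite model, $m$ specified disjoint pairs occur with probability $\prod_{j<m}(\Delta|X|-j)^{-1}$, not the non-bipartite formula you wrote. Your stated bipartite mean $(\Delta-1)^2/2$ is still correct.
\end{itemize}
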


The upshot of this is that we may prove with high probability statements in the simpler pairing models in order to recover them for random regular graphs. See \cite{Puder15} for more details and references on contiguity.

\subsection{Linear algebra}

We recall here forms of the Cauchy Interlacing Theorem that will be useful throughout.

\begin{fact}[Cauchy Interlacing Theorem]
    \label{fact:cauchyinterlacing}
    Let $A \in \R^{n \times n}$ be a symmetric matrix. Let $B = \Pi A \Pi$ for some rank-$k$ projector matrix $\Pi \in \R^{n \times n}$. Then the decreasing-ordered eigenvalues satisfy
    \begin{align*}
        \lambda_i(A) \geq \lambda_i(B) \geq \lambda_{i+n-k}(A) \;\;\;\; \text{for } i = 1, \dots, k\mper
    \end{align*}
    In particular, if $A \succeq 0$ then $A \succeq B$.
\end{fact}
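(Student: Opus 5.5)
We read $\lambda_1(B)\ge\dots\ge\lambda_k(B)$ as the eigenvalues of $B$ restricted to $U\coloneqq\operatorname{range}(\Pi)$, where $\dim U=k$. Since $\Pi$ is an orthogonal projector, $B=\Pi A\Pi$ is symmetric, maps $U$ into $U$, and vanishes on $U^\perp$. So the spectrum of $B$ on $\R^n$ consists of the $k$ eigenvalues $\mu_1\ge\dots\ge\mu_k$ of the compression $B|_U$, together with $n-k$ zeros. The plan is to derive both inequalities from the Courant--Fischer min-max characterization. The basic identity is that for every $x\in U$ we have $\Pi x=x$, so
\begin{align*}
x^\top Bx=x^\top Ax .
\end{align*}
Thus the Rayleigh quotients of $B|_U$ are exactly those of $A$ restricted to vectors in $U$.

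\textbf{Upper bound.} Courant--Fischer on $U$ gives
\begin{align*}
\mu_i=\max_{W\subseteq U,\ \dim W=i}\ \min_{0\ne x\in W}\frac{x^\top Ax}{x^\top x}.
\end{align*}
Enlarging the maximization from subspaces of $U$ to all $i$-dimensional subspaces of $\R^n$ can only increase the value, which is $\lambda_i(A)$. Hence $\mu_i\le\lambda_i(A)$.

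\textbf{Lower bound.} Use the dual form
\begin{align*}
\mu_i=\min_{W\subseteq U,\ \dim W=k-i+1}\ \max_{0\ne x\in W}\frac{x^\top Ax}{x^\top x}.
\end{align*}
Minimizing instead over all $(k-i+1)$-dimensional subspaces of $\R^n$ can only decrease the value, giving
\begin{align*}
\lambda_{n-(k-i+1)+1}(A)=\lambda_{i+n-k}(A).
\end{align*}
Hence $\mu_i\ge\lambda_{i+n-k}(A)$. This establishes the interlacing $\lambda_i(A)\ge\lambda_i(B)\ge\lambda_{i+n-k}(A)$ for $i=1,\dots,k$.

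\textbf{The ``in particular'' clause.} Here care is needed, and this is the one genuine obstacle. A literal Loewner inequality $A\succeq\Pi A\Pi$ is false in general: take $A=\mathbf{1}\mathbf{1}^\top\in\R^{2\times2}$ and $\Pi=e_1e_1^\top$; then $A-\Pi A\Pi$ has determinant $-1$. I would therefore prove and use the eigenvalue-wise domination $\lambda_i(A)\ge\lambda_i(B)$ for all $i=1,\dots,n$, with $B$ viewed on all of $\R^n$. This is what the interlacing applications to $\A_\tau$ need.

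\textbf{Proof of the domination.} If $A\succeq0$, then every $\mu_j=\min_{x\in\text{eigvec}}x^\top Ax\ge0$, so the sorted full spectrum of $B$ is $\mu_1,\dots,\mu_k,0,\dots,0$.
\begin{itemize}
\item For $i\le k$, the upper bound above gives $\lambda_i(B)=\mu_i\le\lambda_i(A)$.
\item For $i>k$, we have $\lambda_i(B)=0\le\lambda_i(A)$ because $A\succeq0$.
\end{itemize}
In particular, $\lambda_{\max}(B)\le\lambda_{\max}(A)$ and $\operatorname{tr}B\le\operatorname{tr}A$, which are the consequences used later.
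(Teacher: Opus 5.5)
Your proof is correct. The paper states this as a standard fact and gives no proof. Your argument is the standard one: restrict both Courant--Fischer characterizations to subspaces of $U=\mathrm{range}(\Pi)$, on which $x^\top Bx=x^\top Ax$, then relax to all subspaces of $\R^n$. The index count $n-(k-i+1)+1=i+n-k$ is right.

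Reading $\lambda_i(B)$ as the eigenvalues of the compression $B|_U$ (with $\Pi$ an orthogonal projector) is not merely convenient but necessary. If $B$ is viewed on all of $\R^n$, then $A=-I_2$ and $\Pi=e_1e_1^\top$ give $\lambda_1(B)=0>-1=\lambda_1(A)$.

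You are also right that the Loewner clause is false, and your $2\times 2$ example checks out. More generally, for a coordinate projector onto $S$, the matrix $A-\Pi A\Pi$ has a zero $S\times S$ block, so it is PSD only if $A_{S\bar S}=0$. That condition essentially never holds in the paper's application. Your replacement, eigenvalue-wise domination for $A\succeq 0$, is correctly proved. Just justify $\mu_j\ge 0$ cleanly, e.g.\ $\mu_j\ge\lambda_{j+n-k}(A)\ge 0$ by your own lower bound.

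One correction to your closing remark: eigenvalue-wise domination is not what the paper needs where it invokes the Loewner clause. In the proof of Lemma~\ref{lem:secondeigen}, the claimed $\A_{G^2[\calX_\tau(1)]}\preceq\A_{G^2[X]}$ is followed by subtracting $\frac{\Delta^2}{|X|}\mathbf{1}\mathbf{1}^\top$. Spectral domination does not survive subtracting a common rank-one term: $\mathrm{diag}(1,2)$ and $\mathrm{diag}(2,1)$ have equal spectra, but after subtracting $2e_1e_1^\top$ their top eigenvalues are $2$ and $1$.

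What is needed there is your $i=1$ upper bound applied to a different matrix. The compression of $\A_{G^2[X]}-\frac{\Delta^2}{|X|}\mathbf{1}\mathbf{1}^\top$ to $\calX_\tau(1)$ is $\A_{G^2[\calX_\tau(1)]}-\frac{\Delta^2}{|X|}\mathbf{1}\mathbf{1}^\top$, with $\mathbf{1}$ now the all-ones vector on $\calX_\tau(1)$. So its largest eigenvalue is at most $\lambda_{\max}\bigl(\A_{G^2[X]}-\frac{\Delta^2}{|X|}\mathbf{1}\mathbf{1}^\top\bigr)=\lambda_2(\A_{G^2[X]})$. No Loewner comparison of $A$ with $\Pi A\Pi$ is required, and the trace bound is not used anywhere in the paper.
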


\begin{corollary}
    \label{cor:cauchyinterlacing}
    Let $A, B \in \R^{n \times n}$ be symmetric matrices and let $v \in \R^n$ be a vector. Then $A - vv^\top \preceq B$ implies $\lambda_2(A) \leq \lambda_1(B)$.
\end{corollary}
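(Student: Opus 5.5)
The plan is to prove the corollary with a dimension-counting argument and the Courant--Fischer characterization of the top two eigenvalues. The key observation is that the rank-one term $vv^\top$ is invisible on the hyperplane $v^\perp$. Any two-dimensional subspace meets that hyperplane nontrivially.

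First, let $u_1, u_2$ be orthonormal eigenvectors of $A$ for its two largest eigenvalues $\lambda_1(A) \geq \lambda_2(A)$, and set $U = \operatorname{span}\{u_1, u_2\}$. For every unit vector $x \in U$, writing $x = a u_1 + b u_2$ with $a^2 + b^2 = 1$, we have
\begin{align*}
x^\top A x = a^2 \lambda_1(A) + b^2 \lambda_2(A) \geq \lambda_2(A)\mper
\end{align*}
Next, the linear functional $x \mapsto v^\top x$ restricted to $U$ has a kernel of dimension at least $1$. So we can pick a unit vector $x \in U$ with $v^\top x = 0$. If $v = 0$, any unit $x \in U$ works.

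For this $x$, the hypothesis $A - vv^\top \preceq B$ gives
\begin{align*}
\lambda_2(A) \leq x^\top A x = x^\top (A - vv^\top) x \leq x^\top B x \leq \lambda_1(B)\mcom
\end{align*}
where the last step is the variational characterization $\lambda_1(B) = \max_{\norm{y}{2}=1} y^\top B y$. This is exactly the claimed inequality.

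An alternative route stays closer to \Cref{fact:cauchyinterlacing}. Let $\Pi$ be the orthogonal projector onto $v^\perp$, which has rank $n-1$ when $v \neq 0$. Interlacing with $i = 1$ gives $\lambda_1(\Pi A \Pi) \geq \lambda_{2}(A)$. Since $\Pi v = 0$, we have $\Pi A \Pi = \Pi (A - vv^\top) \Pi \preceq \Pi B \Pi$. It remains to check $\lambda_1(\Pi B \Pi) \leq \lambda_1(B)$. This holds because $\norm{\Pi y}{2} \leq \norm{y}{2}$, so for unit $y$ we get $y^\top \Pi B \Pi y \leq \max(0, \lambda_1(B))$. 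Under this route one therefore has to handle the sign of $\lambda_1(B)$: the bound only becomes $\lambda_1(B)$ directly when $\lambda_1(B) \geq 0$, and otherwise one must restrict the maximization to $v^\perp$. Because of this nuisance, I would present the first, direct Courant--Fischer argument, which avoids the sign issue entirely.

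There is no real obstacle here. The only care needed is to choose the test vector inside the top-two eigenspace of $A$ and simultaneously orthogonal to $v$, which the dimension count guarantees.
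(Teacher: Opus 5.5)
Your proof is correct and is essentially the paper's route. The paper gives no written proof and presents this as an immediate corollary of \Cref{fact:cauchyinterlacing}, namely $\lambda_2(A)\le\lambda_1(A-vv^\top)$ (equivalently, compress to $v^\perp$) followed by $\lambda_1(A-vv^\top)\le\lambda_1(B)$. Your unit test vector in the top-two eigenspace of $A$, orthogonal to $v$, is exactly the Courant--Fischer proof of that interlacing step, merged with the Loewner-order step.

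Your alternative route is literally the paper's. The sign nuisance you flag disappears once the eigenvalues of $\Pi B\Pi$ are taken on the range of $\Pi$. The paper's \Cref{fact:cauchyinterlacing} must be read that way anyway: for $A\prec 0$, the padded $n\times n$ matrix $\Pi A\Pi$ has a zero eigenvalue exceeding $\lambda_1(A)$.
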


\subsection{Markov chains}

We provide here some background on Markov chains, for more details and references see \cite{LevinPW17}. A (discrete-time) Markov chain is a stochastic process $\{X_t\}_{t \in \Z_{\geq 0}}$ on a state set $\Omega$ satisfying the Markovian property $\Pr\sbra{X_{t+1} = x \mid X_t, ..., X_0} = \Pr\sbra{X_{t+1} = x \mid X_t}$ for every $t \in \Z_{\geq 0}$, $x \in \Omega$. We describe a Markov chain by $\P \in \R^{\Omega \times \Omega}$ where for $x, y \in \Omega$, $\P(x,y) = \Pr\sbra{X_{t+1} = y \mid X_t = x}$.

We say $\P$ is \textit{irreducible} if the (weighted) graph on $\Omega$ induced by $\P$ is connected. We say $\P$ is \textit{reversible} with respect to a distribution $\mu$ over $\Omega$ if $\mu(x)\P(x,y)  = \mu(y)\P(y,x)$ for all $x, y \in \Omega$. In this case we say $\mu$ is the stationary distribution for $\P$, satisfies $\mu\P = \mu$, and is always unique for irreducible, reversible $\P$.

Our key quantity of interest for Markov chains is the mixing time.

\begin{definition}[Mixing time]
    Let $\P$ be a Markov chain on $\Omega$ with stationary distribution $\mu$. For any distribution $\pi$ on $\Omega$, we define the $\varepsilon$-mixing time of $x \in \Omega$ with respect to $\P$ as
    \begin{equation*}
        \tau_x(\P, \varepsilon) = \min_{t \in \Z_{\geq 0}} \cbra{\dtv(\delta_x\P^t , \mu) \leq \varepsilon} \mcom
    \end{equation*}
    where $\delta_x$ is the point mass on $x$. The mixing time of $\taumix(\P, \varepsilon)$ is then $\max_{x \in \Omega} \tau_x(\P, \varepsilon)$.
\end{definition}

It is well-known that the mixing time is characterized by the existence of a spectral gap, that is, a lower bound on the quantity $1-\lambda_*$ where $\lambda_*$ is the second largest eigenvalue of $\P$ in magnitude.

\begin{fact}[Mixing via spectral gap]
    Let $\P$ be a Markov chain on $\Omega$ with stationary distribution $\mu$. Denote its eigenvalues $1 \geq \lambda_1 \geq \lambda_2 \geq \dots \lambda_n \geq -1$. Let $\lambda_* = \max\{\lambda_2, \abs{\lambda_n}\}$. Then the mixing time of $\P$ satisfies
    \begin{align*}
        \taumix(\P, \varepsilon) \leq O\left(\frac{1}{1-\lambda_*} \log\frac{1}{\varepsilon \cdot \min_{x \in \Omega} \mu(x)}\right)\mper
    \end{align*}
\end{fact}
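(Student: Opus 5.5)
This is the standard $L^2$ (spectral) bound, and I would prove it directly, assuming (as in all our applications, where the chain is a down-up walk) that $\P$ is irreducible and reversible with respect to $\mu$. Reversibility is what makes the eigenvalues real. It also makes $\P$ self-adjoint on $\R^\Omega$ with the inner product $\langle f,g\rangle_\mu=\sum_x \mu(x)f(x)g(x)$. So there is a $\mu$-orthonormal eigenbasis $f_1=\mathbf{1},f_2,\dots,f_n$ with $\P f_i=\lambda_i f_i$. Write $\Pi=\mathrm{diag}(\mu)$; reversibility says $\Pi\P=\P^\top\Pi$.

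The key object is the density $h_t(y)=(\delta_x\P^t)(y)/\mu(y)$. Using reversibility, $h_t=\P^t h_0$, where $h_0=\delta_x/\mu(x)$; this is the step that uses $\Pi\P=\P^\top\Pi$. Expand $h_0=\sum_i \langle h_0,f_i\rangle_\mu f_i$, and note that $\langle h_0,f_1\rangle_\mu=1$. Then
\begin{align*}
\norm{h_t-\mathbf{1}}{\mu}^2=\sum_{i\geq 2}\lambda_i^{2t}\langle h_0,f_i\rangle_\mu^2\leq \lambda_*^{2t}\pbra{\norm{h_0}{\mu}^2-1}=\lambda_*^{2t}\pbra{\tfrac{1}{\mu(x)}-1}\mcom
\end{align*}
where $\norm{\cdot}{\mu}$ is the norm induced by $\langle\cdot,\cdot\rangle_\mu$. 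Next, by Cauchy--Schwarz,
\begin{align*}
\dtv(\delta_x\P^t,\mu)=\tfrac12\sum_y\mu(y)\abs{h_t(y)-1}\leq \tfrac12\norm{h_t-\mathbf{1}}{\mu}\leq \tfrac12\lambda_*^t\mu(x)^{-1/2}\mper
\end{align*}

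To finish, use $\lambda_*^t\leq e^{-t(1-\lambda_*)}$. Then $\dtv(\delta_x\P^t,\mu)\leq\varepsilon$ as soon as
\begin{align*}
t\geq \frac{1}{1-\lambda_*}\log\frac{1}{\varepsilon\sqrt{\mu(x)}}\mper
\end{align*}
This quantity is at most $\frac{1}{1-\lambda_*}\log\frac{1}{\varepsilon\min_x\mu(x)}$ whenever $\varepsilon\sqrt{\mu(x)}\leq 1$, which always holds. Taking the maximum over $x$ gives the claimed bound on $\taumix(\P,\varepsilon)$.

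The only delicate point is the step $h_t=\P^t h_0$: it needs reversibility, without which the spectrum need not be real and the bound can fail. So I would state explicitly that the fact is applied to reversible chains, which covers the down-up walks of \Cref{thm:oppenheims}. If $\lambda_*$ could equal $1$, for instance because the walk is periodic, I would first pass to the lazy chain $\frac12(\Id+\P)$. This only changes constants.
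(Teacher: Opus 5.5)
Your argument is correct. The paper gives no proof of this fact and treats it as the standard reversible-chain bound from \cite{LevinPW17}, whose textbook proof is this same spectral decomposition in $\ell^2(\mu)$ followed by Cauchy--Schwarz (your version even yields the slightly sharper $\log\frac{1}{\varepsilon\sqrt{\mu(x)}}$); your reversibility assumption matches the paper's setup and holds for the down-up walks. One small slip: the comparison $\log\frac{1}{\varepsilon\sqrt{\mu(x)}}\le\log\frac{1}{\varepsilon\min_y\mu(y)}$ holds because $\sqrt{\mu(x)}\ge\mu(x)\ge\min_y\mu(y)$, not because $\varepsilon\sqrt{\mu(x)}\le 1$.
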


We can standardly eliminate the dependence on $\abs{\lambda_n}$ in the bound above via the following observation.

\begin{fact}
    \label{fact:mixing}
    Let $\P$ be a Markov chain on $\Omega$ with stationary distribution $\mu$ with second largest eigenvalue $\lambda_2$. Then the lazy version of $\P$, defined as $\frac{\id + \P}{2}$, has stationary distribution $\mu$ and mixing time satisfying
    \begin{align*}
        \taumix\left(\frac{\id + \P}{2}, \varepsilon\right) \leq O\left(\frac{1}{1-\lambda_2} \log\frac{1}{\varepsilon \cdot \min_{x \in \Omega} \mu(x)}\right)\mper
    \end{align*}
\end{fact}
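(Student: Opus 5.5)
The plan is to reduce this statement directly to the preceding fact on mixing via the spectral gap by computing the spectrum of the lazy chain $\P' \coloneqq \frac{\id + \P}{2}$. Throughout I will assume, as is implicit in the preceding fact (which orders the eigenvalues as real numbers in $[-1,1]$), that $\P$ is reversible with respect to $\mu$. This is the case for all chains we apply it to, namely down-up walks.

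First I check stationarity. Since $\mu\P = \mu$, we get $\mu\P' = \frac{1}{2}(\mu + \mu\P) = \mu$. Reversibility is also inherited. For $x \neq y$ we have $\mu(x)\P'(x,y) = \frac{1}{2}\mu(x)\P(x,y) = \frac{1}{2}\mu(y)\P(y,x) = \mu(y)\P'(y,x)$, and the diagonal terms impose no condition. Hence $\P'$ is self-adjoint in $L^2(\mu)$ and has real eigenvalues. If $\P$ is irreducible, then $\P'$ is irreducible as well, since it has the same off-diagonal support.

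Next I compute the spectrum. Because $\P'$ is a polynomial in $\P$, its eigenvectors are those of $\P$ and its eigenvalues are $\lambda_i' = \frac{1+\lambda_i}{2}$, in the same decreasing order. Since every $\lambda_i \geq -1$, each $\lambda_i' \geq 0$. In particular $\abs{\lambda_n'} = \lambda_n' \leq \lambda_2'$, so
\begin{align*}
    \lambda_*' = \max\{\lambda_2', \abs{\lambda_n'}\} = \frac{1+\lambda_2}{2}, \qquad 1-\lambda_*' = \frac{1-\lambda_2}{2}\mper
\end{align*}

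Finally I apply the mixing-via-spectral-gap fact to $\P'$, which has stationary distribution $\mu$. This gives $\taumix(\P',\varepsilon) \leq O\bigl(\frac{2}{1-\lambda_2}\log\frac{1}{\varepsilon\min_x\mu(x)}\bigr)$, and the factor $2$ is absorbed into the $O(\cdot)$.

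I do not expect a real obstacle here. The only point needing care is the reversibility assumption, which is what guarantees real eigenvalues in $[-1,1]$ so that the eigenvalues of $\P'$ are nonnegative.
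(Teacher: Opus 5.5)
Your proposal is correct and is the standard argument the paper relies on without writing it out. The paper only remarks that laziness "standardly" eliminates the dependence on $|\lambda_n|$: the lazy chain has eigenvalues $\frac{1+\lambda_i}{2} \in [0,1]$, so its $\lambda_*$ is $\frac{1+\lambda_2}{2}$, and the preceding spectral-gap fact applies with gap $\frac{1-\lambda_2}{2}$. You are right to make explicit the reversibility assumption that the paper leaves implicit.
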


\subsection{Approximate counting and sampling}
We use the following two definitions to describe approximate counting and sampling algorithms.

\begin{definition}[$\FPRAS$]
    Given a finite set $\Omega$ and a weight function $w : \Omega \to \R^+$ of length $n$, define the partition function $Z \coloneqq \sum_{x \in \Omega} w(x)$. We say an algorithm $\calA$ is a fully polynomial randomized approximation scheme ($\FPRAS$) for $Z$ if given error parameter $\varepsilon \in (0,1)$ and confidence interval $\delta \in (0,1)$:
    \begin{itemize}
        \item Outputs $\hat{Z} \in \R^+$ satisfying $\Pr\sbra{(1-\varepsilon)Z \leq \hat{Z} \leq (1+\varepsilon) Z} \geq 1-\delta$.
        \item Runs in $\poly(n, \frac{1}{\varepsilon}, \log \frac{1}{\delta})$-time.
    \end{itemize}
\end{definition}

\begin{definition}[$\FPAUS$]
    Given a finite set $\Omega$ and a weight function $w : \Omega \to \R^+$ of length $n$, define the distribution $\mu$ on $\Omega$ via $\mu(x) = \frac{w(x)}{Z}$ where $Z$ is the partition function. We say an algorithm $\calA$ is a fully polynomial almost-uniform sampler ($\FPAUS$) for $\mu$ if given error parameter $\varepsilon \in (0,1)$:
    \begin{itemize}
        \item Outputs $x \sim \hat{\mu}$ satisfying $\dtv(\hat{\mu}, \mu) \leq \varepsilon$.
        \item Runs in $\poly(n, \log\frac{1}{\varepsilon})$-time.
    \end{itemize}
\end{definition}

For a large class of problems known as self-reducible problems there is an equivalence between these two notions. See \cite{JerrumVV86} for details.

\subsection{High-dimensional expanders}

In this section, we introduce some definitions for walks in high-dimensional expanders useful in applying \Cref{thm:oppenheims}. For most definitions and notation on simplicial complexes, see \Cref{sec:introduction}.

\begin{definition}[Local walk operators]
    Given a $d$-dimensional simplicial complex $\calX$ with distribution $\mu$ on $\calX(d)$, we define the local walk operator for $\tau \in \calX$ with $\codim(\tau) \geq 2$ as $\P_\tau \in \R^{\calX_\tau(1) \times \calX_\tau(1)}$ with entry $u, v \in \calX_\tau(1)$ being
    \begin{align*}
        \P^{(\mu)}_\tau(u,v) = \frac{1}{\codim(\tau)-1} \Pr_{S \sim \mu_{\mid \tau \cup \{u\}}}\sbra{v \in S}\mper
    \end{align*}
\end{definition}

One should think of the local walk as the natural graph random walk on the skeleton of the corresponding link induced by the global distribution $\mu$. Given such a $\mu$, we also define the following induced distribution $\pi^{(\mu)}_\tau$ over $\calX_\tau(1)$ for $\tau \in \calX$ with $\codim(\tau) \geq 2$:
\begin{align*}
    \pi^{(\mu)}_\tau(u) &= \frac{1}{\codim(\tau)} \Pr_{S \sim \mu_{\mid \tau}}\sbra{u \in S}\mper
\end{align*}
We point out that with this definition $\P_\tau(u,v) = \pi_{\tau \cup \{u\}}(v)$. Finally, we define the matrix $\Pi^{(\mu)}_\tau \in \R^{\calX_\tau(1) \times \calX_\tau(1)}$ by $\Pi^{(\mu)}_\tau = \diag(\pi^{(\mu)}_\tau)$. When clear from context, we drop the superscript $(\mu)$ in these operators.
\section{Properties of Random Regular Graphs}
\label{sec:bipartiteprops}

In this section, we prove a collection of properties of random regular (bipartite) graphs and the hardcore model which we reference throughout.

\subsection{Neighborhood concentration in random regular bipartite graphs}

\begin{lemma}
    \label{lem:bipartiteconc}
    Let $G=(X,Y,E)$ be a random $\Delta$-regular bipartite graph. For any $\ell > 0$, there exists $\gamma = o_\Delta(1)$ such that with high probability over the randomness of $G$, every $\tau \subseteq X$ satisfies the following:
    \begin{enumerate}
        \item \textnormal{(Expansion)} If $\abs{\tau} \geq \frac{\log \Delta}{(2+\gamma) \Delta} \abs{X}$ then
        \begin{align*}
            \abs{Y \setminus N\sbra{\tau}} \leq \frac{(\ell+1)\log \Delta}{\sqrt{\Delta}} \abs{Y}\mper
        \end{align*}
        \item \textnormal{(Anti-expansion)} If $\abs{\tau} \leq \frac{\log \Delta}{(2+\gamma)\Delta} \abs{X}$ then
        \begin{align*}
            \abs{Y \setminus N\sbra{\tau}} \geq \frac{\ell\log \Delta}{\sqrt{\Delta}}\abs{Y}\mper
        \end{align*}
    \end{enumerate}
\end{lemma}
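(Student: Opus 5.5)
The plan is to prove both parts at the single critical size $t^\star = \frac{\log\Delta}{(2+\gamma)\Delta}\abs{X}$ and then extend by monotonicity. The map $\tau\mapsto \abs{Y\setminus N\sbra{\tau}}$ is non-increasing under inclusion. So if $\abs{\tau}\ge t^\star$, part 1 follows from part 1 for any subset of $\tau$ of size exactly $t^\star$. Likewise, if $\abs{\tau}\le t^\star$, part 2 follows from part 2 for any superset of $\tau$ of size $t^\star$. (Rounding $t^\star$ to an integer is harmless.) It therefore suffices to show that, with high probability, every $\tau\in\binom{X}{t^\star}$ satisfies
\[
\frac{\ell\log\Delta}{\sqrt\Delta}\abs{Y}\le \abs{Y\setminus N\sbra{\tau}}\le \frac{(\ell+1)\log\Delta}{\sqrt\Delta}\abs{Y}.
\]
I would prove this in the bipartite pairing model and transfer it to the random graph by \Cref{fact:contiguity}.

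\textbf{Expectation and choice of $\gamma$.} Fix $\tau$ with $\abs{\tau}=t^\star=a\abs{X}$, and write $m=\abs{X}\Delta$ for the number of half-edges on each side. A vertex $y$ is uncovered exactly when none of the $t^\star\Delta$ half-edges of $C(\tau)$ lands in $C(y)$. This has probability $\binom{m-\Delta}{t^\star\Delta}/\binom{m}{t^\star\Delta}=(1-a)^{\Delta}(1+o(1))=e^{-a\Delta(1+O(a))}$. With $a\Delta=\frac{\log\Delta}{2+\gamma}$, the expectation is $\Delta^{-1/(2+\gamma)}\abs{Y}(1+o_\Delta(1))=\Delta^{-1/2}\cdot \Delta^{\gamma/(2(2+\gamma))}\abs{Y}(1+o_\Delta(1))$. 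I choose $\gamma=\gamma(\ell,\Delta)$ so that this equals $(\ell+\tfrac12)\frac{\log\Delta}{\sqrt\Delta}\abs{Y}$. Solving gives $\gamma\approx \frac{4\log((\ell+1/2)\log\Delta)}{\log\Delta}=o_\Delta(1)$. It then remains to show that the count deviates from its mean by less than $\tfrac12\frac{\log\Delta}{\sqrt\Delta}\abs{Y}$, simultaneously for all $\tau$.

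\textbf{Concentration and union bound (main obstacle).} The number of sets is $\binom{\abs{X}}{t^\star}\le \exp\bigl(a\log(e/a)\abs{X}\bigr)=\exp\bigl(O\bigl(\tfrac{\log^2\Delta}{\Delta}\bigr)\abs{X}\bigr)$. Plain Azuma is too weak here. Exposing the $t^\star\Delta$ half-edges of $C(\tau)$ one at a time gives a Doob martingale with bounded differences, but the bound is $\exp(-\Omega(\varepsilon^2\abs{X}/(a\Delta)))$ with $\varepsilon=\tfrac12\frac{\log\Delta}{\sqrt\Delta}$. That is only $\exp(-\Omega(\frac{\log\Delta}{\Delta})\abs{X})$, which loses to the entropy by a $\log\Delta$ factor.

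Instead I would use Freedman's inequality, exploiting the small conditional variance. At each step, the change in the number of uncovered $y$ is essentially a Bernoulli indicator that the matched half-edge hits a currently uncovered cloud. Its probability is about the current uncovered fraction, which is at least $\Delta^{-1/2}$-ish but at most $1$. I would bound the cumulative predictable variance more carefully by integrating the uncovered fraction $e^{-s/\abs{X}}$ along the exposure. This gives $\sum_s p_s\lesssim \abs{X}(1-e^{-a\Delta})\le \abs{X}$. Hence the total variance is $V\lesssim\abs{X}$. The Doob increments also carry the effect on the future; I would handle this by switching to the equivalent martingale for the count $U_s$ minus its conditional expectation, which has bounded increments.

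If this still only gives $\exp(-\varepsilon^2\abs{X})=\exp(-\frac{\log^2\Delta}{4\Delta}\abs{X})$, that is comparable to the entropy $\frac{\log^2\Delta}{2\Delta}$. In that case I would widen the window: since $\ell$ is arbitrary, I replace the target deviation by $c\ell\frac{\log\Delta}{\sqrt\Delta}$ (adjusting which multiple of $\frac{\log\Delta}{\sqrt\Delta}$ the mean is tuned to), or restrict the variance to the late stage where the uncovered fraction is $O(\Delta^{-1/2}\log\Delta)$. The early stage, where the fraction is large, is instead controlled deterministically by the relation $a\Delta=\Theta(\log\Delta)$. With the deviation probability below $\exp(-C\frac{\log^2\Delta}{\Delta}\abs{X})$ for a large enough constant $C$, the union bound over $\binom{X}{t^\star}$ closes. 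Contiguity then finishes the proof.
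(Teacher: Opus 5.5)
\textbf{There is a genuine gap, in the concentration step.} Your skeleton matches the paper's: reduce by monotonicity to $|\tau|=t^\star$, tune $\gamma$ so the mean of $|Y\setminus N[\tau]|$ is $(\ell+\frac12)\frac{\log\Delta}{\sqrt\Delta}|Y|$, union bound over $\binom{X}{t^\star}$, apply Freedman, and finish with contiguity. The gap is the concentration step, which is the whole content of the lemma.

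You need a tail bound $\exp(-C\frac{\log^2\Delta}{\Delta}|X|)$ at deviation $t=\frac{\log\Delta}{2\sqrt\Delta}|Y|$, with $C$ exceeding the entropy constant $\frac{1}{2+\gamma}\approx\frac12$. Nothing in the proposal delivers this.
\begin{itemize}
\item With $V\lesssim|X|$, Freedman gives only $\exp(-\frac{\log^2\Delta}{8\Delta}|X|)$. Even the exact undamped variance $\sum_s p_s(1-p_s)\approx|X|/2$ gives only $C=\frac14$. Either way the union bound fails.
\item The estimate $\sum_s p_s\lesssim|X|$ holds along typical trajectories, not almost surely as Freedman needs. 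Otherwise you must bound $\Pr[V>v]$ separately.
\item The ``count minus its conditional expectation'' martingale is not equivalent to the Doob martingale. $U_m-\E U_m$ also contains the fluctuation of the random compensator $\sum_s p_s$, which you never control.
\end{itemize}

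The fallbacks do not help either. Widening the window proves a different, weaker statement. At $|\tau|=t^\star$ both parts apply to the same sets with the same $\ell$ and $\gamma$, so the admissible window $[\ell,\ell+1]\frac{\log\Delta}{\sqrt\Delta}|Y|$ has width independent of $\ell$. Also, the early stage of the exposure cannot be ``controlled deterministically.'' Its fluctuations are random; only their influence on the final count is small.

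The missing idea is to keep the Doob martingale $Z_i=\E[|N[\tau]|\mid\pi_1,\dots,\pi_i]$ you started with. Use exactly the ``effect on the future'' that made you abandon it.
\begin{itemize}
\item Given $\pi_1,\dots,\pi_{i-1}$, the conditional expectation after revealing $\pi_i$ takes only two values. Which one depends on whether $\pi_i$ lands in a fresh cloud or an already-hit one.
\item Swapping the two landing spots couples the two scenarios. It shows the two values differ by at most the probability that the fresh cloud escapes the remaining $\Delta|\tau|-i$ half-edges, i.e.\ at most $(1-\frac{\Delta-1}{\Delta|Y|})^{\Delta|\tau|-i}$.
\item So the increments are bounded by $1$. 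Using $p(1-p)\le\frac14$, the predictable variation is at most $\frac14\sum_i(1-\frac{\Delta-1}{\Delta|Y|})^{2(\Delta|\tau|-i)}\le(\frac18+o_\Delta(1))|Y|$ with probability $1$.
\item Freedman at $t=\frac{\log\Delta}{2\sqrt\Delta}|Y|$ then gives $\exp(-(1-o_\Delta(1))\frac{\log^2\Delta}{\Delta}|Y|)$. This is about twice the entropy exponent, so the union bound closes.
\end{itemize}
This geometric damping is the rigorous form of your intuition that early fluctuations matter less.
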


\begin{proof}
   It suffices to prove this statement for all sets $\tau$ of size $\frac{\log \Delta}{(2+\gamma) \Delta} \abs{X}$, since the addition or removal of vertices causes the neighborhood to grow or shrink respectively. 
    
    First, by \Cref{lem:singlebipartite}
    \begin{equation}\E\sbra{\abs{Y \setminus N\sbra{\tau}}} = \frac{1 - o_\Delta(1)}{\Delta^{1/(2+\gamma)}}\abs{Y} =\frac{2\ell+1}{2} \cdot \frac{\log \Delta}{\sqrt{\Delta}}\abs{Y} \mcom
    \label{eq:EYNtau}
    \end{equation}
    where $\gamma=\Theta(\frac{\log\log \Delta + \log\ell}{\log \Delta})$ is chosen such that the equality holds.
    
    Our goal is now to prove concentration of $|N\sbra{\tau}|$ around this expectation (which is equivalent to proving concentration of $|Y \setminus N\sbra{\tau}|$), sufficient enough to union bound across all $\tau \subseteq X$ with $\abs{\tau} = \frac{\log \Delta}{(2+\gamma) \Delta} \abs{X}$.
    
    We start by fixing $\abs{\tau} = \frac{\log \Delta}{(2+\gamma) \Delta}\abs{X}$ and bounding the total number of sets in $\binom{X}{\abs{\tau}}$ by
    \begin{equation}\label{eq:binomest}
        \binom{\abs{X}}{\abs{\tau}} \leq \pbra{\frac{e\abs{X}}{\abs{\tau}}}^{\abs{\tau}} \leq \pbra{\frac{e(2+\gamma)\Delta}{\log \Delta}}^{\abs{\tau}} \leq \exp\pbra{\frac{\log^2 \Delta}{(2+\gamma)\Delta} \abs{X}}\mcom
    \end{equation}
    using standard binomial estimates and the assumption $\gamma < 1$. Using \eqref{eq:EYNtau} and \eqref{eq:binomest} with the following lemma will be sufficient to finish.

    \begin{lemma}
        \label{lem:martingaletail}
        For any $t>0$, we have
        $$\Pr\sbra{ \abs{|N[\tau]| - \E\sbra{|N[\tau]|}} > t}  \leq 2\exp\pbra{ - \frac{1}{2} \cdot \frac{t^2 }{\pbra{\frac{1}{8} + o_\Delta(1)}|Y| + t/3} }\mper$$
    \end{lemma}

    To see why, we plug $t =  \frac{\log \Delta}{2\sqrt{\Delta}}\abs{Y}$ into \Cref{lem:martingaletail} and apply a union bound over the entirety of $\binom{X}{\abs{\tau}}$ using our estimate \eqref{eq:binomest} to yield
    \begin{align*}
         \Pr\sbra{\forall \tau , \, \abs{\abs{N\sbra{\tau}} - \E\sbra{|N\sbra{\tau}|}} \geq  \frac{\log \Delta}{2\sqrt{\Delta}} |Y| } &\underset{\eqref{eq:binomest}}{\leq}  2\exp\pbra{\frac{\log^2 \Delta}{(2+\gamma) \Delta} \abs{X}} \cdot \exp\pbra{ - (1 - o_\Delta(1)) \cdot \frac{\log^2 \Delta}{\Delta  }|Y| } \\
         & \underset{|X|=|Y|}{\leq} \exp\pbra{ - \pbra{\frac{1}{2} - o_\Delta(1)} \cdot \frac{\log^2 \Delta}{\Delta  }|Y| }\mper
    \end{align*}

Conditioned on this event not happening, we have that for all such $\tau$, $|Y \setminus N[\tau]|$ satisfies
\begin{align*}
    \abs{Y \setminus N[\tau]} \underset{\eqref{eq:EYNtau}}{\geq} \left(\frac{2\ell+1}{2} - \frac{1}{2} \right) \frac{\log \Delta}{\sqrt{\Delta}}\abs{Y} \geq \frac{\ell\log \Delta}{\sqrt{\Delta}} \abs{Y}\mcom
\end{align*}
and similarly 
\begin{align*}
    \abs{Y \setminus N[\tau]} \underset{\eqref{eq:EYNtau}}{\leq}  \left(\frac{2\ell+1}{2} + \frac{1}{2} \right) \frac{\log \Delta}{\sqrt{\Delta}} \abs{Y} \leq \frac{(\ell+1)\log \Delta}{\sqrt{\Delta}}\abs{Y}\mcom
\end{align*}
as desired.
\end{proof}

\begin{proof}[Proof of \Cref{lem:martingaletail}]
    Define the following edge exposure martingale for the pairing model. Perhaps after renaming, we assume the $\frac{1}{\Delta}$-vertices in the cloud $C(\tau)$ are numbered $1,\dots,\Delta|\tau|$. For any such $i = 1, \dots, \Delta\abs{\tau}$ then, let $\pi_i \in Y \times [\Delta]$  be the match for $i$. Then, define
    \begin{align}\label{eq:Zidef}
        Z_i \coloneqq \E_{\pi_{i+1}, \dots, \pi_{\Delta\abs{\tau}}}\sbra{|N_\pi\sbra{\tau}| \mid \pi_{i}, \dots, \pi_1}\mper
    \end{align}
    Observe that $Z_0 = \E \sbra{|N\sbra{\tau}|}$ while $Z_{\Delta\abs{\tau}} = \abs{N\sbra{\tau}}$, i.e.~it is the number of neighbors of $\tau$ in this realization of the pairing model. As such, it is enough to show that
    \begin{equation}
        \Pr\sbra{|Z_{\Delta|\tau|} - Z_0| \geq t}\leq 2\exp\pbra{ - \frac{1}{2} \cdot \frac{t^2 }{\pbra{\frac{1}{8} + o_\Delta(1)}|Y| + t/3} }\mper
    \end{equation}
    To do this, we use Freedman's concentration inequality for martingales.
    \begin{lemma}[Freedman's inequality \cite{Freedman75}]
        Let $\{\Delta_i \coloneqq Z_{i} - Z_{i-1}\}_{i \in [m]}$ be a martingale difference sequence satisfying $\abs{\Delta_i} \leq 1$ with probability 1 for every $i \in [m]$. Then for all $t, v>0$ we have
        \begin{align*}
            \Pr\sbra{|Z_m- Z_0| > t \text{ and } V \leq  v } \leq 2\exp\pbra{- \frac{ t^2}{2(v + t/3)}}\mcom
        \end{align*}
        where $V = \sum_{i = 1}^m \E\sbra{(Z_i - Z_{i-1})^2 \mid Z_{i-1}}$ is the predictable quadratic variation of $\{Z_i\}$.
    \end{lemma}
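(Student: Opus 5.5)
The plan is the standard exponential supermartingale argument. I prove the one-sided bound $\Pr[Z_m - Z_0 > t,\ V \le v] \le \exp\pbra{-\frac{t^2}{2(v+t/3)}}$. Applying the same bound to the martingale $-Z_i$, whose differences are also bounded by $1$ and which has the same quadratic variation, gives the other tail, and a union bound supplies the factor $2$. Throughout I read the conditioning in $V$ as conditioning on the natural filtration $\calF_{i-1}$ generated by $\pi_1,\dots,\pi_{i-1}$. This is the form that is needed, and it is also what the application to \eqref{eq:Zidef} uses.

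The first step is a one-step moment generating function bound. Fix $\theta > 0$ and set $\phi(\theta) = e^\theta - 1 - \theta$. The function $x \mapsto (e^{\theta x} - 1 - \theta x)/x^2$ is increasing in $x$, so for $x \le 1$ we have $e^{\theta x} \le 1 + \theta x + \phi(\theta) x^2$. Apply this to $\Delta_i$, which satisfies $|\Delta_i| \le 1$ and $\E[\Delta_i \mid \calF_{i-1}] = 0$. This gives
\begin{align*}
\E\sbra{e^{\theta \Delta_i} \mid \calF_{i-1}} \le 1 + \phi(\theta)\,\E\sbra{\Delta_i^2 \mid \calF_{i-1}} \le \exp\pbra{\phi(\theta)\,\E\sbra{\Delta_i^2 \mid \calF_{i-1}}}\mper
\end{align*}
Now let $V_k = \sum_{i \le k} \E[\Delta_i^2 \mid \calF_{i-1}]$, which is predictable, and define $M_k = \exp\pbra{\theta(Z_k - Z_0) - \phi(\theta) V_k}$. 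The display above shows that $M_k$ is a supermartingale with $M_0 = 1$, and hence $\E[M_m] \le 1$.

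On the event $\{Z_m - Z_0 > t,\ V_m \le v\}$ we have $M_m \ge \exp(\theta t - \phi(\theta) v)$. Markov's inequality therefore gives $\Pr[\cdot] \le \exp(-\theta t + \phi(\theta) v)$. Since $V$ is evaluated only at the final time, no stopping-time argument is needed.

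The final step optimizes over $\theta$. Taking $\theta = \log(1 + t/v)$ yields the bound $\exp(-v\,h(t/v))$, where $h(u) = (1+u)\log(1+u) - u$. It then remains to verify the elementary inequality $h(u) \ge \frac{u^2}{2(1+u/3)}$ for $u \ge 0$. I expect this calculus inequality to be the only fiddly point. I would prove it by noting that the two sides and their first derivatives agree at $u = 0$, and then comparing second derivatives: $h''(u) = \frac{1}{1+u}$, while the second derivative of the right-hand side is $\frac{1}{(1+u/3)^3}$, which is smaller for all $u \ge 0$. Substituting $u = t/v$ gives $v\,h(t/v) \ge \frac{t^2}{2(v + t/3)}$, as claimed.
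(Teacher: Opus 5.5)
Your proof is correct. The paper gives no proof of this lemma; it is simply quoted from \cite{Freedman75}. Your argument is the standard exponential-supermartingale proof behind Freedman's theorem, and each step checks out:
\begin{itemize}
\item the supermartingale $\exp(\theta(Z_k-Z_0)-\phi(\theta)V_k)$ with $\phi(\theta)=e^\theta-1-\theta$;
\item Markov's inequality at the final time;
\item the choice $\theta=\log(1+t/v)$, which gives the Bennett-type bound $\exp(-v\,h(t/v))=\bigl(\frac{v}{t+v}\bigr)^{t+v}e^{t}$;
\item the relaxation $h(u)\ge \frac{u^2}{2(1+u/3)}$. Your second-derivative comparison is right, since $\frac{d^2}{du^2}\frac{u^2}{2(1+u/3)}=(1+u/3)^{-3}$ and $(1+u/3)^3=1+u+\frac{u^2}{3}+\frac{u^3}{27}\ge 1+u$.
\end{itemize}

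Freedman's own theorem is stronger in one respect: it is uniform in time. It bounds the probability that $Z_k-Z_0\ge t$ and $V_k\le v$ for \emph{some} $k\le m$. Getting that requires optional stopping of the same supermartingale at the first crossing time. As you note, the fixed-time statement used here does not need it.

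Your reading of the conditioning in $V$ as conditioning on the filtration is the right one, and it is actually needed. The supermartingale step uses $\E[e^{\theta\Delta_i}\mid\mathcal{F}_{i-1}]\le\exp(\phi(\theta)c_i)$, where $c_i$ is the $i$th summand of $V$. With $c_i=\E[\Delta_i^2\mid Z_{i-1}]$ in place of $\E[\Delta_i^2\mid\mathcal{F}_{i-1}]$, that step does not go through in general.

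This costs nothing in the paper's application. The proof of \Cref{lem:martingalediff} fixes $\pi_1,\dots,\pi_{i-1}$ and bounds the filtration-conditional variance by a deterministic quantity. Hence $V\le v$ holds almost surely under either reading.
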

To apply the inequality, we first establish the following lemma for our specific martingale.
\begin{lemma}
    \label{lem:martingalediff}
    For the Doob martingale $\{Z_i\}_{1\leq i\leq \Delta|\tau|}$ defined in \eqref{eq:Zidef} and all $1\leq i\leq \Delta|\tau|$ we have
    \begin{enumerate}
        \item $|Z_i-Z_{i-1}|\leq 1$ with probability 1.
        \item $\E\sbra{(Z_i-Z_{i-1})^2|Z_{i-1}} \leq \frac{1}{4} \pbra{1-\frac{\Delta-1}{\Delta|X|}}^{2(\Delta|\tau|-i)}$ with probability 1.
    \end{enumerate}
\end{lemma}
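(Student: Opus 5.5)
We need to prove both parts of the lemma for the Doob martingale $Z_i$ that exposes the matches of the $\frac{1}{\Delta}$-vertices in $C(\tau)$ one at a time. The approach is to write $|N[\tau]|$ as a sum of indicators over $y\in Y$ and to track how revealing $\pi_i$ changes the conditional expectation of each indicator. After $\pi_1,\dots,\pi_{i-1}$ are revealed, let $M_{i-1}$ be the set of $y$ already hit. The remaining $m=\Delta|\tau|-i$ copies in $C(\tau)$ after step $i$ are matched uniformly without replacement among the unused $\frac{1}{\Delta}$-vertices of $Y$. Write $q_y$ for the number of unused copies of $y$ and $R$ for the total number unused after step $i$. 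Then
\[
Z_i=|M_i|+\sum_{y\notin M_i} \bigl(1-p(q_y)\bigr), \qquad p(q)=\binom{R-q}{m}\Big/\binom{R}{m},
\]
where $p(q)$ is the probability that none of the remaining $m$ copies hits a given $y$ with $q$ unused copies.

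\textbf{Part 1 (bounded differences).} We couple the choice of $\pi_i$ with its alternatives. Revealing $\pi_i=(y,j)$ has two cases. If $y$ is a new vertex, it contributes $1$ instead of its expectation $1-p(\Delta)$. If $y$ is old, its unused count drops by one. Either way, the change relative to $Z_{i-1}$ is the value of the realized outcome minus the average over outcomes. A cleaner way to see this is the standard switching coupling: two completions that differ only in the target of $\pi_i$ differ by at most one swap of targets, which changes $|N[\tau]|$ by at most $1$. Hence $|Z_i-Z_{i-1}|\le 1$.

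\textbf{Part 2 (variance).} The increment is essentially determined by whether $\pi_i$ lands on a fresh vertex. Define $D=Z_i-Z_{i-1}$. Conditioning on the type of $\pi_i$, $D$ takes value $a$ (fresh hit) with probability $r$ and value $b$ (old hit) otherwise. The martingale property gives $ra+(1-r)b=0$, so
\[
\E[D^2]=r(1-r)(a-b)^2\le \tfrac14 (a-b)^2.
\]
Next we compute $a-b$. The difference between a fresh hit and an old hit is, to leading order, the difference in probability that the hit vertex would otherwise remain uncovered by the remaining $m$ future matches. A fresh hit removes one copy from an untouched vertex, and the effect on the other vertices' $p$-values is identical across the two cases up to lower-order terms. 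So $|a-b|$ is bounded by the probability that a specific vertex with $\Delta-1$ or fewer remaining relevant copies avoids all $m$ future draws, which is
\[
\le \prod_{s}\Bigl(1-\frac{\Delta-1}{R_s}\Bigr)\le \Bigl(1-\frac{\Delta-1}{\Delta|X|}\Bigr)^{m},
\]
using $R_s\le \Delta|Y|=\Delta|X|$. Squaring gives the stated bound $\frac14(1-\frac{\Delta-1}{\Delta|X|})^{2(\Delta|\tau|-i)}$.

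\textbf{Main obstacle.} The delicate step is making the comparison $a-b$ exact rather than heuristic. Hitting an old vertex versus a fresh vertex also changes which copy counts remain, and this shifts the avoidance probabilities of all other vertices slightly. I would handle this with an explicit coupling: map completions after a fresh hit to completions after an old hit by swapping the roles of the two chosen copies. Under this coupling the two realizations of $|N[\tau]|$ differ by an indicator, namely whether the freshly hit vertex is never hit by the remaining $m$ draws. Its expectation is then bounded by the without-replacement avoidance product above, which yields both the $\frac14$ factor and the exponential decay.
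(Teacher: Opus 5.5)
Your proposal is correct and follows essentially the same route as the paper. You show the increment is two-valued (fresh versus already-hit target), so the conditional variance is at most $\frac14(a-b)^2$, and you use the swap coupling to bound $a-b$ by a without-replacement avoidance probability. Your ``lower-order'' hedging in Part 2 is unnecessary: your own formula for $Z_i$ already gives $a-b=\binom{R-\Delta}{m}/\binom{R}{m}$ exactly, since the terms for the other untouched vertices coincide in the two cases. Also, the avoided vertex has at least (not at most) $\Delta-1$ relevant copies, which is the direction the product bound needs.
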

Now, by the second conclusion of \Cref{lem:martingalediff}, we have with probability 1,
\begin{align*}
    \sum_{i=1}^{\Delta|\tau|} \E\sbra{(Z_i-Z_{i-1})^2 \mid Z_{i-1}} &\leq  \frac{1}{4} \sum_{i = 1}^{\Delta|\tau| } \pbra{1 - \frac{\Delta - 1}{\Delta |Y|}}^{2 (\Delta|\tau| - i)} \\
        &\leq \frac14\cdot \frac{1}{1-(1-\frac{\Delta-1}{\Delta |Y|})^2} 
         \leq  \frac{1}{\frac{8}{|Y|} - \frac{8}{\Delta|Y|}-\frac{4}{|Y|^2}} 
         = \pbra{\frac{1}{8} + o_{\Delta}(1)} |Y|\mper
    \end{align*}
    That is, $V \leq \pbra{\frac{1}{8}+o_\Delta(1)}\abs{Y}$ with probability 1. Thus, combining it with the first conclusion of \Cref{lem:martingalediff} allows us to use Freedman's inequality to yield
    \begin{align*}
        \Pr\sbra{\abs{\abs{N\sbra{\tau}}- \E\sbra{|N\sbra{\tau}|}} > t}
        & \leq 2\exp\pbra{ - \frac{1}{2} \cdot \frac{t^2 }{\pbra{\frac{1}{8} + o_\Delta(1)}|Y| + t/3}} \mper
    \end{align*}
as desired.
\end{proof}

\begin{proof}[Proof of \Cref{lem:martingalediff}]
    Fix exposure results $\{\pi_j\}_{j = 1}^{i-1}$. For $(y,k) \in C(Y) \setminus \{\pi_j\}_{j = 1}^{i-1}$, we define 
    \begin{align*}
        f((y,k)) \coloneqq \E \sbra{|N\sbra{\tau}| \mid \pi_i = (y,k)}\mper
    \end{align*}
    Then, the two quantities that we want to bound are exactly $\max_{(y,k)\in C(Y) \setminus \{\pi_j\}_{j=1}^{i-1}} |f((y,k)) - \E\sbra{f}|$ and $\Var\sbra{f}$.
    
    Our first observation is that $|\text{image}(f)|\leq 2$, that is, it only takes two values which depend on whether its input $\frac{1}{\Delta}$-vertex $(y,k)$ has that the cloud $C(y)$ is completely unmatched thus far. Formally, let 
    \begin{align*}
        N \coloneqq \{(y,k) \mid C(y) \cap \{\pi_j\}_{j = 1}^{i-1}=\varnothing\}\mcom \;\; O \coloneqq C(Y)\setminus N\setminus \{\pi_j\}_{j = 1}^{i-1}\mper
    \end{align*}
    Informally, $N$ is the set of $\frac{1}{\Delta}$-vertices such that none of the elements of their clouds have appeared in $\pi_1,\dots,\pi_{i-1}$ and $O$ are the remaining unmatched $\frac{1}{\Delta}$-vertices.
    Observe that $f$ is {\em constant} in $N$ and $O$  by symmetry. We will show that for any  $(y_1,k_1) \in N$, $(y_2,k_2) \in O$, we have 
    \begin{equation}\label{eq:Zboundgoal1}|f((y_1,k_1)) - f((y_2,k_2))| \leq \pbra{1-\frac{\Delta-1}{\Delta|Y|}}^{\Delta|\tau|-i}\mper
    \end{equation}
    Since this quantity is less than $1$, it immediately implies the first conclusion.
    To see the second conclusion, suppose $|\text{image}(f)| = \{a ,b\}$; observe that $\Var\sbra{f}=p(1 - p) (a - b)^2$ where $a=f((y_1,k_1))$ and $b=f((y_2,k_2))$.
Then by \eqref{eq:Zboundgoal1}
$$\Var\sbra{f} \underset{p(1-p) \leq 1/4}{\leq} \frac{1}{4}(f((y_1,k_1)) - f((y_2,k_2)))^2 \leq \frac{1}{4} \pbra{1 - \frac{\Delta - 1}{\Delta |Y|}}^{2(\Delta|\tau| - i)} \mcom $$
as desired.
It remains to prove \eqref{eq:Zboundgoal1}. To start, we define $A, B: [\Delta|\tau|]\to  C(Y)$ to be uniformly random  one-to-one mappings such that 
    \begin{itemize}
        \item For any $1\leq j\leq i-1$, $A_j=B_j=\pi_j$, and
        \item $A_i=(y_1,k_1), B_i=(y_2,k_2)$.
    \end{itemize}

Let $N(A)$ and $N(B)$ be the number of neighbors of $\tau$ in the corresponding mappings.
    $$ f((y_1,k_1)) - f((y_2,k_2)) = \E\sbra{|N(A)|} - \E\sbra{|N(B)|}\mcom$$

    We define a joint probability distribution $\nu$ on $(A,B)$ as follows: for a given $A$ chosen uniformly at random, we let
    $$ B_j=\begin{cases}
        (y_2,k_2) & \text{if } j=i\\
        (y_1,k_1) & \text{if } A_j=(y_2,k_2)\\
        A_j &\text{otherwise}
        \end{cases}
    $$
    It is simple exercise that $B$ is distributed uniformly among all one-to-one mappings defined above.

    Define the event $$\mathcal{F} \coloneqq \{(A,B) \mid A_j \neq (y_2,k_2) \text{ for all } j \in [\Delta\abs{\tau}] \} \mper$$ 
    Observe that if $\mathcal{F}$ does not occur, then the image of $A,B$ are exactly the same so,  $\abs{N(A)} = \abs{N(B)}$. 

    On the other hand, if $\mathcal{F}$ occurs, then if $\{A_i\}_{i+1\leq j\leq\Delta|\tau|}$ all avoid $y_1$, then  $\abs{N(A)} = \abs{N(B)} + 1$, and otherwise $\abs{N(A)}=\abs{N(B)}$. So, we write
    \begin{align*}
        f((y_1,k_1)) - f((y_2,k_2)) & = \E_{(A,B)\sim\nu}\sbra{|N(A)|-|N(B)|} \\
        &\leq \E_{(A,B)\sim\nu} \sbra{|N(A)| - |N(B)| \mid \mathcal{F} } \\
        & \leq \prod_{j = i+1}^{\Delta|\tau|} \pbra{1 - \frac{\Delta - 1}{\Delta |Y| - j}} \leq \pbra{1 - \frac{\Delta - 1}{\Delta |Y|}}^{\Delta|\tau| - i}\mper
    \end{align*}
This proves \eqref{eq:Zboundgoal1} as desired. 
\end{proof}

\begin{proposition}
        \label{lem:singlebipartite}
        For any $\gamma < 1$, the bipartite pairing model on $(X,Y)$, and $\tau \subseteq X$ with $\abs{\tau} = \frac{\log \Delta}{(2+\gamma) \Delta} \abs{X}$ we have
        \begin{align*}
            (1 - o_\Delta (1))\frac{1}{\Delta^{1/(2+\gamma)}}\leq \Pr\sbra{y \notin N\sbra{\tau}} \leq \frac{1}{\Delta^{1/(2+\gamma)}}, \quad \forall y\in Y\mper
        \end{align*}
    \end{proposition}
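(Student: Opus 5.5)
We compute the probability exactly in the bipartite pairing model and then estimate it. Fix $y\in Y$ and write $m=\Delta\abs{X}=\Delta\abs{Y}$ for the number of $\frac{1}{\Delta}$-vertices on each side, and $s=\Delta\abs{\tau}$ for the size of the cloud $C(\tau)$. The pairing is a uniformly random bijection between $C(X)$ and $C(Y)$. The event $y\notin N\sbra{\tau}$ says exactly that none of the $\Delta$ copies in $C(y)$ is matched into $C(\tau)$. Equivalently, the $\Delta$ partners of $C(y)$ form a uniformly random $\Delta$-subset of $C(X)$ that avoids the $s$ elements of $C(\tau)$. Hence
\begin{align*}
    \Pr\sbra{y\notin N\sbra{\tau}} = \frac{\binom{m-s}{\Delta}}{\binom{m}{\Delta}} = \prod_{j=0}^{\Delta-1}\pbra{1-\frac{s}{m-j}}\mper
\end{align*}
Here $\frac{s}{m}=\frac{\abs{\tau}}{\abs{X}}=\frac{\log\Delta}{(2+\gamma)\Delta}=:p$.

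For the upper bound, each factor satisfies $1-\frac{s}{m-j}\leq 1-p$. Therefore the probability is at most $(1-p)^{\Delta}\leq e^{-p\Delta}=e^{-\log\Delta/(2+\gamma)}=\Delta^{-1/(2+\gamma)}$.

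For the lower bound, each factor satisfies $1-\frac{s}{m-j}\geq 1-\frac{s}{m-\Delta}=1-p\cdot\frac{m}{m-\Delta}=:1-p'$, where $p'=p(1+O(1/\abs{X}))$. We then use $\log(1-x)\geq -x-x^2$ for $x\leq 1/2$, which gives
\begin{align*}
    \Pr\sbra{y\notin N\sbra{\tau}}\geq \exp\pbra{-\Delta p'-\Delta p'^2}\mper
\end{align*}
The main term is $\Delta p'=\frac{\log\Delta}{2+\gamma}+O(\log\Delta/\abs{X})$, which yields $\Delta^{-1/(2+\gamma)}(1-o(1))$. The error term is $\Delta p'^2=O\pbra{\frac{\log^2\Delta}{\Delta}}=o_\Delta(1)$. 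Together these give the factor $(1-o_\Delta(1))$. The assumption $\gamma<1$ is used only to ensure $p\leq \frac{\log\Delta}{2\Delta}$ is small, so the logarithm bound applies.

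The only mildly delicate point is the lower bound, where both the quadratic term and the $\frac{m}{m-\Delta}$ correction must be absorbed into $o_\Delta(1)$; the computation above handles this directly. If exact divisibility matters, $\abs{\tau}$ is an integer, so $p$ is the exact ratio and no rounding issue arises.
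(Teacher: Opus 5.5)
Your proposal is correct and is essentially the paper's argument: you use the same product $\prod_{j=0}^{\Delta-1}\bigl(1-\frac{\Delta\abs{\tau}}{\Delta\abs{X}-j}\bigr)$, the upper bound via $(1-p)^\Delta\le e^{-p\Delta}$, and the lower bound obtained by replacing each factor with $1-\frac{\abs{\tau}}{\abs{X}-1}$ (your $1-p'$) and exponentiating. Your inequality $\log(1-x)\ge -x-x^2$ for $x\le 1/2$ is the correct one (the paper's $1-x\ge e^{-x-x^2/2}$ goes the wrong way, though harmlessly), but your closing remark about $\gamma$ is slightly off: $\gamma<1$ only gives the lower bound $p>\frac{\log\Delta}{3\Delta}$, whereas the smallness of $p$ that you need comes from $\gamma\ge 0$ (or $\gamma$ bounded away from $-2$), as in the intended application $\gamma=o_\Delta(1)>0$.
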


    \begin{proof}
        We consider the probability that all $\frac{1}{\Delta}$-vertices in $C(y)$ avoid matching into $\tau$ sequentially. The probability that $(y,i)$ avoids $\tau$, conditioned on all previous copies avoiding $\tau$, is exactly $1 - \frac{\Delta \abs{\tau}}{\Delta \abs{X} - i}$. Therefore, the probability is given by
$$\Pr\sbra{y \notin N\sbra{\tau}} = \prod_{i=0}^{\Delta-1} \pbra{1 - \frac{\Delta \abs{\tau}}{\Delta \abs{X} - i}} \leq \left(1-\frac{\abs{\tau}}{|X|}\right)^\Delta \leq e^{-\Delta\abs{\tau}/|X|} = \Delta^{-1/(2+\gamma)}\mper $$
Similarly, 
\begin{align*}
    \Pr\sbra{y \notin N\sbra{\tau}} \geq \left(1-\frac{\abs{\tau}}{|X|-1}\right)^{\Delta} \underset{1-x\geq e^{-x-x^2/2}}{\geq} e^{-\frac{\Delta\abs{\tau}}{|X|-1} - \frac{\Delta\abs{\tau}^2}{2(|X|-1)^2}}  \geq \frac{1-o_\Delta(1)}{\Delta^{1/(2+\gamma)}}\mper
\end{align*}

    as desired.
\end{proof}

We now prove a slightly tighter bound for the expansion to be used when $\abs{\tau}$ is well above the threshold $\frac{\log \Delta}{(2+o_\Delta(1))\Delta}$ seen in \Cref{lem:bipartiteconc}.

\begin{lemma}
    \label{lem:bipartitelargeconc}
         For all $0 < a,b < 1$, there exists large enough $\Delta$ such that the following holds:
         for any random $\Delta$-regular bipartite $G = (X,Y,E)$, with high probability over the randomness of $G$, every $\tau \subseteq X$ of size $\abs{\tau} \geq \frac{|X|}{\Delta^a}$ satisfies $\abs{Y \setminus N\sbra{\tau}} < \frac{1}{\Delta^b} \abs{Y}\mper$
\end{lemma}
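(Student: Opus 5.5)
By monotonicity, it suffices to treat sets $\tau \subseteq X$ of size exactly $s \coloneqq \lceil |X|/\Delta^a \rceil$. Adding vertices to $\tau$ only shrinks $Y \setminus N\sbra{\tau}$. We then follow the template of \Cref{lem:bipartiteconc}: compute the expectation, prove concentration via \Cref{lem:martingaletail}, and union bound over $\binom{X}{s}$. We work in the bipartite pairing model and transfer the result by \Cref{fact:contiguity}.

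\textbf{Expectation.} Exactly as in the proof of \Cref{lem:singlebipartite}, each $y\in Y$ satisfies
\begin{align*}
\Pr\sbra{y\notin N\sbra{\tau}} \le \pbra{1-\frac{s}{|X|}}^{\Delta} \le \exp\pbra{-\Delta^{1-a}}.
\end{align*}
Hence $\E\sbra{|Y\setminus N\sbra{\tau}|}\le e^{-\Delta^{1-a}}|Y|$. Since $a<1$, this is much smaller than $\Delta^{-b}|Y|$ for large $\Delta$; in particular it is at most $\frac{1}{2\Delta^b}|Y|$.

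\textbf{Concentration.} Set $t=\frac{1}{2\Delta^b}|Y|$. The martingale argument of \Cref{lem:martingaletail} and \Cref{lem:martingalediff} never used the specific size of $\tau$. Indeed, the variance sum was bounded by a geometric series independent of $|\tau|$. So \Cref{lem:martingaletail} gives
\begin{align*}
\Pr\sbra{|Y\setminus N\sbra{\tau}|\ge \Delta^{-b}|Y|}\le 2\exp\pbra{-\frac{t^2}{2\pbra{(\frac18+o_\Delta(1))|Y|+t/3}}}\le 2\exp\pbra{-c\,\Delta^{-2b}|Y|}
\end{align*}
for an absolute constant $c>0$.

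\textbf{Union bound and main obstacle.} The number of sets is
\begin{align*}
\binom{|X|}{s}\le (e\Delta^a)^{s}=\exp\pbra{\frac{1+a\log\Delta}{\Delta^{a}}|X|}.
\end{align*}
The union bound therefore succeeds only if $\Delta^{-a}\log\Delta \ll \Delta^{-2b}$, that is, when $2b<a$. This is the main obstacle, because the lemma asserts the bound for all $0<a,b<1$, and Freedman's variance proxy $\Theta(|Y|)$ is too crude when $2b\ge a$.

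My first attempt at a fix is to exploit the tiny expectation. The bound in \Cref{lem:martingalediff} already shows each increment affects at most one $y$, and only when that $y$ is currently fully unmatched. A refined variance bound would replace $\frac14$ by $p(1-p)\le p$, where $p=|N|/|C(Y)\setminus\{\pi_j\}|$ is the fraction of untouched clouds. That fraction is itself at most about $e^{-\Delta^{1-a}}$ plus a lower-order error late in the process. The result is $V\lesssim e^{-\Omega(\Delta^{1-a})}\Delta|\tau|$, after which the exponent is governed by the $t/3$ term, giving roughly $\exp(-\Omega(\Delta^{-b}|Y|))$.

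If this is still insufficient against the entropy $\Delta^{-a}\log\Delta\,|X|$, I would instead bound the bad event directly. Fix $U\subseteq Y$ with $|U|=u=\Delta^{-b}|Y|$ and count pairings in which $C(\tau)$ avoids $C(U)$. This probability is at most
\begin{align*}
\pbra{1-\frac{u}{|Y|}}^{\Delta s}\le \exp\pbra{-\Delta^{1-a-b}|X|}.
\end{align*}
Now union bound over both $\tau$ and $U$, which costs $\exp\pbra{(\Delta^{-a}+\Delta^{-b})\log\Delta\,|X|}$. This entropy is dominated by the probability bound as soon as $\Delta^{1-a-b}\gg\Delta^{-\min(a,b)}\log\Delta$, which holds for all $a,b<1$ and large $\Delta$. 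This direct first-moment argument is cleaner than Freedman's inequality, and it is the route I would ultimately commit to.
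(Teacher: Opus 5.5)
Your final first-moment argument (union bound over pairs $(\tau,U)$ with $|\tau|=|X|/\Delta^a$ and $|U|=|Y|/\Delta^b$, using that $C(\tau)$ avoids $C(U)$ with probability at most $\exp(-\Delta^{1-a-b}|X|)$, which beats the entropy $(\Delta^{-a}+\Delta^{-b})\log\Delta\,|X|$ since $\max(a,b)<1$) is correct and is exactly the paper's proof. The Freedman-based attempts are unnecessary, since, as you observed, they only close when $b$ is small relative to $a$, so you can drop them.
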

\begin{proof} 
Let $\tau \subseteq X, T \subseteq Y$. Similar to the proof of \Cref{lem:singlebipartite}, the probability that $N\sbra{\tau}$ fails to intersect $T$ is at most $\exp\pbra{ -\frac{\Delta |\tau| |T|}{|Y|}}$. Taking a union bound over all sets, 
    \begin{align*}
        \binom{|X|}{|\tau|} \binom{|Y|}{|T|} \exp\pbra{ -\frac{\Delta |\tau| |T|}{|Y|}} & \leq \exp\pbra{ |\tau| \log (e |X|/ |\tau|) + |T| \log (e|Y|/|T|) - \frac{\Delta |\tau| |T|}{|Y|}  } \\
        &\leq \exp\pbra{|X|( \Delta^{-a} O(\log\Delta) + \Delta^{-b} O(\log\Delta) - \Delta^{1 - a - b}) }\\
        &\leq \exp(O_\Delta(\abs{X}))\mper
    \end{align*} 
where the inequality holds by letting $|\tau| = \frac{|X|}{\Delta^a}$ and $|T|= \frac{|Y|}{\Delta^b}$ and using $|X|=|Y|$. 
Since $a,b<1$, for $\Delta$ large enough with high probability  every set $T$ and $N[\tau]$ intersects.
\end{proof}

\subsection{Bounds on the size of common neighborhoods}

We prove here a pair of results bounding the number of vertices sharing common neighbors.

\begin{lemma}
    \label{lem:neighborconc}
    Let $G = (X, Y, E)$ be a random $\Delta$-regular bipartite graph. Then with high probability over the randomness of $G$, every vertex $u, v \in X, Y$ satisfies $|N(u) \cap N(v)| \leq 2$.
\end{lemma}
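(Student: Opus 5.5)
The plan is a first-moment computation in the bipartite pairing model, followed by Markov's inequality and transfer to the simple graph via \Cref{fact:contiguity}. Here $u,v$ lie on the same side (say both in $X$), since vertices on opposite sides share no neighbors in a bipartite graph. The event to rule out is that some pair $u\neq v\in X$ has three distinct common neighbors $y_1,y_2,y_3\in Y$. This is the same as the pairing containing a copy of $K_{2,3}$.

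First I count potential configurations. Choose $u,v\in X$ in at most $\binom{|X|}{2}$ ways and $y_1,y_2,y_3\in Y$ in at most $\binom{|Y|}{3}$ ways. Then choose specific copies: distinct copies in $C(u)$ and $C(v)$ for the three edges each, and two distinct copies in each $C(y_j)$. This gives at most $\Delta^{6}\cdot\Delta^{6}$ choices. For a fixed set of $6$ prescribed disjoint pairs between $C(X)$ and $C(Y)$, the probability that all appear in the uniform perfect matching is
\[
\prod_{j=0}^{5}\frac{1}{\Delta|X|-j}=O\big((\Delta|X|)^{-6}\big).
\]
The expected number of $K_{2,3}$ configurations is therefore
\[
O\big(|X|^2|Y|^3\Delta^{12}(\Delta|X|)^{-6}\big)=O\big(\Delta^{6}/|X|\big)=o(1),
\]
using $|X|=|Y|=n/2$ with $\Delta$ fixed. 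By Markov's inequality, with probability $1-o(1)$ no pair has three common neighbors.

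The same count applied to pairs in $Y$ gives the symmetric statement. Two bookkeeping points remain. A common neighbor counted via multi-edges is not an issue, since we condition on simplicity in the end. Contiguity (\Cref{fact:contiguity}) then transfers the with-high-probability statement from the pairing model to the uniform random $\Delta$-regular bipartite graph. Alternatively, one can note that the pairing model is simple with probability bounded below independently of $n$, so conditioning only inflates the $o(1)$ by a constant factor.

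I do not expect a real obstacle. The only care needed is in the exponent bookkeeping: $5$ vertices contribute $n^5$ and $6$ edges contribute $n^{-6}$, so the expectation is $O(1/n)$. The bound of $2$ is thus the first threshold at which the configuration count vanishes. By contrast, two common neighbors ($K_{2,2}$, i.e.\ $4$-cycles) occur $\Theta(1)$ times in expectation, which is why the lemma cannot claim $\le 1$.
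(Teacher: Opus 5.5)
Your proof is correct and follows essentially the same route as the paper: a first-moment (union) bound in the bipartite pairing model showing that some pair shares three common neighbors with probability $O(\Delta^{O(1)}/|X|)$, then transfer by contiguity. The only difference is bookkeeping. The paper fixes $u,v$, conditions on the matches of $C(u)$, and bounds by $\Delta^{O(1)}/|X|^3$ the chance that three copies of $v$ land in the same clouds, before union bounding over the $|X|^2$ pairs. You instead enumerate the full $K_{2,3}$ configuration, including $y_1,y_2,y_3$, explicitly.
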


    \begin{proof}

We show here an argument for subsets of $X$, the argument for subsets of $Y$ follow similarly and taking a union bound yields the full bound. Fix $x_1, x_2 \in X$ to be $u$ and $v$, which is without loss of generality since if one were from $Y$ they would not share neighbors. We argue $\Pr\sbra{\abs{N(x_1) \cap N(x_2)} > 2} \leq \frac{\poly(\Delta)}{\abs{X}^3}$ in the pairing model, after which the argument follows by union bound over the $2\binom{\abs{X}}{2} \leq \abs{X}^2$ such pairs and contiguity.

Let $\pi_{1, 1}, \dots, \pi_{1, \Delta}$ denote the matches for the cloud $C(x_1)$ and $\pi_{2, 1}, \dots, \pi_{2, \Delta}$ for $C(x_2)$ respectively. Within the pairing model, $\abs{N(x_1)\cap N(x_2)} > 2$ requires some three distinct pairs $(\pi_{1, \cdot}, \pi_{2, \cdot})$ where each pair is mapped to the same cloud in $Y$. Since there are only $\poly(\Delta)$ such pairs, it suffices to bound the probability $(\pi_{1,1}, \pi_{2,1}), (\pi_{1,1}, \pi_{2,1})$, and $(\pi_{1,1}, \pi_{2,1})$ all match by $\frac{\poly(\Delta)}{\abs{X}^3}$ and apply a union bound.

To finish, note that after conditioning on the matches $\pi_{1,1}, \pi_{1,2}$, and $\pi_{1,3}$, the matches for $\pi_{2,1}, \pi_{2,2}$, and $\pi_{2,3}$ are then a uniform subset of the remaining $\Delta\abs{Y}-3$ $\frac{1}{\Delta}$-vertices. Consider sampling the match for $\pi_{2,1}$. Crudely, the probability it collides with $\pi_{1,1}$ is at most $\frac{\Delta-1}{\Delta\abs{X}-3}$. Once conditioned on this, the probability $\pi_{2,2}$ collides with $\pi_{1, 2}$ with probability at most $\frac{\Delta-1}{\Delta\abs{X}-4}$. Similarly, for $\pi_{2,3}$ we get $\frac{\Delta-1}{\Delta\abs{X}-5}$. Simply multiplying achieves a bound of $\frac{\poly(\Delta)}{\abs{X}^3}$ as desired.
\end{proof}

\begin{lemma}
    \label{lem:neighborbound}
    Let $G = (X, Y, E)$ be a random $\Delta$-regular bipartite graph. Then with high probability over the randomness of $G$, there is no vertex $x \in X$ sharing at least 2 common neighbors with more than one other vertex.
\end{lemma}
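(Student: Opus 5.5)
We work in the bipartite pairing model and transfer the conclusion to random $\Delta$-regular bipartite graphs by contiguity (\Cref{fact:contiguity}), exactly as in \Cref{lem:neighborconc}. The bad event is that some $x \in X$ (and symmetrically some vertex of $Y$, handled the same way and union bounded) shares at least $2$ common neighbors with each of two distinct vertices $x', x'' \in X \setminus \{x\}$. We show that for fixed distinct $x, x', x''$ this has probability $\frac{\poly(\Delta)}{\abs{X}^4}$. A union bound over the at most $\abs{X}^3$ triples then gives total failure probability $\frac{\poly(\Delta)}{\abs{X}} = o(1)$.

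The configuration needs four collisions. Witnessing the event means choosing $y_1 \ne y_2 \in N(x) \cap N(x')$ and $y_3 \ne y_4 \in N(x) \cap N(x'')$. The pairs $\{y_1,y_2\}$ and $\{y_3,y_4\}$ may overlap, since a vertex of $Y$ can lie in all three neighborhoods. We split into cases by the size of this overlap.

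\emph{Case 1: the pairs are disjoint.} Fix copy indices: two copies of $x$ and one copy each of $x'$ go to $y_1,y_2$; two further copies of $x$ and one copy each of $x''$ go to $y_3,y_4$. There are $\poly(\Delta)$ choices of indices. Expose the matches of the four chosen copies of $x$ first. Then expose the four chosen copies of $x'$ and $x''$ one at a time, mirroring the argument of \Cref{lem:neighborconc}. Each must land in the cloud of a prescribed vertex of $Y$, which has at most $\Delta$ free slots out of at least $\Delta\abs{X} - O(\Delta)$ remaining $\frac{1}{\Delta}$-vertices. So each event has conditional probability at most $\frac{\Delta}{\Delta\abs{X}-O(\Delta)} = O(1/\abs{X})$, and multiplying gives $O(\poly(\Delta)/\abs{X}^4)$.

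\emph{Case 2: the pairs share exactly one vertex,} so three distinct $y$'s appear. This requires three copies of $x$ landing in three distinct clouds, together with two copies of $x'$ and two copies of $x''$ landing in specified clouds among them. That gives four collision events, so the probability is again $O(\poly(\Delta)/\abs{X}^4)$.

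\emph{Case 3: the pairs coincide,} so $y_1,y_2$ are common neighbors of all of $x, x', x''$. This gives the four events that two copies of $x'$ and two copies of $x''$ hit the clouds of the two targets of $x$'s copies. The probability is once more $O(\poly(\Delta)/\abs{X}^4)$.

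Once the right objects to expose first are identified, the counting is routine in every case. The main obstacle is bookkeeping: each case must produce four independent-looking collision factors of order $1/\abs{X}$, and this is where the count of four comes from. Three collisions would only give $\abs{X}^{-3}$, which is insufficient against the $\abs{X}^3$ triples, so this must be checked carefully. If the count came up short, the fallback is to sum over $x',x''$ more cleverly, using that $\E[\#\{x' : |N(x)\cap N(x')|\ge 2\}] = O(\poly(\Delta)/\abs{X})$.
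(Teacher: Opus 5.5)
Your proof is correct, but it is organized differently from the paper's.

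\emph{The paper's route.} It fixes $x$ and conditions on $N(x)$. It then uses \Cref{lem:neighborconc}, applied on both sides, to reduce ``$x$ shares two neighbors with two other vertices'' to the event that two \emph{distinct} pairs $\{y_1,y_2\},\{y_3,y_4\}\subseteq N(x)$ each have a second common neighbor. It bounds a single such pair-event by $\poly(\Delta)/|X|$, by exposing the remaining matches of $y_1$ and $y_2$. It then asserts that the two pair-events are pairwise independent, which gives $\poly(\Delta)/|X|^2$ per choice of pairs, and finishes with a union bound over the $\poly(\Delta)$ choices of pairs and the $|X|$ choices of $x$.

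\emph{Your route.} You union bound over the identities of the two offending vertices $x',x''$. You extract the four collision factors directly: expose the relevant copies of $x$ first, then those of $x'$ and $x''$. The exponent bookkeeping becomes $|X|^3\cdot|X|^{-4}$ instead of $|X|\cdot|X|^{-2}$.

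\emph{What your route buys.} It is more self-contained. It does not need \Cref{lem:neighborconc}: your Case~3 is exactly the configuration the paper excludes by applying that lemma to the pair $y_1,y_2\in Y$. It also avoids the independence claim, which is not exact in the pairing model. That claim is most delicate for overlapping pairs such as $\{y_1,y_2\},\{y_1,y_3\}$ (your Case~2), and justifying it properly requires the kind of sequential-exposure joint bound you carry out.

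Your case split is harmless but unnecessary. In every case, two copies of $x'$ and two of $x''$ must land in clouds determined by previously exposed copies of $x$. So one argument with $\Delta^{O(1)}$ index choices covers all overlap patterns.

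\emph{What the paper's route buys.} Organizing the count around pairs in $N(x)$ matches how the lemma is later used in \Cref{lem:secondeigen}, namely to bound row sums of the entries equal to $2$.
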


\begin{proof}
    Fix $x \in X$ and let $\ell_x$ count the number of vertices in $X$ sharing at least 2 neighbors with $x$. Fix $N(x)$ and for a pair $(y_1, y_2) \in \binom{N(x)}{2}$, let $I^{(x)}_{y_1, y_2}$ be the indicator that $y_1$ and $y_2$ share a neighbor other than $x$. For any two pairs $(y_1, y_2), (y_3, y_4) \in \binom{Y}{2}$, let $I^{(x)}_{(y_1, y_2), (y_3, y_4)}$ be the indicator for $I^{(x)}_{y_1, y_2} \wedge I^{(x)}_{y_3, y_4}$. By conditioning on no vertices sharing more than $2$ common neighbors as in \Cref{lem:neighborconc}, we can then bound
    \begin{equation}\label{eq:commonneighbor}
        \ell_x \leq 1+ \sum_{(y_1, y_2) \neq (y_3, y_4) \in \binom{N(x)}{2}} I^{(x)}_{(y_1, y_2), (y_3, y_4)}\mper
    \end{equation}
    Our goal is then to show for any particular $y_1, y_2, y_3, y_4 \in \binom{N(x)}{2}$,
    \begin{align*}
        \Pr\sbra{I^{(x)}_{(y_1, y_2), (y_3, y_4)} > 0} \leq \frac{\poly(\Delta)}{\abs{X}^2}\mper
    \end{align*}
    Given \eqref{eq:commonneighbor}, we can simply union bound over every choice of $y_1, y_2, y_3, y_4$, of which there are at most $\poly(\Delta)$, and then union bound over every choice $x \in X$ to get a high probability bound in $\abs{X}$ that $\ell_x \leq 1$ for all of $X$.

    To prove the statement then, we work in the pairing model. Note it suffices to simply show $\Pr\sbra{I^{(x)}_{y_1, y_2} = 1} \leq \frac{\poly(\Delta)}{\abs{X}}$, since all $I^{(x)}_{y_1, y_2}$ and $I^{(x)}_{y_3, y_4}$ are pairwise independent when $(y_1, y_2) \neq (y_3, y_4)$.
    
    Now, let $\pi_{i, 1}, \dots, \pi_{i, \Delta}$ denote the $\Delta$ $\frac{1}{\Delta}$-vertex matches for $y_i$ within the model, and assume without loss of generality $\pi_{i, 1} \in C(x)$. $I^{(x)}_{y_1, y_2}$ then requires two of these matches (one from $y_1$ and one from $y_2$) fall in the same cloud in $X$. We begin by fixing $\pi_{1, 2}, \dots, \pi_{1, \Delta}$ and then sampling $\pi_{2, 2}, \dots, \pi_{2, \Delta}$ uniformly from what remains. By a simple union bound, it suffices to show just $\pi_{2,2}$ does not fall in the same cloud as any match for $y_1$ with high probability. To see this, note there are at most $\Delta$ unique clouds to avoid, each with $\Delta$ $\frac{1}{\Delta}$-vertices. The probability can then be observed to be at most $\frac{\Delta}{\abs{X}-\Delta}$, accounting for the $\frac{1}{\Delta}$-vertices already matched.
\end{proof}

\subsection{Bounds on the average size of an independent set}

The following claim bounds the size of a typical independent set from the hardcore model at fugacity $\lambda > 0$.

\begin{lemma}
    Let $G = (V, E)$ be a graph and $\mu$ the corresponding hardcore model at fugacity $\lambda > 0$. For any vertex $v \in V$, it holds that
    \begin{align*}
        \Pr_{I \sim \mu}\sbra{v \in I} \leq \frac{\lambda}{1+\lambda}\mper
    \end{align*}
\end{lemma}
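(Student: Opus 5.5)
The plan is to condition on the configuration outside $v$ and compare the two possible states of $v$. Fix a vertex $v \in V$ and let $J$ range over independent sets of $G - v$. For each such $J$, the independent sets $I$ of $G$ with $I \setminus \{v\} = J$ are exactly two kinds:
\begin{itemize}
    \item $I = J$, which is always independent;
    \item $I = J \cup \{v\}$, which is independent if and only if $J \cap N(v) = \varnothing$.
\end{itemize}

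First consider a fixed $J$ with $J \cap N(v) = \varnothing$. The two completions carry weights $\lambda^{\abs{J}}$ and $\lambda^{\abs{J}+1}$. Hence
\begin{align*}
    \Pr_{I \sim \mu}\sbra{v \in I \mid I \setminus \{v\} = J} = \frac{\lambda^{\abs{J}+1}}{\lambda^{\abs{J}} + \lambda^{\abs{J}+1}} = \frac{\lambda}{1+\lambda}\mper
\end{align*}

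Next consider $J$ with $J \cap N(v) \neq \varnothing$. Then $v$ cannot be added, so this conditional probability equals $0$. In either case it is at most $\frac{\lambda}{1+\lambda}$.

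Finally, average over $J$ with respect to the marginal law of $I \setminus \{v\}$ under $\mu$. Since every conditional probability is at most $\frac{\lambda}{1+\lambda}$, the law of total probability gives $\Pr_{I\sim\mu}\sbra{v \in I} \leq \frac{\lambda}{1+\lambda}$.

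An equivalent, more explicit route avoids conditioning. Use the injection $I \mapsto I \setminus \{v\}$ from $\{I : v \in I\}$ into $\{I : v \notin I\}$. It divides weights by exactly $\lambda$, so $\mu(v \in I) \leq \lambda\, \mu(v \notin I)$. Rearranging gives $\mu(v\in I)(1+\lambda) \le \lambda$, which is the claim.

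None of these steps is a real obstacle. The only point to check is that removing $v$ from an independent set leaves an independent set, which makes the map well defined and injective.
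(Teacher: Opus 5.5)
Your proof is correct and matches the paper's argument. The paper writes $\Pr[v \in I] = \lambda W/(\lambda W + W')$, where $W$ is the total weight of independent sets $I \not\ni v$ with $I \cup \{v\}$ independent and $W'$ is the weight of all independent sets $I \not\ni v$, and then uses $W' \geq W$ --- which is exactly your injection $I \mapsto I \setminus \{v\}$ (your conditioning on $I \setminus \{v\}$ is an equivalent repackaging).
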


\begin{proof}
From the definition of the hardcore distribution $\mu$ we have:
\begin{align*}
    \Pr_{I \sim \mu}\sbra{v \in I} &= \frac{\sum_{I \in \calI(G): v \in I} \lambda^{\abs{I}}}{\sum_{I \in \calI(G):v \in I} \lambda^{\abs{I}} + \sum_{I \in \calI(G):v \notin I} \lambda^{\abs{I}}}\\
    &= \frac{ \lambda \cdot \sum_{I: I\cup \{v\} \in \calI(G), v \notin I} \lambda^{\abs{I}}}{\lambda \cdot \sum_{I: I\cup\{v\} \in \calI(G), v \notin I} \lambda^{\abs{I}} + \sum_{I \in \calI(G): v \notin I} \lambda^{\abs{I}}}\\
    &\leq \frac{ \lambda \cdot \sum_{I: I\cup\{v\}\in \calI(G), v \notin I} \lambda^{\abs{I}}}{\lambda \cdot \sum_{I:I\cup\{v\} \in \calI(G),v \notin I} \lambda^{\abs{I}} + \sum_{I: I\cup\{v\} \in \calI(G), v \notin I} \lambda^{\abs{I}}} = \frac{\lambda}{\lambda+1}\mper
\end{align*}
\end{proof}

\begin{corollary}[\Cref{lem:indsetbound} restated]
    Let $G = (V, E)$ be a graph and $\mu$ the corresponding hardcore model at fugacity $\lambda > 0$. Then,
    \begin{align*}
        \Pr_{I \sim \mu}\sbra{\abs{I} \geq 4\lambda \abs{V}} \leq \pbra{\frac{e}{4}}^{4\lambda\abs{V}}\mper
    \end{align*}
\end{corollary}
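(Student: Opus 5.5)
The plan is to bound the weight of large independent sets by comparing each one with its subsets, using only the fact that every subset of an independent set is again independent. Write $m = \lceil 4\lambda\abs{V}\rceil$. Since we only need an upper bound, it suffices to control
\begin{align*}
    \Pr_{I \sim \mu}\sbra{\abs{I} \geq m} = \frac{1}{Z_G(\lambda)}\sum_{\substack{I \in \calI(G)\\ \abs{I} \geq m}} \lambda^{\abs{I}}\mper
\end{align*}

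The key step is a union bound over witnesses. Every independent set $I$ with $\abs{I}\ge m$ contains some $m$-subset $S \subseteq I$, so the event $\abs{I}\ge m$ is contained in the union over $S \in \binom{V}{m}$ of the events $S \subseteq I$. For a fixed $S$ we bound $\Pr_{I\sim\mu}[S \subseteq I] \le \lambda^{m}$. If $S$ is not independent the probability is $0$. Otherwise, the map $I \mapsto I\setminus S$ sends independent sets containing $S$ injectively to independent sets, and it divides the weight by exactly $\lambda^{m}$. Hence $\sum_{I \supseteq S}\lambda^{\abs{I}} \le \lambda^m Z_G(\lambda)$.

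An alternative route iterates the preceding lemma: it gives $\Pr[v\in I]\le \frac{\lambda}{1+\lambda}$ under any conditioning on a pinned subset, since the conditional law is again a hardcore model. Multiplying these bounds gives the even slightly better estimate $\pbra{\frac{\lambda}{1+\lambda}}^m$. The direct injection argument is simpler, so I would use that.

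Summing over the $\binom{\abs{V}}{m}$ choices of $S$ and using $\binom{n}{m}\le (en/m)^m$ yields
\begin{align*}
    \Pr_{I\sim\mu}\sbra{\abs{I}\ge m} \le \binom{\abs{V}}{m}\lambda^m \le \pbra{\frac{e\lambda\abs{V}}{m}}^{m} \le \pbra{\frac{e}{4}}^{m} \le \pbra{\frac{e}{4}}^{4\lambda\abs{V}}\mcom
\end{align*}
where the last two inequalities use $m \ge 4\lambda\abs{V}$ together with $e/4<1$.

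I expect no step to be a real obstacle. The only care needed is the rounding of $4\lambda\abs{V}$ to an integer $m$, which affects both the binomial estimate and the final exponent. The claimed form of the bound in \Cref{lem:indsetbound}, namely $\exp(-\abs{V})$, would additionally require $\lambda$ to be bounded below by a constant. Since the corollary as restated only claims $(e/4)^{4\lambda\abs{V}}$, that is what I would prove.
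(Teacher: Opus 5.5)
Your proof is correct and follows the paper's argument: a union bound over $m$-subsets $S$ combined with $\binom{\abs{V}}{m}\le (e\abs{V}/m)^m$. The only difference is that you bound $\Pr[S\subseteq I]\le\lambda^m$ via the injection $I\mapsto I\setminus S$, whereas the paper uses sequential conditioning on the single-vertex bound to get $\pbra{\frac{\lambda}{1+\lambda}}^m$ (your alternative route); either suffices for $(e/4)^{4\lambda\abs{V}}$. Your care with rounding $4\lambda\abs{V}$ and your remark that this bound gives the $\exp(-\abs{V})$ form of \Cref{lem:indsetbound} only when $\lambda$ is bounded below by a constant are both accurate points the paper glosses over.
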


\begin{proof}
    Fix an arbitrary set $S \in \binom{V}{4\lambda \abs{V} }$. By the previous lemma and sequential conditioning, we have the probability $S \subseteq I$ with $I \sim \mu$ is at most $\pbra{\frac{\lambda}{1+\lambda}}^{4\lambda \abs{V}}$. It then suffices to union bound over the entirety of $\binom{V}{4 \lambda \abs{V}}$. By standard binomial estimates we get
    \begin{align*}
        \binom{\abs{V}}{4\lambda\abs{V}} \cdot \pbra{\frac{\lambda}{1+\lambda}}^{4\lambda\abs{V}} \leq \pbra{\frac{e}{4\lambda}}^{4\lambda\abs{V}} \cdot \pbra{\frac{\lambda}{1+\lambda}}^{4\lambda\abs{V}} \leq \pbra{\frac{e}{4}}^{4\lambda\abs{V}}\mcom
    \end{align*}
    as desired.
\end{proof}

We also show that on a random $\Delta$-regular bipartite graph, the number of bipartite sets with large sizes in both sets is small.

\begin{lemma}[\Cref{prop:bipartiteindsetbound} restated]
        Let $G = (X, Y, E)$ be a random $\Delta$-regular bipartite graph and $\mu$ the corresponding hardcore model at fugacity $\lambda \lesssim \frac{1}{\sqrt{\Delta}}$. Then with high probability over the randomness in $G$,
        \begin{align*}
            \Pr_{I \sim \mu}\sbra{\abs{I \cap X} > \alpha\abs{X}, \abs{I \cap Y} > \alpha \abs{Y} } \leq \exp(-(\abs{X} + \abs{Y}))\mcom
        \end{align*}
        when $\alpha \leq \frac{\log \Delta}{(2+\gamma) \Delta}$ for $\gamma$ in \Cref{lem:bipartiteconc}.
    \end{lemma}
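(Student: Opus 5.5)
The plan is to condition on $I\cap X$ and use the fact that, once the $X$-part of a bipartite independent set is fixed, the $Y$-part is a product measure. Let $n=|X|=|Y|$ and let $\mu$ be the hardcore model at fugacity $\lambda$. For any fixed $S\subseteq X$, conditioned on $I\cap X=S$, the set $I\cap Y$ is distributed as a subset of $Y\setminus N[S]$ in which each vertex is included independently with probability $p=\frac{\lambda}{1+\lambda}\le\lambda$. This holds because $Y$ is an independent set and the only constraints on $Y$-vertices come from $S$. Hence
\begin{align*}
\Pr_{I\sim\mu}\sbra{\abs{I\cap X}>\alpha n,\ \abs{I\cap Y}>\alpha n}\le \max_{S\subseteq X,\ \abs{S}>\alpha n}\Pr\sbra{\mathrm{Bin}\pbra{\abs{Y\setminus N[S]},p}>\alpha n},
\end{align*}
so it suffices to bound the right-hand side uniformly over all such $S$.

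The graph-theoretic input is the expansion half of \Cref{lem:bipartiteconc}, applied with a fixed $\ell$ (say $\ell=1$) and with $\alpha\le\frac{\log\Delta}{(2+\gamma)\Delta}$. It gives, with high probability over $G$ and simultaneously for every $S$ with $\abs{S}>\alpha n\ge$ the threshold size, the bound $m:=\abs{Y\setminus N[S]}\le\frac{(\ell+1)\log\Delta}{\sqrt\Delta}\,n$. This is the only place randomness of $G$ is used, and it holds for all $S$ at once because \Cref{lem:bipartiteconc} is a uniform statement. Monotonicity of the neighborhood in $S$ means we may also take $\alpha$ smaller without harm.

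Next comes a Chernoff bound in its multiplicative large-deviation form, $\Pr[\mathrm{Bin}(m,p)\ge k]\le \binom{m}{k}p^k\le \pbra{\frac{emp}{k}}^{k}$, with $k=\alpha n$. Write $\lambda\le c/\sqrt\Delta$. Then
\begin{align*}
\frac{emp}{\alpha n}\le \frac{e(\ell+1)\frac{\log\Delta}{\sqrt\Delta}\cdot \frac{c}{\sqrt\Delta}}{\frac{\log\Delta}{(2+\gamma)\Delta}}=e(\ell+1)(2+\gamma)c,
\end{align*}
so the probability is at most $\pbra{3e(\ell+1)c}^{\alpha n}=\exp\pbra{-\alpha n\log\frac{1}{3e(\ell+1)c}}$.

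The main obstacle is matching the exact target $\exp(-2n)$ rather than $\exp(-\Omega_\Delta(n))$. The exponent above is $\alpha\log\frac{1}{3e(\ell+1)c}\cdot n$ with $\alpha\approx\frac{\log\Delta}{2\Delta}$, so reaching $2n$ requires $\log\frac1{c}\gtrsim\frac{4\Delta}{\log\Delta}$. That is, the implicit constant in ``$\lambda\lesssim\frac1{\sqrt\Delta}$'' must be taken small as a function of $\Delta$. I would first try to avoid this by sharpening the Chernoff step: compare directly against $Z$ using the $I\cap Y=\varnothing$ contributions, and use the full $(1+\lambda)^{m}$ normalization. I expect this to gain only lower-order terms.

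If that fails, I would prove the statement as written by interpreting $\lesssim$ to allow such a $\Delta$-dependent constant, namely $c\le \exp(-4\Delta/\log\Delta)/(3e(\ell+1))$. Alternatively, I would observe that \Cref{thm:main} only needs the tail to be $\exp(-\Omega(n))$ relative to $Z$, since the $\FPRAS$ reduction tolerates any additive error that is exponentially small in $n$. In that case the $\Delta$-independent constant in the exponent can be absorbed by brute-forcing when $\varepsilon<\exp(-\Omega_\Delta(n))$. I would state clearly which of these two readings is used.
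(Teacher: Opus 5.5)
Your argument is the paper's argument. You condition on $I\cap X$, use that the $Y$-part is a product measure on $Y\setminus N[I\cap X]$, invoke the expansion half of \Cref{lem:bipartiteconc}, and union bound over $\alpha|Y|$-subsets of the non-neighborhood, obtaining $\bigl(e(\ell+1)(2+\gamma)\sqrt{\Delta}\,\lambda\bigr)^{\alpha|Y|}$.

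Your diagnosis of the exponent is also right. The paper's own proof ends at $\exp(-|Y|/\Delta)$, not $\exp(-(|X|+|Y|))$, and no sharpening can close the gap. Take $\lambda=c/\sqrt{\Delta}$ with $c$ fixed. Choose $S\subseteq X$ and $T\subseteq Y\setminus N[S]$ with $|S|=|T|=\lfloor\alpha n\rfloor+1$, which is possible by the anti-expansion half of \Cref{lem:bipartiteconc}. Since $Z_G(\lambda)\le(1+\lambda)^{2n}$,
\[
\Pr_{I\sim\mu}\bigl[|I\cap X|>\alpha n,\ |I\cap Y|>\alpha n\bigr]\ \ge\ \lambda^{2\alpha n+2}(1+\lambda)^{-2n}\ =\ \exp\Bigl(-O\Bigl(\tfrac{\log^2\Delta}{\Delta}+\tfrac{1}{\sqrt{\Delta}}\Bigr)n\Bigr),
\]
which exceeds $e^{-2n}$ for large $\Delta$. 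So the literal statement is false in the intended regime.

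Your reading (2) is the one to adopt. For fixed $\Delta$, the brute-force branch in the proof of \Cref{thm:main} is only triggered when $\varepsilon\lesssim e^{-n/\Delta}$. It then costs $2^{O(n)}=(1/\varepsilon)^{O(\Delta)}$, which is still polynomial. Note that the constant you absorb is $\Delta$-dependent, not $\Delta$-independent as you wrote. Reading (1) should be discarded: it forces $\lambda\le e^{-\Omega(\Delta/\log\Delta)}$, far below the uniqueness threshold, where the lemma is of no use for \Cref{thm:main}.

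One slip: your claim that monotonicity lets you take $\alpha$ smaller ``without harm'' is backwards. Decreasing $\alpha$ enlarges the event; at $\alpha=0$ it has probability close to $1$. Sets $S$ below the threshold size are also not covered by the expansion bound. What your argument proves is the bound for $\alpha$ equal to $\frac{\log\Delta}{(2+\gamma)\Delta}$, and hence for every larger $\alpha$. The paper's proof, which conditions on $|I\cap X|\ge\frac{\log\Delta}{(2+\gamma)\Delta}|X|$, proves exactly the same thing. So the ``$\alpha\le$'' in the statement must be read as equality (or $\ge$), consistent with your own line ``$|S|>\alpha n\ge$ the threshold size.''
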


\begin{proof}
Throughout the proof, we assume \Cref{lem:bipartiteconc} holds for the graph, which holds with high probability over random $\Delta$-regular graphs. Let $I$ be a random sample of $\mu$ conditioned that $I_X = I \cap X$ satisfies $\abs{I_X} \geq \frac{\log \Delta}{(2+\gamma) \Delta} |X|$. By the lemma, we have that $\abs{Y \setminus N\sbra{I_X}} \leq  \frac{(\ell+1)\log \Delta}{\sqrt{\Delta}}$ for any choice $\ell > 0$.

Conditioned on $I_X$, the probability of a fixed subset of size $\frac{\log \Delta}{(2+\gamma) \Delta} |Y|$ is a subset of a sample from $\mu$ is exactly $\pbra{ \frac{\lambda}{1 + \lambda}}^{\frac{\log \Delta}{(2+\gamma) \Delta} |Y|}$. Then, union bounding over all subsets of size $\abs{Y \setminus N\sbra{I_X}}$ and using standard binomial estimates yields that the probability that any subset of size $\frac{\log \Delta}{(2+\gamma) \Delta} |Y|$ is sampled is at most 
\begin{align*}
    \binom{\abs{Y \setminus N\sbra{I_X}}}{\frac{\log \Delta}{(2+\gamma) \Delta} |Y|} \pbra{ \frac{\lambda}{1 + \lambda}}^{\frac{\log \Delta}{(2+\gamma) \Delta} |Y|} \leq \left((\ell+1)e(2+\gamma) \sqrt{\Delta} \cdot \lambda \right)^{\frac{\log \Delta}{(2+\gamma) \Delta} |Y|} \leq \exp\pbra{- \frac{|Y|}{\Delta}}\mper
\end{align*}
where the last inequality holds for large enough $\Delta$ and $\lambda \lesssim \frac{1}{\sqrt{\Delta}}$ with a sufficiently small constant. This means with such probability, none of the events occur over fixed $I_X$, which implies
\begin{align*}
        \Pr_{I \sim \mu}\sbra{\abs{I \cap X} > \frac{\log \Delta}{(2+\gamma) \Delta} |X|, \abs{I \cap Y} > \frac{\log \Delta}{(2+\gamma) \Delta} |Y|} & \leq \exp\pbra{- \frac{|Y|}{\Delta}} \mper
\end{align*}

as desired.
\end{proof}

\section{Trickle-down in the Two-sided Independent Set Slice}
\label{sec:bipartite1}

In this section, we show our first main result, \Cref{thm:bipartite}, which says that on bounded-degree random regular \textit{bipartite} graphs the down-up walk on the two-sided slice mixes in polynomial-time beyond the critical occupancy. The bulk of our proof goes into showing the following general bound on the top-link eigenvalues.

\begin{lemma}[Top-Link Expansion in the Two-Sided Independent Set Slice]
    \label{lem:main}
    Fix $k_X, k_Y \leq \abs{X}$ and let $(\calX, \mu^{(k_X, k_Y)})$ be the two-sided independent set slice of a $\Delta$-regular bipartite graph $G = (X,Y,E)$. Let $\tau = (\tau_X, \tau_Y) \in \calX$ satisfy $\codim(\tau) = 2$. We show the complex satisfies:
    \begin{enumerate}
        \item For $\tau \in \calI_{k_X-1, k_Y-1}(G)$, $\lambda_2(\P_\tau) \leq \frac{\lambda_2(\A_G)}{\min\cbra{\abs{X \setminus (\tau_X \cup N\sbra{\tau_Y})}-\Delta, \abs{Y \setminus (\tau_Y \cup N\sbra{\tau_X})} -\Delta } }$.
        \item For $\tau \in \calI_{k_X-2, k_Y}(G), \calI_{k_X, k_Y-2}(G)$, $\lambda_2(\P_\tau) \leq 0$.
    \end{enumerate}
\end{lemma}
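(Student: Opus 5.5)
The plan is to handle the two types of codimension-$2$ links separately. In each case I identify the $1$-skeleton of $\calX_\tau$ explicitly. Because $\mu^{(k_X,k_Y)}$ is uniform, the link distribution $\mu_{|\tau}$ is uniform over the admissible completions of $\tau$.

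\textbf{Case 2.} Take $\tau\in\calI_{k_X-2,k_Y}(G)$; the other subcase is symmetric. A completion adds two vertices of
\[
A_X \coloneqq X\setminus(\tau_X\cup N\sbra{\tau_Y}).
\]
Any two vertices of $X$ are non-adjacent, so every pair in $\binom{A_X}{2}$ is a facet of the link, each with equal weight. Hence the $1$-skeleton is the complete graph on $m=\abs{A_X}$ vertices with uniform weights, and
\[
\P_\tau = \frac{1}{m-1}(\mathbf{1}\mathbf{1}^\top-\id).
\]
Its eigenvalues are $1$ and $-\frac{1}{m-1}$, so $\lambda_2(\P_\tau)\le 0$.

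\textbf{Case 1, setup.} Take $\tau\in\calI_{k_X-1,k_Y-1}(G)$ and set
\[
A_X = X\setminus(\tau_X\cup N\sbra{\tau_Y}),\qquad A_Y = Y\setminus(\tau_Y\cup N\sbra{\tau_X}),\qquad a=\abs{A_X},\; b=\abs{A_Y}.
\]
The facets of the link are the pairs $(x,y)\in A_X\times A_Y$ with $\{x,y\}\notin E$, each with equal weight. So the $1$-skeleton is the bipartite graph $H$ with adjacency
\[
\A_H = J_{\mathrm{bip}} - B,
\]
where $J_{\mathrm{bip}}$ has all-ones off-diagonal blocks between $A_X$ and $A_Y$, and $B=\Pi\A_G\Pi$ is the adjacency matrix of $G[A_X\cup A_Y]$, with $\Pi=\Id_{A_X\cup A_Y}$. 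Then $\P_\tau=D^{-1}\A_H$, where the degrees satisfy $D(x,x)=b-\abs{N(x)\cap A_Y}\ge b-\Delta$ and likewise $D(y,y)\ge a-\Delta$. Write $d_{\min}\coloneqq\min(a,b)-\Delta$, so $D\succeq d_{\min}\id$.

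\textbf{Step 1: bound $\lambda_2(\A_H)$.} I aim to show $\lambda_2(\A_H)\le\lambda_2(\A_G)$. Decompose
\[
J_{\mathrm{bip}}=\sqrt{ab}\,(uu^\top-w'w'^\top),
\]
where $u$ and $w'$ are the normalized vectors equal to $\frac{1}{\sqrt a}$ on $A_X$, and $\pm\frac{1}{\sqrt b}$ on $A_Y$ respectively. For $\A_G$, the eigenvalues $\pm\Delta$ have eigenvectors $\mathbf{1}/\sqrt n$ and $w$, the $\pm1/\sqrt n$ sign vector. Write $\A_G=\Delta(\tfrac1n\mathbf{1}\mathbf{1}^\top-ww^\top)+R$, where $R$ has all eigenvalues in $[-\lambda_2(\A_G),\lambda_2(\A_G)]$ (using bipartiteness for the symmetric spectrum). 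Then
\[
-B\preceq \lambda_2(\A_G)\id+\Delta(\Pi w)(\Pi w)^\top .
\]
The vector $\Pi w$ is the sign vector on $A_X\cup A_Y$ and lies in the span of $w'$ and $u$. The key point is that the rank-one term $+\Delta\,\Pi w(\Pi w)^\top$ must be absorbed by $-\sqrt{ab}\,w'w'^\top$. The norm bound $\Delta\norm{\Pi w}{2}^2=\Delta(a+b)/n\le\Delta$ is fine, but $\Pi w$ is not exactly parallel to $w'$ when $a\neq b$. So I would check the $2\times 2$ block on $\mathrm{span}(u,w')$ directly, using $\min(a,b)>\Delta$ (which is implicit, since otherwise the claimed bound is vacuous). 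If this works, it gives
\[
\A_H-\sqrt{ab}\,uu^\top\preceq\lambda_2(\A_G)\id,
\]
and \Cref{cor:cauchyinterlacing} yields $\lambda_2(\A_H)\le\lambda_2(\A_G)$. I expect this step to be the main obstacle: the two rank-one corrections must cancel, rather than leaving a second large eigenvalue.

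\textbf{Step 2: degree normalization.} $\P_\tau$ is similar to $D^{-1/2}\A_HD^{-1/2}$. By Sylvester's law of inertia, the number of eigenvalues of $D^{-1/2}\A_HD^{-1/2}$ exceeding $t$ equals the number of positive eigenvalues of $\A_H-tD$. Since
\[
\A_H-tD\preceq\A_H-t\,d_{\min}\id,
\]
for $t=\lambda_2(\A_G)/d_{\min}$ the right-hand side has at most one positive eigenvalue by Step 1. Hence $\lambda_2(\P_\tau)\le\lambda_2(\A_G)/d_{\min}$, which is exactly the bound claimed in item 1.
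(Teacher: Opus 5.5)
\textbf{The gap is in Step 1.} Your Case 2 and your Step 2 are fine; the inertia argument in Step 2 is a more careful version of the paper's $\lambda_2(\Gamma^{-1/2}\A\Gamma^{-1/2})\le\lambda_2(\A_G)\norm{\Gamma^{-1}}{2}$. But Step 1 carries the whole lemma, and you leave it as ``if this works''. As you set it up, it does not work.

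After you discard $-\frac{\Delta}{n}(\Pi\mathbf{1})(\Pi\mathbf{1})^\top$, you need to absorb $\Delta\,pp^\top$, with $p=\Pi w$, into $-\sqrt{ab}\,w'w'^\top$. Take $u,w'$ to be unit vectors; their entries are $1/\sqrt{2a}$ and $\pm1/\sqrt{2b}$, not $1/\sqrt a$ and $\pm1/\sqrt b$. Write $p=\alpha u+\beta w'$ with $\alpha=\frac{\sqrt a-\sqrt b}{\sqrt{2n}}$ and $\beta=\frac{\sqrt a+\sqrt b}{\sqrt{2n}}$. Then $\Delta\,pp^\top-\sqrt{ab}\,w'w'^\top$, restricted to $\mathrm{span}(u,w')$, has determinant $-\Delta\alpha^2\sqrt{ab}$, which is negative whenever $a\neq b$. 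So this matrix has a positive eigenvalue. Consequently $\A_H-\sqrt{ab}\,uu^\top\preceq\lambda_2(\A_G)\id$ cannot be derived from your bound, however large $\min(a,b)$ is.

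\textbf{The missing observation is that the term you dropped cancels the cross terms exactly.} Let $J$ be the all-ones matrix on the $X\times Y$ and $Y\times X$ blocks. Since $\Delta(\frac1n\mathbf{1}\mathbf{1}^\top-ww^\top)=\frac{2\Delta}{n}J$, you get $B=\frac{2\Delta}{n}J_{\mathrm{bip}}+\Pi R\Pi$, and hence
\[
\A_H=\Bigl(1-\frac{2\Delta}{n}\Bigr)J_{\mathrm{bip}}-\Pi R\Pi\preceq\Bigl(1-\frac{2\Delta}{n}\Bigr)\sqrt{ab}\,uu^\top+\lambda_2(\A_G)\id .
\]
This uses $J_{\mathrm{bip}}=\sqrt{ab}(uu^\top-w'w'^\top)$ and $n=2\abs{X}\geq 2\Delta$. \Cref{cor:cauchyinterlacing} then gives $\lambda_2(\A_H)\le\lambda_2(\A_G)$ for all $a,b$.

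Equivalently, keep $\sqrt{ab}\,uu^\top$ in your $2\times 2$ block. Then $M=\sqrt{ab}(uu^\top-w'w'^\top)+\Delta pp^\top$ has determinant $ab(\frac{2\Delta}{n}-1)\le 0$, so $\lambda_2(M)\le 0$, and Weyl's inequality finishes. The condition this step needs is $n\ge 2\Delta$, which is automatic, not $\min(a,b)>\Delta$.

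\textbf{Comparison with the paper.} Once repaired, your argument is the paper's proof made explicit. $\A_H$ is the principal submatrix on $A_X\cup A_Y$ of $\overline{\A_G}=(1-\frac{2\Delta}{n})J-R$. The paper handles this by interlacing, plus a computation of $\overline{\A_G}^2$ to get $\lambda_2(\overline{\A_G})\le\lambda_2(\A_G)$. Your decomposition reaches the same bound directly, without the squaring step.
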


We now show the proof of \Cref{thm:bipartite} assuming \Cref{lem:main}.

\begin{proof}[Proof of \Cref{thm:bipartite}]
    Let $G = (X,Y,E)$ be a random $\Delta$-regular bipartite graph and fix a pair of sizes $k_X, k_Y \leq \frac{\log \Delta}{(2+\gamma)\Delta} \abs{X}$ with $\gamma$ taken from \Cref{lem:bipartiteconc}, then consider the two-sided independent set slice $(\calX, \mu^{(k_X, k_Y)})$. Our goal is to show for any $\tau \in (\calX, \mu^{(k_X, k_Y)})$ with $\codim(\tau) = 2$, $\lambda_2(\P_\tau) \leq \frac{1}{2(\abs{\tau})}$, and further the complex is connected. 
    
    We begin with an application of \Cref{lem:bipartiteconc}, which says for any such $\tau$ with $\tau = (\tau_X, \tau_Y)$ and any $\ell > 0$ we have
    \begin{align*}
        \abs{X \setminus (\tau_X \cup N\sbra{\tau_Y})} \geq \frac{\ell\log \Delta}{\sqrt{\Delta}}\abs{X} - \frac{\log \Delta}{(2+\gamma)\Delta}\abs{X} \geq \frac{(\ell-1)\log \Delta}{\sqrt{\Delta}}\abs{X}\mcom
    \end{align*}
    and the analogous statement for $Y \setminus (\tau_Y \cup N\sbra{\tau_X})$. Note that the statement holds for sets $\tau$ of size $k_x$, but a link will have even fewer neighbors. We can then invoke \Cref{lem:main} for this complex which guarantees
    \begin{align*}
        \lambda_2(\P_\tau) \leq \frac{1}{\ell-1} \cdot \frac{\lambda_2(\A_G) \sqrt{\Delta}}{\log\Delta \abs{X}} \leq\frac{(2+\gamma)}{2(\ell-1)} \cdot \frac{\lambda_2(\A_G) \sqrt{\Delta}}{\log\Delta \abs{X}}\mper
    \end{align*}
    Applying \Cref{thm:friedman2}, we get with high probability $\lambda_2(\A_G) \leq 2\sqrt{\Delta-1} + o(1)$. Applying this yields
    \begin{align*}
        \lambda_2(\P_\tau) \leq \frac{1}{\ell-1} \cdot \frac{(2+\gamma)\Delta}{\log\Delta \abs{X}} \leq  \frac{1}{(\ell-1)\abs{\tau}}\mper
    \end{align*}
    The result follows by setting $\ell$ large enough.
    
    On the other hand, for any  $k_X, k_Y \leq \frac{\log\Delta}{(2+\gamma) \Delta} \abs{X}$ and the two-sided independent set slice $(\calX, \mu^{(k_X, k_Y)})$  of a random $\Delta$-regular bipartite graph $G = (X,Y,E)$, with high probability over the randomness of $G$, the 1-skeleton of any link of codimension at least $2$ is connected. This is because by \Cref{lem:main}, for any link $\tau$ of any size, if we truncate the complex to dimension $\abs{\tau}+2$, we have  $\lambda_2(\P_\tau) < 1$. Therefore, $(\calX, \mu^{(k_X,k_Y)})$ is connected.
\end{proof}

The rest of this section is then dedicated to proving the top-link expansion for the two-sided independent set slice seen in \Cref{lem:main}.

\begin{proof}[Proof of \Cref{lem:main}]

    While there are two types of $\tau \in \calX$ here, it is a simple exercise that if $\tau \in \calI_{k_X-2, k_Y}(G), \calI_{k_X, k_Y-2}(G)$ then the link is a complete graph so $\lambda_2(\P_\tau) \leq 0$. As such, we fix $\tau \in \calI_{k_X-1, k_Y-1}(G)$, and we prove the first bullet point. We start by observing the following form for the local walk matrix $\P_\tau$.

    \begin{observation}
        Let $(\calX, \mu^{(k_X, k_Y)})$ be the two-sided independent set slice of a bipartite graph $G = (X, Y, E)$ and let $\tau \in \calI_{k_X-1, k_Y-1}(G)$. Define the graph $G_\tau = \overline{G\sbra{(X \cup Y) \setminus (\tau \cup N\sbra{\tau})}}$ where here $\overline{G}$ is the bipartite complement with respect to $(X, Y)$. Then we have
    \begin{align*}
        \P_\tau = \Gamma^{-1}_{G_\tau}\A_{G_\tau}\mcom
    \end{align*}
    up to empty rows and columns. Here $\A_{G_\tau}$ is the adjacency matrix of $G_\tau$ and $\Gamma_{G_\tau}$ is its diagonal degree matrix.
    \end{observation}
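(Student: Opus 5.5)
We need to prove the Observation: for $\tau\in\calI_{k_X-1,k_Y-1}(G)$, $\P_\tau = \Gamma^{-1}_{G_\tau}\A_{G_\tau}$ up to empty rows and columns.

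The plan is to compute $\P_\tau(u,v)$ directly from the definition of the local walk operator. Since $\codim(\tau)=2$, we have
\[
\P_\tau(u,v)=\Pr_{S\sim\mu_{\mid\tau\cup\{u\}}}[v\in S],
\]
where $\mu_{\mid \tau\cup\{u\}}$ is the conditional law of the single remaining element. The proof has three steps.

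\textbf{Step 1: identify the vertices of the link.} A facet containing $\tau$ has exactly one additional vertex $x\in X$ and one additional vertex $y\in Y$. The set $\tau\cup\{x,y\}$ is independent exactly when the following hold:
\begin{itemize}
\item $x\notin \tau_X\cup N[\tau_Y]$;
\item $y\notin \tau_Y\cup N[\tau_X]$;
\item $\{x,y\}\notin E$.
\end{itemize}
There is no constraint between $x$ and $\tau_X$, since $X$ is an independent side. So the facets of $\calX_\tau$ are exactly the pairs $\{x,y\}$ with $x,y\in W:=(X\cup Y)\setminus(\tau\cup N[\tau])$ and $\{x,y\}$ a non-edge of $G$. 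Equivalently, they are the edges of the bipartite complement of $G[W]$, which is $G_\tau$. Vertices of $W$ that are isolated in $G_\tau$ belong to no facet; these account for the ``empty rows and columns.''

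\textbf{Step 2: compute the conditional probabilities.} The measure $\mu^{(k_X,k_Y)}$ is uniform, so conditioning on $\tau\cup\{u\}$ gives the uniform distribution over facets $\tau\cup\{u,v\}$ with $v$ a $G_\tau$-neighbor of $u$. Hence
\[
\P_\tau(u,v)=\frac{\mathbf 1[\{u,v\}\in E(G_\tau)]}{\deg_{G_\tau}(u)},
\]
which is precisely $\Gamma_{G_\tau}^{-1}\A_{G_\tau}$.

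\textbf{Step 3: check the ambient set.} We should also confirm that $\calX_\tau(1)$ coincides with the non-isolated vertices of $G_\tau$. Otherwise the matrix would carry extra rows or columns, which are the zero rows and columns the statement allows.

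The only potential subtlety, and so the main thing to verify carefully, is Step 1. We must not forget the constraints coming from $\tau$ on the opposite side, while correctly noting that there is no constraint within the same side. Everything else is bookkeeping.
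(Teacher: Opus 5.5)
Your proposal is correct and follows the paper's own reasoning. The paper justifies the observation only briefly: the facets of the link of $\tau$ are exactly the pairs $\{x,y\}$ with $x\in X$ and $y\in Y$, both outside $\tau\cup N[\tau]$ and non-adjacent in $G$, since two vertices from the same side cannot be added. Uniformity of the measure then makes the local walk the simple random walk on the bipartite complement, and your Steps 1--3 carry out this identification and the conditional-probability computation explicitly.
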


This characterization looks exactly like \Cref{prop:regularwalk} (which characterizes the links of the regular independent set slice) with the complement replaced with the bipartite complement. This is because for $\tau \in \calI_{k_X-1, k_Y-1}(G)$ we are no longer allowed to take two vertices from the same side, so even though there are no edges between vertices on the same side they are not connected in the link.

We can then use this characterization directly to try and derive a second eigenvalue bound.
We will show that
\begin{equation}\label{eq:bipfixedgoal1}
\lambda_2(\A_{G_\tau})\leq \lambda_2(\A_G)\mper
\end{equation}
Assuming this, 
we observe conjugating by $\Gamma_{G_\tau}^{1/2}$ yields
    \begin{align*}
        \lambda_2(\P_\tau) = \lambda_2(\Gamma_{G_\tau}^{1/2} \P_\tau \Gamma_{G_\tau}^{-1/2}) = \lambda_2(\Gamma_{G_\tau}^{-1/2} \A_{G_\tau} \Gamma_{G_\tau}^{-1/2}) \leq \lambda_2(\A_G) \cdot \norm{\Gamma_{G_\tau}^{-1}}{2}\mper
    \end{align*}
    We conclude the proof by observing $\norm{\Gamma_{G_\tau}^{-1}}{2}$ is just the reciprocal of the smallest degree in $G_\tau$. Since $G_\tau$ is the complement, this is always at least the smaller of $\frac{1}{\abs{X \setminus (\tau_X \cup N\sbra{\tau_Y})}-\Delta}$ or $\frac{1}{\abs{Y \setminus (\tau_Y \cup N\sbra{\tau_X})}-\Delta}$. It remains to prove \eqref{eq:bipfixedgoal1}.

    Let $\sigma_X = X \setminus (\tau_X \cup N\sbra{\tau_Y})$ and define $\sigma_Y$ symmetrically. The key observation is that we can rewrite
    \begin{align*}
        \A_{G_\tau} = \mathbf{1}_{\sigma_X}\mathbf{1}_{\sigma_Y}^\top + \mathbf{1}_{\sigma_X}\mathbf{1}_{\sigma_Y}^\top - \A_{G\sbra{V \setminus (\tau \cup N\sbra{\tau})}}\mper
    \end{align*}
    This is the exact definition of the bipartite complement, flipping every entry crossing $X$ and $Y$ and noting the remainder are $0$ throughout. Notice this is a principal submatrix of the entire bipartite complement $\overline{\A_G}$, so by the \nameref{fact:cauchyinterlacing} we have
    \begin{align*}
        \lambda_2(\A_{G_\tau}) \leq \lambda_2(\overline{\A_G})\mper
    \end{align*}
    So, to prove \eqref{eq:bipfixedgoal1} it is enough to show that 
    \begin{equation}
        \label{eq:bipfixgoal2}
        \lambda_2(\overline{\A_G}) \leq \lambda_2(\A_G)\mper
    \end{equation}  
    We have an explicit form for $\overline{\A_G}$ as
    \begin{align*}
        \A_G = \begin{bmatrix}
            0 & A\\
            A^\top & 0
        \end{bmatrix} \text{ and } \overline{\A_G} = \begin{bmatrix}
            0 & \mathbf{1}\mathbf{1}^\top - A\\
            (\mathbf{1}\mathbf{1}^\top - A)^\top & 0
        \end{bmatrix}\mper
    \end{align*}

    We observe that $\lambda_2(\overline{\A_G}) \leq \sqrt{\lambda_3(\overline{\A_G})^2}$ since $\overline{\A_G}$ being $\Delta$-biregular implies $\overline{\A_G}$ has two maximal eigenvalues $\Delta^2$ trivially. We can compute in block matrix form
    \begin{align*}
        \overline{\A_G}^2 &= \begin{bmatrix}
            (\mathbf{1}\mathbf{1}^\top - A)^2 & 0\\
            0 & (\mathbf{1}\mathbf{1}^\top - A)^2
        \end{bmatrix}\\
        &=\begin{bmatrix}
            (\abs{X}+\abs{Y} - \Delta^2)\mathbf{1}\mathbf{1}^\top + A^2 & 0\\
            0 & (\abs{X}+\abs{Y} - \Delta^2)\mathbf{1}\mathbf{1}^\top + A^2
        \end{bmatrix} \\
        &= (\abs{X} + \abs{Y} - \Delta^2) (\mathbf{1}_X\mathbf{1}_X^\top + \mathbf{1}_Y\mathbf{1}_Y^\top) + \A_G^2\mper
    \end{align*}
    The last equality follows from the fact that $\mathbf{1}$ is a left and right eigenvector of $A$ with eigenvalue $\Delta$. Now we can simply observe that the $\A_G^2$ is the union of two $\Delta^2$-regular graphs and so its top eigenspace includes $\spn \cbra{\mathbf{1}_X, \mathbf{1}_Y}$. As a result 
    \begin{align*}
        \lambda_3(\overline{\A_G}^2) \leq \lambda_3(\A_G^2) = \lambda_2(\A_G)^2\mcom
    \end{align*}
    proving \eqref{eq:bipfixgoal2} as desired.
\end{proof}
\section{Trickle-down in the One-sided Slice of the Hardcore Model}
\label{sec:bipartite2}

In this section, we prove \Cref{thm:bipartite2}, which says that on random regular bipartite graphs the down-up walk on the one-sided slice mixes in polynomial-time at certain occupancies. Mirroring the previous section, our proof is centered around the following top-link eigenvalue bound.

\begin{lemma}[Top-Link Expansion in the One-Sided Slice of the Hardcore Model]
    \label{lem:secondeigen}
    Fix $k \leq \abs{X}$ and let $(\calX, \mu_\lambda^{(k)})$ be the one-sided slice of the hardcore model of a $\Delta$-regular bipartite graph $G = (X,Y,E)$ at fugacity $\lambda > 0$. Let $\tau \subseteq X$ satisfy $\codim(\tau) = 2$. Finally, assume $\max_{u,v \in \calX_\tau(1)} \abs{N_\tau(u) \cap N_\tau(v)} \leq 2$ and no vertex $u \in \calX_\tau(1)$ shares $2$ common neighbors with more than one other vertex in $G$. Then,
    \begin{align*}
        \lambda_2(\P_\tau) \leq \frac{(\lambda \cdot \lambda_2(\A_G)^2 + \lambda^2-1) \cdot (1+\lambda)^{\Delta_\tau}}{\abs{X} -\abs{\tau} - (1+\lambda)^{\Delta_\tau}}\mcom
    \end{align*}
    where $\Delta_\tau \coloneqq \E_{\substack{v \sim \calX_\tau(1)}}\sbra{\abs{N_\tau(v)}}$ is the average degree in the link of $\tau$.
\end{lemma}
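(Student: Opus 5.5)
We begin by writing the link walk explicitly. For $\tau\subseteq X$ with $\codim(\tau)=2$, the link $\calX_\tau(1)=X\setminus\tau$ is a complete graph, and the weight of the pair $\{u,v\}$ is proportional to $\mu^{(k)}_\lambda(\tau\cup\{u,v\})$. Once $S=\tau\cup\{u,v\}$ is fixed on $X$, the $Y$-side of an independent set may be any subset of $Y\setminus N[S]$, which contributes a factor $(1+\lambda)^{|Y\setminus N[S]|}$. Dividing by the common factor $(1+\lambda)^{|Y\setminus N[\tau]|}$ gives, up to normalization,
\begin{align*}
\A_\tau(u,v)=(1+\lambda)^{-|N_\tau(u)|-|N_\tau(v)|+|N_\tau(u)\cap N_\tau(v)|}\mper
\end{align*}
Set $w(u)=(1+\lambda)^{-|N_\tau(u)|}$ and $D=\diag(w)$. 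Then $\A_\tau=D(\mathbf{1}\mathbf{1}^\top-\Id+E)D$, where $E(u,v)=(1+\lambda)^{|N_\tau(u)\cap N_\tau(v)|}-1$ for $u\neq v$. We take $\P_\tau=\Gamma^{-1}\A_\tau$ with $\Gamma$ the weighted degree matrix, so that
\begin{align*}
\lambda_2(\P_\tau)=\lambda_2(\Gamma^{-1/2}\A_\tau\Gamma^{-1/2})\mper
\end{align*}

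The next step is to bound the perturbation $E$. By hypothesis, common neighborhoods have size at most $2$. Moreover, each $u$ shares two common neighbors with at most one other vertex, so the entries with value $(1+\lambda)^2-1=2\lambda+\lambda^2$ form a matching. All other nonzero entries equal $\lambda$ and are supported on pairs with a common neighbor, that is, on the off-diagonal part of $B^\top B$ with $B$ the restricted biadjacency matrix. Write $M$ for the matching. Then
\begin{align*}
E\preceq \lambda\,(B^\top B-\text{diag})+(\lambda+\lambda^2)M\mcom
\end{align*}
using that the matching entries are counted twice in $B^\top B$.

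This gives
\begin{align*}
\mathbf{1}\mathbf{1}^\top-\Id+E\preceq \mathbf{1}\mathbf{1}^\top+\lambda B^\top B+(\lambda^2-1)\Id+\ldots\mper
\end{align*}
Here $B^\top B$ is a principal submatrix of the $X$-block of $\A_G^2$. Its top eigenvector is near $\mathbf{1}$, and $\A_G^2-\Delta^2\mathbf{1}\mathbf{1}^\top/|X|\preceq\lambda_2(\A_G)^2$ on that block. By the Cauchy interlacing fact (\Cref{fact:cauchyinterlacing}), the restriction preserves this bound. Absorbing the rank-one parts into a single $vv^\top$ term and applying \Cref{cor:cauchyinterlacing} then yields
\begin{align*}
\lambda_2\pbra{D(\mathbf{1}\mathbf{1}^\top-\Id+E)D}\leq (\lambda\lambda_2(\A_G)^2+\lambda^2-1)\,\norm{D^2}{2}\le \lambda\lambda_2(\A_G)^2+\lambda^2-1\mper
\end{align*}
The diagonal terms from $B^\top B$ and the matching correction need careful bookkeeping so that the constant comes out as $\lambda^2-1$.

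The final step, and the main obstacle, is the normalization by $\Gamma$. The degree is
\begin{align*}
\Gamma(u)=w(u)\sum_{v\neq u}w(v)(1+\lambda)^{|N_\tau(u)\cap N_\tau(v)|}\ge w(u)\pbra{\textstyle\sum_v w(v)-w(u)}\mper
\end{align*}
We need $\Gamma^{-1/2}D$ to have norm at most about $(1+\lambda)^{\Delta_\tau}/(|X|-|\tau|-(1+\lambda)^{\Delta_\tau})$. The $w(u)$ factor cancels one factor of $D$. For the remaining factor we use convexity (Jensen): $\sum_v w(v)\ge (|X|-|\tau|)(1+\lambda)^{-\Delta_\tau}$, since $\Delta_\tau$ is the average link degree. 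After rescaling, the subtracted $w(u)\le1$ produces the $-(1+\lambda)^{\Delta_\tau}$ in the denominator. What concerns me is handling the leftover factor $D$ on the other side uniformly in $u$. I would try to bound the quadratic form $x^\top\Gamma^{-1/2}D\,(\cdot)\,D\Gamma^{-1/2}x$ directly, using $w\le1$, so that only the sum $\sum_v w(v)$, and not $\min_u w(u)$, enters the estimate.
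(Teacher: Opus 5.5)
Your plan follows the paper's proof in different notation. Your $D$ is the paper's $\wt\Pi_\tau$, your $K=\mathbf{1}\mathbf{1}^\top-\Id+E$ is $(1+\lambda)^{\A_{H_\tau}}$, your $B^\top B$ is $\A_{H_\tau}$, and your $\Gamma$ is $2Z_\tau\Pi_\tau$; the interlacing against the $X$-block of $\A_G^2$ and the Jensen step on $Z^{(u)}_\tau$ are the same.

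However, two steps are wrong as written. First, $E\preceq\lambda(B^\top B-\text{diag})+(\lambda+\lambda^2)M$ is false whenever $M\neq 0$. Since $B^\top B$ already has entry $2$ on matched pairs, the exact identity is $E=\lambda\pbra{B^\top B-\diag(B^\top B)}+\lambda^2 M$. Your right-hand side exceeds this by $\lambda M$, which is indefinite: its quadratic form on $e_u-e_v$ for a matched pair is $-2\lambda$. Using the identity, $M\preceq\Id$ (a matching) and $\diag(B^\top B)\succeq 0$ give $K\preceq\mathbf{1}\mathbf{1}^\top+\lambda B^\top B+(\lambda^2-1)\Id$ at once, which is \eqref{eq:bip2goal3}. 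No further bookkeeping is needed; your extra $\lambda M$ is why the constant came out as $\lambda^2+\lambda-1$. Second, with $B$ restricted to the rows $Y\setminus N\sbra{\tau}$ (needed so the entries are $\abs{N_\tau(u)\cap N_\tau(v)}$), $B^\top B$ is not a principal submatrix of $AA^\top$, where $A$ is the $X\times Y$ biadjacency matrix. It is the restriction to $\calX_\tau(1)$ of $A\Id_{Y\setminus N\sbra{\tau}}A^\top\preceq AA^\top$. You need this PSD comparison, which is \eqref{eq:bip2goal4}, before interlacing.

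The normalization you flag as the main obstacle is immediate. Put $F=\Gamma^{-1/2}D$. Since $\Gamma(u)=w(u)Z^{(u)}_\tau$ and $w\le 1$, we get $F^2=\diag\pbra{w(u)/Z^{(u)}_\tau}\preceq \max_u (Z^{(u)}_\tau)^{-1}\,\Id$. Let $\beta=1+\lambda\Delta^2/\abs{X}$ and $c=\lambda\lambda_2(\A_G)^2+\lambda^2-1$, taken $\ge 0$ as the paper implicitly does. Conjugating $K-\beta\mathbf{1}\mathbf{1}^\top\preceq c\,\Id$ by $F$ gives $\Gamma^{-1/2}\A_\tau\Gamma^{-1/2}-\beta(F\mathbf{1})(F\mathbf{1})^\top\preceq c\max_u (Z^{(u)}_\tau)^{-1}\Id$. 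Then \Cref{cor:cauchyinterlacing} together with your bound $Z^{(u)}_\tau\ge(\abs{X}-\abs{\tau})(1+\lambda)^{-\Delta_\tau}-1$ (uniform in $u$) finishes the proof. This is exactly the paper's step $\frac{1}{2Z_\tau}\wt\Pi_\tau^2\preceq\max_u\frac{1}{Z^{(u)}_\tau}\Pi_\tau$. Drop your intermediate bound on $\lambda_2(DKD)$ via $\norm{D^2}{2}$: bounding $\lambda_2(\A_\tau)$ first and dividing by $\Gamma$ afterwards is precisely what would bring in $\min_u w(u)$.
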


Assuming this we prove \Cref{thm:bipartite2}.
\begin{proof}[Proof of \Cref{thm:bipartite2}]
    Let $G = (X,Y,E)$ be a random $\Delta$-regular bipartite graph and fix a size $\frac{\log \Delta}{(2+\gamma)\Delta}\abs{X}+2 \leq k \leq 4\lambda \abs{X}$ where $\gamma$ is taken from \Cref{lem:bipartiteconc}. We now consider the one-sided slice $(\calX, \mu_\lambda^{(k)})$ for $\lambda$ to be chosen later and fix $\tau \subseteq X$ with $\codim(\tau) = 2$. Our goal is to apply \Cref{lem:secondeigen} to show $\lambda_2(\P_\tau) < \frac{1}{2(\abs{\tau}+2)}$. To satisfy the required assumptions $\max_{u,v \in \calX_\tau(1)} \abs{N_\tau(u) \cap N_\tau(v)} \leq 2$ and no vertex $u \in \calX_\tau(1)$ shares $2$ common neighbors with more than one other vertex in $G$, we use \Cref{lem:neighborconc} and \Cref{lem:neighborbound}, which hold since $G$ is a random regular bipartite graph.

    The remaining step is then to bound $(1+\lambda)^{\Delta_\tau}$. We show
    \begin{equation}\label{eq:exp}
        (1+\lambda)^{\Delta_\tau} \leq \Delta^{a-\frac{1}{2}}\mcom
    \end{equation}
    for any $a \in (1/2, 1)$. This bound is enough if $|\tau|$ is relatively close to the critical occupancy threshold. 
    For larger sizes, say $\abs{\tau} \geq \frac{\abs{X}}{\Delta^a}$, we will show that
    \begin{equation}\label{eq:explarge}
        (1+\lambda)^{\Delta_\tau} \leq 2 \mper
    \end{equation}
    
    Assuming \eqref{eq:exp} and \eqref{eq:explarge}, we complete the proof. First, 
    \begin{align*}
        \lambda_2(\P_\tau) \leq \frac{(\lambda \cdot \lambda_2(\A_G)^2 + \lambda^2-1) \cdot (1+\lambda)^{\Delta_\tau}}{\abs{X} -\abs{\tau} - (1+\lambda)^{\Delta_\tau}} \underset{\eqref{eq:exp}}{\leq} \frac{(\lambda \cdot \lambda_2(\A_G)^2 + \lambda^2-1) \cdot \Delta^{a-\frac{1}{2}}}{\abs{X} -\abs{\tau} - \Delta^{a-\frac{1}{2}}}\mper
    \end{align*}
    Next, we invoke \Cref{thm:friedman2} for $G$, which holds with high probability to conclude $\lambda_2(\A_G) \leq 2\sqrt{\Delta-1} + o(1)$. Applying this along with $\lambda < 1$ to the above gives us
    \begin{align*}
        \lambda_2(\P_\tau) \leq \frac{4\lambda \Delta \cdot \Delta^{a-\frac{1}{2}}}{\abs{X}-\abs{\tau}-\Delta^{a-\frac{1}{2}}} \underset{\abs{\tau} \leq \frac{\abs{X}}{\Delta^a}}{=} (4+o_\Delta(1)) \frac{\lambda \Delta^{\frac{1}{2}+a}}{\abs{X}}\mper
    \end{align*}
    By setting $\lambda \lesssim \frac{1}{\sqrt{\Delta}}$ with a small enough constant this bound becomes $\frac{\Delta^a}{3\abs{X}}$, which can be made less than $\frac{1}{2(\abs{\tau}+2)}$ assuming $\abs{\tau} \leq \frac{\abs{X}}{\Delta^a}$.
    
    On the other hand, if  $\abs{\tau} \geq \frac{\abs{X}}{\Delta^a}$, then using \eqref{eq:explarge} in place of \eqref{eq:exp} we have
    \begin{align*}
        \lambda_2(\P_\tau) \leq \frac{2(\lambda \cdot \lambda_2(\A_G)^2 + \lambda^2-1) }{\abs{X} -\abs{\tau} -2} \underset{\lambda_2(\A_G) \leq 2\sqrt{\Delta-1}+o(1)}{\leq} \frac{8 \lambda \Delta}{\abs{X}-\abs{\tau}-2} \underset{\abs{\tau} \leq \frac{\abs{X}}{\sqrt{\Delta}}}{=} (8+o_\Delta(1)) \frac{\lambda \Delta}{\abs{X}}\mper
    \end{align*}
    Once again setting $\lambda \lesssim \frac{1}{\sqrt{\Delta}}$ with a small enough constant gives the bound $\frac{\sqrt{\Delta}}{3\abs{X}}$. Since $\abs{\tau} \leq 4\lambda \abs{X}$, we can assume $\abs{\tau} \leq \frac{\abs{X}}{\sqrt{\Delta}}$, which makes this bound good enough to achieve $\frac{1}{2(\abs{\tau}+2)}$. It suffices then to prove \eqref{eq:exp} and \eqref{eq:explarge} to finish.

Starting with \eqref{eq:exp}, we begin by bounding $\Delta_\tau = \E_{\substack{v \sim \calX_\tau(1)}}\sbra{\abs{N_\tau(v)}}$. To do so, we observe $\Delta_\tau$ is just the normalized count of the number of edges going from $\calX_\tau(1)$ to $Y \setminus N\sbra{\tau}$. This quantity is then exactly $\Delta \cdot \abs{Y \setminus N\sbra{\tau}}$, since none of the edges from $Y \setminus N\sbra{\tau}$ can go into $\calX_\tau(1)$ and $G$ is $\Delta$-regular. Finally, we invoke \Cref{lem:bipartiteconc} to get $\abs{Y \setminus N\sbra{\tau}} \leq \frac{(\ell+1) \log \Delta}{\sqrt{\Delta}}\abs{Y}$ for some constant $\ell > 0$ to be chosen later. This allows us to immediately bound
\begin{align*}
    \Delta_\tau \leq \frac{\Delta \cdot \frac{(\ell+1) \log \Delta}{\sqrt{\Delta}}\abs{Y}}{\abs{X}-\abs{\tau}} \leq (\ell+1+o_\Delta(1)) \cdot \sqrt{\Delta}\log\Delta\mper
\end{align*}
Substituting this into $(1+\lambda)^{\Delta_\tau}$ we observe
\begin{align*}
    (1 + \lambda)^{\Delta_\tau} \leq \exp\pbra{\lambda \cdot (\ell+1+o_\Delta(1)) \sqrt{\Delta}\log\Delta)} \leq \Delta^{a-\frac{1}{2}}\mcom
\end{align*}
where the exponent in the last term is made as small as needed by choosing $\lambda \lesssim \frac{1}{\sqrt{\Delta}}$.

Similarly, for \eqref{eq:explarge}, note since $a < 1$ we can use \Cref{lem:bipartitelargeconc} to achieve the improved bound $\abs{Y \setminus N\sbra{\tau}} \leq \frac{\abs{Y}}{\Delta^b}$ for any constant $b \in (0,1)$. This allows us to prove
\begin{align*}
    \Delta_\tau \leq \frac{\Delta \cdot \frac{\abs{Y}}{\Delta^b}}{\abs{X}-\abs{\tau}} \leq (1+o_\Delta(1)) \cdot \Delta^{1-b}\mcom
\end{align*}
and once again substitute to get
\begin{align*}
    (1 + \lambda)^{\Delta_\tau} \leq \exp \left(\lambda \cdot (1+o_\Delta(1)) \Delta^{1-b}\right) < 2\mcom
\end{align*}
by taking $b > \frac{1}{2}$ and $\lambda \lesssim \frac{1}{\sqrt{\Delta}}$.
\end{proof}

In the rest of this section we prove \Cref{lem:secondeigen}.

\begin{proof}[Proof of \Cref{lem:secondeigen}]

    We begin by introducing the following (multi)graph we call the $\tau$-neighbor graph.
    \begin{definition}[$\tau$-neighbor graph]
        \label{def:nbrgraph}
        Let $G = (X, Y, E)$ be a bipartite graph. For $\tau \subseteq X$, let $G_\tau$ be the graph on $X \cup Y$ with all edges adjacent to $\tau \cup N\sbra{\tau}$ removed. The ($X$-sided) $\tau$-neighbor graph of $G$ is then $H_\tau = (G_\tau^2)\sbra{\calX_\tau(1)}$. In other words, $H_\tau$ is the graph with adjacency matrix $\A_{H_\tau} \in \R^{\calX_\tau(1) \times \calX_\tau(1)}$ satisfying for $u, v \in \calX_\tau(1)$
        \begin{align*}
            \A_{H_\tau}(u,v) = \abs{N_\tau(u) \cap N_\tau(v)}\mper
        \end{align*}
    \end{definition}

    Our first key observation is that the $\tau$-neighbor graph characterizes the top-link random walk operators via the following equation.
    \begin{equation}\label{eq:bip2goal2}
        \Pi_\tau\P_\tau  = \frac{1}{2Z_\tau}\wt\Pi_\tau (1+\lambda)^{\A_{H_\tau}} \wt\Pi_\tau\mper
    \end{equation}
    We define the matrix exponential\footnote{Note that this differs than the more standard functional definition of $c^X$.} here as
    \begin{align*}
        (1+\lambda)^{\A_{H_\tau}} = \begin{cases}
            (1+\lambda)^{\abs{N_\tau(u) \cap N_\tau(v)}} & u \neq v \in \calX_\tau(1)\\
            0 & \text{otherwise}
        \end{cases}\mcom
    \end{align*}
    let $\wt\Pi_\tau = \Gamma_\tau^{-1}\Pi_\tau \text{ for } \Gamma_\tau(u,u) = \frac{Z_\tau^{(u)}}{2Z_\tau}$, and define the constants $Z_\tau$ and $Z^{(v)}_\tau$ as
    \begin{align*}
        &Z^{(u)}_\tau \coloneqq \sum_{\substack{v \in \calX_\tau(1) \\ v \neq u}} (1+\lambda)^{-\abs{N_\tau(v)}}(1+\lambda)^{\abs{N_\tau(u) \cap N_\tau(v)}}\mcom\\
        &Z_\tau \coloneqq \sum_{w \in \calX_\tau(1)} (1+\lambda)^{-\abs{N_\tau(w)}} \cdot Z_\tau^{(w)} \mper
    \end{align*}
    To show \eqref{eq:bip2goal2} it suffices to compute entry-wise $\P_\tau(u,v) = \frac{(1+\lambda)^{-\abs{N_\tau(v)}}(1+\lambda)^{\abs{N_\tau(u) \cap N_\tau(v)}}}{Z_\tau^{(u)}}$ and $\Pi_\tau(u,u) = \frac{1}{2Z_\tau} (1+\lambda)^{-\abs{N_\tau(u)}} \cdot Z_\tau^{(u)}$ directly from the definition of the one-sided slice.

    Our primary goal is then to show
    \begin{equation}\label{eq:bip2goal1}
        (1+\lambda)^{\A_{H_\tau}} \preceq (\mathbf{1}\mathbf{1}^\top + \lambda \A_{G^2\sbra{\calX_\tau(1)}} + (\lambda^2 - 1) \Id)\mper
    \end{equation}
    Combining this with \eqref{eq:bip2goal2} immediately yields
    \begin{equation*}
        \Pi_\tau \P_\tau \preceq \frac{1}{2Z_\tau}\wt\Pi_\tau (\mathbf{1}\mathbf{1}^\top + \lambda \A_{G^2\sbra{\calX_\tau(1)}} + (\lambda^2 - 1) \Id) \wt\Pi_\tau \mper
    \end{equation*}

    To see why this is sufficient, observe that if we subtract a certain multiple of $\wt\Pi_\tau \mathbf{1}\mathbf{1}^\top \wt\Pi_\tau$ from above we get
    \begin{align*}
         \Pi_\tau \P_\tau - \frac{1}{2Z_\tau}\left(1+\frac{\lambda \Delta^2}{\abs{X}}\right) \wt\Pi_\tau \mathbf{1}\mathbf{1}^\top \wt\Pi_\tau &\preceq \frac{1}{2Z_\tau} \wt\Pi_\tau \pbra{\lambda \A_{G^2\sbra{\calX_\tau(1)}} - \frac{\lambda \Delta^2}{\abs{X}} \mathbf{1}\mathbf{1}^\top + (\lambda^2-1)\Id} \wt\Pi_\tau\\
         &\preceq \frac{\lambda \cdot \lambda_2(\A_G)^2 + \lambda^2 - 1}{2Z_\tau} \wt\Pi_\tau^2\\
         &\preceq \max_{u \in \calX_\tau(1)} \frac{\lambda \cdot \lambda_2(\A_G)^2 + \lambda^2 - 1 }{2Z_\tau} \cdot 
         \frac{2Z_\tau}{Z_\tau^{(u)}}(1+\lambda)^{-\abs{N_\tau(u)}} \cdot \Pi_\tau\\
         &\preceq \max_{u \in \calX_\tau(1)} \frac{\lambda \cdot \lambda_2(\A_G)^2 + \lambda^2 - 1 }{Z_\tau^{(u)}} \Pi_\tau\mper
    \end{align*}

    The second line uses that $\lambdamax(\A_{G^2\sbra{\calX_\tau(1)}} - \frac{\Delta^2}{\abs{X}} \mathbf{1}\mathbf{1}^\top) \leq \lambda_2(\A_G)^2$. To see this, we first extend $\A_{G^2\sbra{\calX_\tau(1)}}$ to $\R^{X \times X}$ by appending $0$s. We can then apply the \nameref{fact:cauchyinterlacing} to see $\A_{G^2\sbra{\calX_\tau(1)}} \preceq \A_{G^2\sbra{X}}$, using implicitly that $\A_{G^2\sbra{X}}$ is a submatrix of $\A_G^2$ and is therefore positive semidefinite. We then use that $\mathbf{1}$ is the top eigenvector of $\A_{G^2\sbra{X}}$ with eigenvalue $\Delta^2$ since $G^2\sbra{X}$ is regular and conclude with $\lambda_2(\A_{G^2\sbra{X}}) \leq \lambda_2(\A_G)^2$. Finally, the fourth line follows since $\Gamma_\tau^{-1} \Pi_\tau \Gamma_\tau^{-1}(i,i) = \frac{2Z_\tau}{Z_\tau^{(i)}} (1+\lambda)^{-\abs{N_\tau(i)}}$.  From here, we can multiply both sides of each equation by $\Pi_\tau^{-1/2}$ and by the similarity of $\Pi_\tau^{1/2}\P_\tau \Pi_\tau^{-1/2}$ and $\P_\tau$ and the \nameref{fact:cauchyinterlacing} we conclude $\lambda_2(\P_\tau) \leq \max_{i \in \calX_\tau(1)} \frac{(\lambda \cdot \lambda_2(\A_G)^2 + \lambda^2 - 1)}{Z_\tau^{(i)}}$.

    We finish by applying Jensen's inequality to see:
    \begin{align*}
        Z_\tau^{(u)} &= \sum_{\substack{v \in \calX_\tau(1) \\ v \neq u}} (1+\lambda)^{-\abs{N_\tau(v)}}(1+\lambda)^{\abs{N_\tau(u) \cap N_\tau(v)}}\\
        &\geq \sum_{\substack{v \in \calX_\tau(1)}} (1+\lambda)^{-\abs{N_\tau(v)}} - (1+\lambda)^{-\abs{N_\tau(u)}}\\
        &\geq (\abs{X}-\abs{\tau}) \E_{\substack{v \sim \calX_\tau(1)}}\sbra{(1+\lambda)^{-\abs{N_\tau(v)}}} - 1\\
        &\geq \frac{ \abs{X} - \abs{\tau}}{(1+\lambda)^{\E_{\substack{v \sim \calX_\tau(1)}}\sbra{\abs{N_\tau(v)}}} } - 1\mcom
    \end{align*}
    where $v \sim \calX_\tau(1)$ is with respect to the uniform measure. Plugging this into the 
above immediately yields $\lambda_2(\P_\tau) \leq \frac{(\lambda \cdot \lambda_2(\A_G)^2 + \lambda^2-1) \cdot (1+\lambda)^{\E_{\substack{v \sim \calX_\tau(1)}}\sbra{\abs{N_\tau(j)}}}}{\abs{X} -\abs{\tau} - (1+\lambda)^{\E_{\substack{v \sim \calX_\tau(1)}}\sbra{\abs{N_\tau(j)}}}}$ as desired.

It remains to prove
\eqref{eq:bip2goal1}.     We do that  in two parts. First, assuming $\max_{u \neq v \in \calX_\tau(1)} \abs{N_\tau(u) \cap N_\tau(v)} \leq 2$ and that no $u \in \calX_\tau(1)$ shares $2$ common neighbors with more than one other vertex we have 
    \begin{equation}\label{eq:bip2goal3}
        (1+\lambda)^{\A_{H_\tau}} \preceq \mathbf{1}\mathbf{1}^\top + \lambda \A_{H_\tau} + (\lambda^2-1) \Id \mper
    \end{equation}
    Second, we show
    \begin{equation}\label{eq:bip2goal4}
        \A_{H_\tau} \preceq \A_{G^2\sbra{\calX_\tau(1)}}\mper
    \end{equation}
    Putting \eqref{eq:bip2goal3} and \eqref{eq:bip2goal4} together immediately gives \eqref{eq:bip2goal1}, so we prove them in turn.

    Towards \eqref{eq:bip2goal3}, let $\A^{(\ell)} \in \R^{\calX_\tau(1) \times \calX_\tau(1)}$ be defined as the submatrix of $\A$ consisting only of entries with value exactly $\ell$, the rest zeroed out. Let $\A'_{H_\tau} = \A_{H_\tau} - \diag(\A_{H_\tau})$. As such, we can write
        \begin{align*}
            (1+\lambda)^{\A_{H_\tau}} &= \mathbf{1}\mathbf{1}^\top - \Id + \lambda \A'^{(1)}_{H_\tau} + \pbra{\lambda + \frac{\lambda^2}{2}} \A'^{(2)}_{H_\tau}\\
            &= \mathbf{1}\mathbf{1}^\top - \Id + \lambda \A'_{H_\tau} + \frac{\lambda^2}{2} \A'^{(2)}_{H_\tau}\mper
        \end{align*}
        First, we observe $\A'_{H_\tau} \preceq \A_{H_\tau}$ since the diagonal is positive semi-definite in $\A_{H_\tau}$. It suffices then to bound the spectral norm of the latter term, and we have $\norm{\A'^{(2)}_{H_\tau}}{2} \leq 2$ which is just the maximum row sum. To see this, we invoke \Cref{lem:neighborbound}, which guarantees no vertex shares $2$ neighbors with more than one other vertex. This proves \eqref{eq:bip2goal3}.

    We finish by showing \eqref{eq:bip2goal4}. We write
        \begin{align*}
            &\A_G \id_{V \setminus (\tau \cup N\sbra{\tau})} \A_G \preceq \A_G^2\\
            &\implies \id_{\calX_\tau(1)} \A_G \id_{V \setminus (\tau \cup N\sbra{\tau})} \A_G \id_{\calX_\tau(1)} \preceq \id_{\calX_\tau(1)} \A_G^2 \id_{\calX_\tau(1)}\\
            &\implies \A_{H_\tau} \preceq \A_{G^2\sbra{\calX_\tau(1)}}\mper
        \end{align*}
         The second line uses that for $A \preceq B$ and $P \succeq 0$, $PAP^\top \preceq PBP^\top$. The third line then invokes the definitions of $H_\tau$ and $G^2\sbra{\calX_\tau(1)}$ directly.
    \end{proof}

\bibliographystyle{alpha}
\bibliography{references}

\appendix
\section{Trickle-down in the Independent Set Slice}
\label{sec:regular}

In this section, we give a simple proof of \Cref{thm:regular}, showing on bounded-degree random regular graphs the natural down-up walk on the slice mixes in polynomial-time beyond the critical occupancy. As in previous sections, we begin by bounding the second eigenvalue of the top-links.

\begin{lemma}[Top-Link Expansion in the Independent Set Slice]
    \label{lem:regular}
    Let $(\calX, \mu^{(k)})$ be the independent set slice of a $\Delta$-regular graph $G = (V, E)$ and let $\tau \in \calI_{k-2}(G)$. Then we have
    \begin{align*}
        \lambda_2(\P_\tau) \leq \frac{-\lambdamin(\A_G)-1}{\abs{V \setminus (\tau \cup N\sbra{\tau})}-\Delta}\mper
    \end{align*}
\end{lemma}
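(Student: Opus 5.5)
The plan mirrors the proof of \Cref{lem:main}, now with the ordinary complement in place of the bipartite complement. Fix $\tau \in \calI_{k-2}(G)$ and let $W = V \setminus (\tau \cup N\sbra{\tau})$, the set of vertices that may still be added. Since $\mu^{(k)}$ is uniform on $\calI_k(G)$, we have $\P_\tau(u,v) = \frac{1}{\codim(\tau)-1}\Pr[v \in S \mid u]$, and for $u \in W$ the conditional law is uniform over those $v \in W$ with $v \neq u$ and $\{u,v\}\notin E$. So $\P_\tau = \Gamma_{G_\tau}^{-1}\A_{G_\tau}$ with $G_\tau = \overline{G\sbra{W}}$, the usual complement of the induced subgraph (this is the characterization referred to as \Cref{prop:regularwalk}). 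Conjugating by $\Gamma_{G_\tau}^{1/2}$ gives
\[
\lambda_2(\P_\tau) = \lambda_2\pbra{\Gamma_{G_\tau}^{-1/2}\A_{G_\tau}\Gamma_{G_\tau}^{-1/2}} \le \max\cbra{0,\lambda_2(\A_{G_\tau})}\cdot \norm{\Gamma_{G_\tau}^{-1}}{2}.
\]
Because a vertex of $W$ has at most $\Delta$ neighbors in $W$, every degree in $G_\tau$ is at least $\abs{W}-1-\Delta$. That is the denominator, up to the additive constant, which I expect gets absorbed or was handled loosely. (The inequality relating $\lambda_2$ of $D^{-1/2}AD^{-1/2}$ to $\lambda_2(A)\norm{D^{-1}}{2}$ follows from Courant--Fischer when the relevant eigenvalue is nonnegative, and the bound is trivial otherwise.)

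The remaining step is the spectral bound $\lambda_2(\A_{G_\tau}) \le -\lambdamin(\A_G)-1$. I would write
\[
\A_{G_\tau} = \mathbf{1}_W\mathbf{1}_W^\top - \Id_W - \A_{G\sbra{W}},
\]
which is a principal submatrix of $\overline{\A_G} = \mathbf{1}\mathbf{1}^\top - \Id - \A_G$. Removing the rank-one term with \Cref{cor:cauchyinterlacing}, I take $A = \A_{G_\tau}$ and $v = \mathbf{1}_W$, so that $A - vv^\top = -\Id_W - \A_{G\sbra{W}}$. Next I compare this with $B = (-1-\lambdamin(\A_G))\Id_W$. By Cauchy interlacing, $\lambdamin(\A_{G\sbra{W}}) \ge \lambdamin(\A_G)$, hence $-\A_{G\sbra{W}} \preceq -\lambdamin(\A_G)\Id_W$. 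Together these give $\lambda_2(\A_{G_\tau}) \le -\lambdamin(\A_G)-1$.

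This route works directly on the submatrix and avoids needing the global statement $\lambda_2(\overline{\A_G}) \le -\lambdamin-1$ for the regular complement, although that statement also holds since $\overline{G}$ is regular with the shared eigenvector $\mathbf{1}$.

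I expect the main obstacle to be bookkeeping rather than ideas. First, the identification of $\P_\tau$ needs the factor $\frac{1}{\codim-1}=1$ and the absence of isolated-vertex issues; when $G_\tau$ has isolated vertices, these are the empty rows and columns dropped as in \Cref{lem:main}. Second, the degree lower bound must match the stated $\abs{W}-\Delta$, whereas the precise value is $\abs{W}-1-\deg_{G\sbra W}(u)$. If the $-1$ causes trouble, I would instead absorb it by writing $A - vv^\top + \Id$ and using the looser numerator, which is already why the numerator carries the $-1$.
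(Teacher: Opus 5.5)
Your route is the paper's route. Both arguments identify $\P_\tau=\Gamma_{G_\tau}^{-1}\A_{G_\tau}$ with $G_\tau=\overline{G\sbra{W}}$, bound $\lambda_2(\A_{G_\tau})\le-\lambdamin(\A_G)-1$, and then divide by the minimum degree of $G_\tau$. In the spectral step only the order differs:
\begin{itemize}
\item the paper first interlaces $\A_{G_\tau}$ into $\overline{\A_G}$ and then removes $\mathbf{1}\mathbf{1}^\top$ with \Cref{cor:cauchyinterlacing};
\item you remove the rank-one term on the submatrix and interlace $\A_{G\sbra{W}}$ into $\A_G$.
\end{itemize}
This makes no difference. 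The place where you hesitate is exactly where the paper slips. It asserts that the smallest degree of $G_\tau$ is $\abs{W}-\Delta$. As you observed, a vertex of $W$ whose $\Delta$ neighbours all lie in $W$ has degree $\abs{W}-1-\Delta$ in $G_\tau$.

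This off-by-one cannot be absorbed, so your patch cannot work: the inequality as displayed is false. Take $k=2$, $\tau=\varnothing$, and any connected $\Delta$-regular $G$ on $n$ vertices. Then $\P_\varnothing=\A_{\overline G}/(n-1-\Delta)$, so
\[
\lambda_2(\P_\varnothing)=\frac{-1-\lambdamin(\A_G)}{n-1-\Delta}.
\]
This strictly exceeds the claimed $\frac{-1-\lambdamin(\A_G)}{n-\Delta}$ whenever $\lambdamin(\A_G)<-1$. For example, the Petersen graph gives $1/6>1/7$, and $C_5$ gives $\cos(2\pi/5)\approx 0.309>0.206$.

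What your argument proves, and what the paper's argument proves once the degree claim is corrected, is
\[
\lambda_2(\P_\tau)\le\frac{-\lambdamin(\A_G)-1}{\abs{W}-\Delta-1},
\]
and this is tight in the example above. Your idea of trading the $-1$ from $-\Id$ against the $-1$ in the degrees does not recover the displayed bound either. It yields the weaker $\frac{-\lambdamin(\A_G)}{\abs{W}-\Delta}$, valid when $\abs{W}>\Delta+1$.

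Either corrected form is all that the proof of \Cref{thm:regular} uses, since there $\abs{W}$ is of order $\frac{\log^{3/2}\Delta}{\sqrt{\Delta}}\abs{V}$. You should therefore state and prove one of those versions rather than aim for the displayed inequality.
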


Note, a similar result appeared in \cite{AlevL20} as one of the first applications of the trickle-down machinery towards Markov chain mixing. Our proof is conceptually the same but we repeat it to highlight a particular optimization we make in order to improve the result for random regular graphs, but first we show how \Cref{lem:regular} can be used to prove \Cref{thm:regular}. We will additionally need \Cref{lem:regularconc}, the random regular analogue of \Cref{lem:bipartiteconc}, which we delay until the next section.

\begin{proof}[Proof of \Cref{thm:regular}]
    Let $G = (V,E)$ be a random $\Delta$-regular graph and fix a size $k \leq \frac{\log \Delta}{(2+\gamma)\Delta} \abs{V}+2$ where $\gamma$ is taken from \Cref{lem:regularconc}, then consider the independent set slice $(\calX, \mu^{(k)})$. Our goal is to show for any $\tau \in (\calX, \mu^{(k)})$ with $\codim(\tau) = 2$, $\lambda_2(\P_\tau) \leq \frac{1}{2(\abs{\tau}+2)}$, and further, the complex is connected. 
    
    We begin with an application of \Cref{lem:regularconc}, which says for any such $\tau$ and any $\ell > 0$ we have
    \begin{align*}
        \abs{V \setminus (\tau \cup N\sbra{\tau})} \geq \frac{\ell\log \Delta}{\sqrt{\Delta}}\abs{V}\mper
    \end{align*}
    We can then invoke \Cref{lem:regular} for this complex which guarantees
    \begin{align*}
        \lambda_2(\P_\tau) \leq \frac{1}{\ell} \cdot \frac{-\lambdamin(\A_G) \sqrt{\Delta}}{\log\Delta \abs{V}} \leq\frac{(2+\gamma)}{2\ell} \cdot \frac{-\lambdamin(\A_G) \sqrt{\Delta}}{\log\Delta \abs{V}}\mper
    \end{align*}
    Applying \Cref{thm:friedman}, we get with high probability $-\lambdamin(\A_G) \leq 2\sqrt{\Delta-1} + o(1)$. Applying this yields
    \begin{align*}
        \lambda_2(\P_\tau) \leq \frac{1}{\ell} \cdot \frac{(2+\gamma)\Delta}{\log\Delta \abs{V}} \leq  \frac{1}{\ell\abs{\tau}}\mper
    \end{align*}
    The result follows by setting $\ell$ large enough.
    
    On the other hand, for any  $k \leq \frac{\log\Delta}{(2+\gamma) \Delta} \abs{V}+2$ and the independent set slice $(\calX, \mu^{(k)})$  of a random $\Delta$-regular graph $G = (V,E)$, with high probability over the randomness of $G$, the 1-skeleton of any link of codimension at least $2$ is connected. This is simply because by \Cref{lem:regular}, for any link $\tau$, if we truncate the complex to dimension $\abs{\tau}+2$, we have  $\lambda_2(\P_\tau) < 1$. Therefore, $(\calX, \mu^{(k)})$ is connected.
\end{proof}

We devote the rest of this section to proving \Cref{lem:regular} and \Cref{lem:regularconc}.

\begin{proof}[Proof of \Cref{lem:regular}]
    We start with the following observation, characterizing the random walk matrix within each top-link.
    \begin{observation}
        \label{prop:regularwalk}
        Let $(\calX, \mu^{(k)})$ be the independent set slice of a graph $G = (V, E)$ and let $\tau \in \calI_{k-2}(G)$. Define the graph $G_\tau = \overline{G\sbra{V \setminus (\tau \cup N\sbra{\tau})}}$ where $\overline{G}$ is the graph complement. Then we have
        \begin{align*}
            \P_\tau = \Gamma^{-1}_{G_\tau}\A_{G_\tau}\mcom
        \end{align*}
        up to empty rows and columns. Here $\A_{G_\tau}$ is the adjacency matrix of $G_\tau$ and $\Gamma_{G_\tau}$ is its diagonal degree matrix.
    \end{observation}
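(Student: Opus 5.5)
We need to prove the Observation: for $\tau \in \calI_{k-2}(G)$, $\P_\tau = \Gamma^{-1}_{G_\tau}\A_{G_\tau}$ on the vertices $V\setminus(\tau\cup N\sbra{\tau})$, where $G_\tau$ is the complement of the induced graph there.

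The plan is to compute $\P_\tau$ entrywise from the definition of the local walk operator, $\P_\tau(u,v) = \frac{1}{\codim(\tau)-1}\Pr_{S\sim\mu_{\mid\tau\cup\{u\}}}\sbra{v\in S}$, with $\codim(\tau)=2$. First I will identify the support of the link. Since $\mu^{(k)}$ is uniform on $\calI_k(G)$, the facets of $\calX_\tau$ are exactly the pairs $\{u,v\}$ such that $\tau\cup\{u,v\}$ is an independent set of size $k$. Such a pair must satisfy three conditions: $u,v\notin\tau\cup N\sbra{\tau}$, $u\neq v$, and $\{u,v\}\notin E$. These are precisely the edges of $G_\tau = \overline{G\sbra{V\setminus(\tau\cup N\sbra{\tau})}}$. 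Each such facet has the same weight under $\mu_{\mid\tau}$, because all of them come from distinct size-$k$ independent sets of equal probability. Hence the weighted 1-skeleton of the link is $G_\tau$ with uniform edge weights. Vertices in $\tau\cup N\sbra{\tau}$ do not appear in $\calX_\tau(1)$; they correspond to the ``empty rows and columns'' in the statement.

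Next, I fix $u\in\calX_\tau(1)$. The link $\calX_{\tau\cup\{u\}}$ consists of the single vertices $v$ with $\{u,v\}\in E(G_\tau)$, and $\mu_{\mid\tau\cup\{u\}}$ is uniform over them. Therefore $\Pr[v\in S] = \frac{\mathbb{1}[\{u,v\}\in E(G_\tau)]}{\deg_{G_\tau}(u)}$. With the factor $\frac{1}{\codim(\tau)-1}=1$, this gives $\P_\tau(u,v) = \A_{G_\tau}(u,v)/\Gamma_{G_\tau}(u,u)$, i.e.\ $\P_\tau=\Gamma_{G_\tau}^{-1}\A_{G_\tau}$.

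The only subtle point is degenerate vertices. A vertex of $V\setminus(\tau\cup N\sbra{\tau})$ with degree $0$ in $G_\tau$ (adjacent in $G$ to every other free vertex) lies in no facet, so it is absent from $\calX_\tau(1)$. It is harmless to drop it, again as an empty row and column. So nothing beyond the definitions and the uniformity of $\mu^{(k)}$ is needed, and I expect no real obstacle. The one step requiring care is checking that conditioning on $\tau$ preserves uniformity over the extensions.
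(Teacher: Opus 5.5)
Your proposal is correct and follows the same approach as the paper. The paper states this as an observation without a written proof, implicitly relying on the same entrywise computation (it does this explicitly only for the one-sided slice). Your argument fills that gap correctly. It checks that the link's facets are exactly the edges of $G_\tau$, that they are uniformly weighted because $\mu^{(k)}$ is uniform, and that for $\mathrm{codim}(\tau)=2$ the normalizing factor is $1$. It also correctly handles isolated free vertices as empty rows and columns.
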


    With this characterization in mind, we will show that
\begin{equation}\label{eq:fixedgoal1}
\lambda_2(\A_{G_\tau})\leq -\lambdamin(\A_G)-1\mper
\end{equation}
Assuming this, 
we observe conjugating by $\Gamma_{G_\tau}^{1/2}$ yields
    \begin{align*}
        \lambda_2(\P_\tau) = \lambda_2(\Gamma_{G_\tau}^{1/2} \P_\tau \Gamma_{G_\tau}^{-1/2}) = \lambda_2(\Gamma_{G_\tau}^{-1/2} \A_{G_\tau} \Gamma_{G_\tau}^{-1/2}) \leq -\lambdamin(\A_G) \cdot \norm{\Gamma_{G_\tau}^{-1}}{2}\mper
    \end{align*}
    We conclude the proof by observing $\norm{\Gamma_{G_\tau}^{-1}}{2}$ is just the reciprocal of the smallest degree in $G_\tau$. Since $G_\tau$ is the complement, this is always $\frac{1}{\abs{V \setminus (\tau \cup N\sbra{\tau})} - \Delta}$. It remains to prove \eqref{eq:fixedgoal1}.

    The key observation is that we can rewrite
    \begin{align*}
        \A_{G_\tau} = \mathbf{1}\mathbf{1}^\top - \Id - \A_{G\sbra{V \setminus (\tau \cup N\sbra{\tau})}}\mcom
    \end{align*}
    which is just the formula for the graph complement. Notice this is a principal submatrix of the entire complement $\overline{\A_G}$, so by the \nameref{fact:cauchyinterlacing} we have
    \begin{align*}
        \lambda_2(\A_{G_\tau}) \leq \lambda_2(\overline{\A_G})\mper
    \end{align*}
    So, to prove \eqref{eq:fixedgoal1} it is enough to show that 
    \begin{equation}
        \label{eq:fixgoal2}
        \lambda_2(\overline{\A_G}) \leq -\lambdamin(\A_G)-1\mper
    \end{equation}  
    By our earlier formula we have $\overline{\A_G} = \mathbf{1}\mathbf{1}^\top - \Id - \A_G$, so a quick application of \Cref{cor:cauchyinterlacing} yields
    \begin{align*}
        \lambda_2(\overline{\A_G}) \leq \lambda_2(\mathbf{1}\mathbf{1}^\top) + \lambdamax(-\Id - \A_G) = -\lambdamin(\A_G) - 1\mcom
    \end{align*}
    proving \eqref{eq:fixgoal2} as desired.
\end{proof}

\subsection{Neighborhood concentration in random regular graphs}

\begin{lemma}
    \label{lem:regularconc}
    Let $G=(V,E)$ be a random $\Delta$-regular graph. For any $\ell > 0$, there exists $\gamma = o_\Delta(1)$ such that with high probability over the randomness of $G$, every $\tau \subseteq V$ satisfies the following anti-expansion property: if $\abs{\tau} \leq \frac{\log \Delta}{(2+\gamma)\Delta} \abs{V}$ then
    \begin{align*}
            \abs{V \setminus (\tau \cup N\sbra{\tau})} \geq \frac{\ell\log^{3/2} \Delta}{\sqrt{\Delta}}\abs{V}\mper
        \end{align*}
\end{lemma}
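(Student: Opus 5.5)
We follow the proof of \Cref{lem:bipartiteconc} closely, adapting it to the (non-bipartite) pairing model for random $\Delta$-regular graphs and transferring the result via \Cref{fact:contiguity}. By monotonicity, $\abs{V\setminus(\tau\cup N\sbra{\tau})}$ is non-increasing as $\tau$ grows, so it suffices to handle sets of size exactly $t_0=\frac{\log\Delta}{(2+\gamma)\Delta}\abs{V}$.

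\textbf{Step 1 (the expectation).} For fixed $\tau$ of size $t_0$ and fixed $v\notin\tau$, match the $\Delta$ copies in $C(v)$ one at a time. Each copy avoids $C(\tau)$ with probability at least $1-\frac{\Delta\abs{\tau}}{\Delta\abs{V}-O(\Delta)}$. This gives
\begin{align*}
\Pr\sbra{v\notin N\sbra{\tau}} \geq (1-o_\Delta(1))\Delta^{-1/(2+\gamma)}.
\end{align*}
Since $\abs{\tau}=o(\abs{V})$, summing over the vertices outside $\tau$ yields
\begin{align*}
\E\abs{V\setminus(\tau\cup N\sbra{\tau})}\ge(1-o_\Delta(1))\Delta^{-1/(2+\gamma)}\abs{V}.
\end{align*}
Now choose $\gamma=\Theta\pbra{\frac{\log\log\Delta+\log\ell}{\log\Delta}}=o_\Delta(1)$ so that this expectation equals $(\ell+\tfrac12)\frac{\log^{3/2}\Delta}{\sqrt\Delta}\abs{V}$. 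This is possible because
\begin{align*}
\Delta^{-1/(2+\gamma)}=\Delta^{-1/2}\exp\pbra{\Theta(\gamma)\log\Delta},
\end{align*}
and the extra $\sqrt{\log\Delta}$ factor only changes the constant in $\gamma$.

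\textbf{Step 2 (concentration).} Expose the matches of the $\Delta\abs{\tau}$ copies in $C(\tau)$ one by one and form the Doob martingale for $\abs{N\sbra{\tau}\setminus\tau}$. We then repeat \Cref{lem:martingalediff} via the same swapping coupling.

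The main obstacle is that, unlike the bipartite case, a copy in $C(\tau)$ may match to another copy in $C(\tau)$, consuming two copies of $\tau$ at once. It may also match to a copy in $C(v)$ for some $v$ already matched into. So the image of the conditional-expectation function $f$ now takes up to three values rather than two.

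To handle this, I would again couple by swapping the two candidate targets. I expect to show $\abs{Z_i-Z_{i-1}}\le 2$ with probability $1$, and that the conditional variance is at most
\begin{align*}
O\pbra{\pbra{1-\frac{\Delta-1}{\Delta\abs{V}}}^{2(\Delta\abs{\tau}-i)}}.
\end{align*}
This variance bound uses the same geometric decay argument: a difference persists only if no later exposure hits the swapped cloud. Summing the geometric series gives a predictable variation $V\le O(\abs{V})$.

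Freedman's inequality with deviation $t=\frac{\log^{3/2}\Delta}{2\sqrt\Delta}\abs{V}$ then gives failure probability
\begin{align*}
\exp\pbra{-\Omega\pbra{\frac{\log^3\Delta}{\Delta}}\abs{V}}.
\end{align*}

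\textbf{Step 3 (union bound).} As in \eqref{eq:binomest}, the number of choices of $\tau$ is at most
\begin{align*}
\binom{\abs{V}}{t_0}\le\exp\pbra{\frac{\log^2\Delta}{(2+\gamma)\Delta}\abs{V}}.
\end{align*}
This is beaten by the tail bound from Step 2 for large $\Delta$; this is exactly why the extra $\sqrt{\log\Delta}$ in the statement is convenient. So with high probability every such $\tau$ satisfies
\begin{align*}
\abs{V\setminus(\tau\cup N\sbra{\tau})}\ge \ell\frac{\log^{3/2}\Delta}{\sqrt\Delta}\abs{V}.
\end{align*}
Contiguity (\Cref{fact:contiguity}) transfers this from the pairing model to the uniform random $\Delta$-regular graph.
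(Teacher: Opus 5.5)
Your proof is correct in outline and follows the paper's skeleton: reduce to $|\tau|=t_0$, lower-bound the expectation by matching $C(v)$ sequentially, build a Doob martingale over the copies in $C(\tau)$ with a swap coupling, union bound, and finish with contiguity. The concentration step, however, is genuinely different.

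\textbf{The paper's route.} It never attempts a variance bound. It uses the swap only to get $|Z_i-Z_{i-1}|\le 2$, then applies Azuma--Hoeffding with $m=\Delta|\tau|$, giving a tail of $\exp(-t^2/(8\Delta|\tau|))$. Since $\Delta|\tau|=\Theta(|V|\log\Delta)$, it must take $t=\frac{2\log^{3/2}\Delta}{\sqrt{\Delta}}|V|$ to reach an exponent of about $\frac{\log^2\Delta}{\Delta}|V|$, which beats $\binom{|V|}{t_0}$. So the $\log^{3/2}\Delta$ in the statement is exactly the price of Azuma's crude variance proxy.

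\textbf{Your route.} Your remark that the extra $\sqrt{\log\Delta}$ is ``exactly why'' the union bound works is true of the paper's argument, not yours. With Freedman and $V=O(|V|)$, your exponent is $\Omega(\frac{\log^3\Delta}{\Delta}|V|)$, a full $\log\Delta$ of slack. Your argument would in fact prove the stronger bound $\frac{\ell\log\Delta}{\sqrt{\Delta}}|V|$, matching the bipartite lemma. The cost is that you must actually carry out the three-valued variance analysis, which you only sketch (``I expect to show'').

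\textbf{What to watch in that analysis.} The exponent $2(\Delta|\tau|-i)$ is not correct pathwise in the non-bipartite model. Later copies of $C(\tau)$ may already have been consumed by earlier $\tau$--$\tau$ pairings. The number $R_i$ of still-unmatched copies of $C(\tau)$ can therefore be much smaller than $\Delta|\tau|-i$. The difference bound has to be indexed by $R_i$ instead. At non-trivial steps $R_i$ strictly decreases, and trivial steps contribute zero, so the geometric sum still gives $V=O(|V|)$ almost surely, which is what Freedman requires.

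You should also verify the extra term in the ``fresh cloud versus $C(\tau)$'' comparison. Under the swap, the former partner $v_3$ of the consumed $\tau$-copy is lost only if no other $\tau$-copy hits $C(v_3)$. That event has the same geometric decay, so the bound $|Z_i-Z_{i-1}|\le 2$ together with the variance estimate goes through.
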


The proof of this lemma is similar in spirit to \Cref{lem:bipartiteconc}, but we repeat it here for completeness.

\begin{proof}

   It suffices to prove this statement for all sets $\tau$ of size $\frac{\log \Delta}{(2+\gamma) \Delta} \abs{V}$, since the removal of vertices only causes the neighborhood shrink. 
    
    First, by \Cref{lem:singleregular} 
    \begin{equation}\E\sbra{\abs{V \setminus (\tau \cup N\sbra{\tau})}} \geq \frac{1 - o_\Delta(1)}{\Delta^{1/(2+\gamma)}}(\abs{V}-\abs{\tau}) = (\ell+2) \cdot \frac{\log^{3/2} \Delta}{\sqrt{\Delta}}\abs{V} \mcom
    \label{eq:regEYNtau}
    \end{equation}
    where $\gamma=\Theta(\frac{\log\log \Delta + \log \ell}{\log \Delta})$ is chosen such that the equality holds.
    
    Our goal is now to prove concentration of $|N\sbra{\tau}|$ around this expectation sufficient enough to union bound across all $\tau \subseteq X$ with $\abs{\tau} = \frac{\log \Delta}{(2+\gamma) \Delta} \abs{V}$.
    
    We start by fixing $\abs{\tau} = \frac{\log \Delta}{(2+\gamma) \Delta}\abs{V}$ and bounding the total number of sets in $\binom{V}{\abs{\tau}}$ by
    \begin{equation}\label{eq:regbinomest}
        \binom{\abs{V}}{\abs{\tau}} \leq \pbra{\frac{e\abs{V}}{\abs{\tau}}}^{\abs{\tau}} \leq \pbra{\frac{e(2+\gamma)\Delta}{\log \Delta}}^{\abs{\tau}} \leq \exp\pbra{\frac{\log^2 \Delta}{(2+\gamma)\Delta} \abs{V}}\mcom
    \end{equation}
    using standard binomial estimates and the assumption $\gamma < 1$. Using \eqref{eq:regEYNtau} and \eqref{eq:regbinomest} with the following lemma will be sufficient to finish.

    \begin{lemma}
        \label{lem:regmartingaletail}
        For any $t>0$, we have
        $$\Pr\sbra{ |N[\tau]| - \E\sbra{|N[\tau]|} > t} \leq \exp\pbra{ - \frac{1}{8} \cdot \frac{t^2 }{ \Delta\abs{\tau}}}\mper$$
    \end{lemma}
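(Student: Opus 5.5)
Fix $\tau\subseteq V$ and run an edge-exposure (Doob) martingale over the $\Delta\abs{\tau}$ half-edges in the cloud $C(\tau)$, as in the proof of \Cref{lem:martingaletail}. Label these half-edges $1,\dots,\Delta\abs{\tau}$ and reveal their partners $\pi_1,\pi_2,\dots$ one at a time in the non-bipartite pairing model. If a half-edge was already matched earlier (its partner lies in $C(\tau)$ and has been exposed), skip it. Set
\[
Z_i=\E\sbra{\abs{N\sbra{\tau}}\mid \pi_1,\dots,\pi_i}.
\]
Then $Z_0=\E\sbra{\abs{N\sbra{\tau}}}$ and $Z_{\Delta\abs{\tau}}=\abs{N\sbra{\tau}}$, so the tail bound follows once we control the martingale differences. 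Since the target is $\exp(-t^2/(8\Delta\abs{\tau}))$, a one-sided Azuma--Hoeffding inequality with increments bounded by $2$ gives exactly $\exp(-t^2/(2\cdot 4\cdot \Delta\abs{\tau}))$. So the plan needs no variance computation; it only needs $\abs{Z_i-Z_{i-1}}\le 2$.

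To prove the increment bound, fix $\pi_1,\dots,\pi_{i-1}$ and compare two possible values $a,b$ for the partner of half-edge $i$. We use the same switching coupling as in the proof of \Cref{lem:martingalediff}. Take a uniformly random completion $A$ with $\pi_i=a$, and produce a completion $B$ with $\pi_i=b$ by swapping the roles of $a$ and $b$ in $A$. This map is a bijection between completions, so $B$ is uniform. The two matchings differ in at most two pairs: the pair through $i$, and the pair that used $b$ in $A$, which now uses $a$. Changing the endpoint of a single pair changes $\abs{N\sbra{\tau}}$ by at most $1$, because it can add or remove at most one vertex from the neighborhood. Two changed pairs therefore give $\abs{\abs{N(A)}-\abs{N(B)}}\le 2$ pointwise. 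Averaging over the coupling yields $\abs{f(a)-f(b)}\le 2$, and hence $\abs{Z_i-Z_{i-1}}\le 2$.

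Applying Azuma--Hoeffding to the at most $\Delta\abs{\tau}$ nontrivial steps gives the claim.

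The main obstacle is the non-bipartite setting, where partners of half-edges in $C(\tau)$ may themselves lie in $C(\tau)$ or may later be matched to other half-edges of $\tau$. We handle this by the skip convention above, and by noting that $N\sbra{\tau}$ excludes $\tau$ itself, so a pair landing inside $C(\tau)$ only shifts the count by the same bounded amount in the switching argument.

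If one wants contiguity, the event transfers from the pairing model to simple random regular graphs by \Cref{fact:contiguity}. This is used only after the union bound.
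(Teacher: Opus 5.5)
Your proposal is correct and follows the paper's proof essentially step for step: the same Doob edge-exposure martingale over the half-edges of $C(\tau)$, where already-matched half-edges contribute zero increment, the same switching coupling (exchange the roles of the two candidate partners) giving increments bounded by $2$, and Azuma--Hoeffding yielding $\exp(-t^2/(8\Delta\abs{\tau}))$. Your per-pair accounting, that each re-routed pair changes $\abs{N\sbra{\tau}}$ by at most one, usefully spells out the step the paper leaves as ``not hard to see.''
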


    To see why, we plug $t =  \frac{2\log^{3/2} \Delta}{\sqrt{\Delta}}\abs{V}$ into \Cref{lem:regmartingaletail} and apply a union bound over the entirety of $\binom{X}{\abs{\tau}}$ using our estimate \eqref{eq:regbinomest} to yield
    \begin{align*}
         \Pr\sbra{\forall \tau , \, \abs{N\sbra{\tau}} - \E\sbra{|N\sbra{\tau}|} \geq  \frac{2\log^{3/2} \Delta}{\sqrt{\Delta}} |V| } &\underset{\eqref{eq:regbinomest}}{\leq}  \exp\pbra{\frac{\log^2 \Delta}{(2+\gamma) \Delta} \abs{V}} \cdot \exp\pbra{ - \frac{\log^2 \Delta}{\Delta  }|V| } \\
         & \leq \exp\pbra{ - \frac{1}{2} \cdot \frac{\log^2 \Delta}{\Delta  }|V| }\mper
    \end{align*}

Conditioned on this event not happening, we have that for all such $\tau$, $|V \setminus (\tau \cup N[\tau])|$ satisfies
\begin{align*}
    \abs{V \setminus (\tau \cup N[\tau])} \underset{\eqref{eq:regEYNtau}}{\geq} \left((\ell+2) - 2\right) \frac{\log^{3/2} \Delta}{\sqrt{\Delta}}\abs{V} \geq \frac{\ell\log^{3/2} \Delta}{\sqrt{\Delta}} \abs{Y}\mcom
\end{align*}
as desired.
\end{proof}

\begin{proof}[Proof of \Cref{lem:regmartingaletail}]
    We define the following edge exposure martingale for the pairing model. Perhaps after renaming, we assume the $\frac{1}{\Delta}$-vertices in the cloud $C(\tau)$ are numbered $1,\dots,\Delta|\tau|$. For any such $i = 1, \dots, \Delta\abs{\tau}$ then, let $\pi_i \in V \times [\Delta]$  be the match for $i$. We then define
    \begin{align}\label{eq:regZidef}
        Z_i \coloneqq \E_{\pi_{i+1}, \dots, \pi_{\Delta\abs{\tau}}}\sbra{|N_\pi\sbra{\tau}| \mid \pi_{i}, \dots, \pi_1}\mper
    \end{align}
    Observe that $Z_0 = \E \sbra{|N\sbra{\tau}|}$ while $Z_{\Delta\abs{\tau}} = \abs{N\sbra{\tau}}$, i.e.~it is the number of neighbors of $\tau$ in this realization of the pairing model. As such, it is enough to show that
    \begin{equation}
        \Pr\sbra{Z_{\Delta|\tau|} - Z_0 \geq t}\leq \exp\pbra{ - \frac{1}{8} \cdot \frac{t^2 }{ \Delta\abs{\tau}}}\mper
    \end{equation}
    To do this, we use the Azuma-Hoeffding inequality.
    \begin{lemma}[Azuma-Hoeffding inequality]
        Let $\{\Delta_i \coloneqq Z_{i} - Z_{i-1}\}_{i \in [m]}$ be a martingale difference sequence satisfying $\abs{\Delta_i} \leq 2$ almost surely for every $i \in [m]$. Then for all $t >0$ we have
        \begin{align*}
            \Pr\sbra{Z_m- Z_0 > t} \leq \exp\pbra{ - \frac{1}{8} \cdot \frac{t^2 }{ m}}\mper
        \end{align*}
    \end{lemma}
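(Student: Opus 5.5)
The plan is the standard exponential-moment (Chernoff) argument, with the martingale structure handled by the tower property. Let $\mathcal{F}_{i}$ be the $\sigma$-algebra generated by $Z_0,\dots,Z_i$, so that $\E\sbra{\Delta_i \mid \mathcal{F}_{i-1}} = 0$. For any $s>0$, Markov's inequality applied to $e^{s(Z_m - Z_0)}$ gives
\begin{align*}
    \Pr\sbra{Z_m - Z_0 > t} \leq e^{-st}\, \E\sbra{e^{s(Z_m - Z_0)}} = e^{-st}\, \E\sbra{\prod_{i=1}^m e^{s\Delta_i}}\mper
\end{align*}
The task therefore reduces to bounding the moment generating function of the sum of increments.

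The key step is a conditional Hoeffding lemma. If $\E\sbra{\Delta_i \mid \mathcal{F}_{i-1}} = 0$ and $\abs{\Delta_i} \leq c$ almost surely, then $\E\sbra{e^{s\Delta_i} \mid \mathcal{F}_{i-1}} \leq e^{s^2c^2/2}$. I would prove this in two moves.

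\textbf{Convexity.} Since $x \mapsto e^{sx}$ is convex, on $[-c,c]$ it lies below its chord:
\begin{align*}
    e^{sx} \leq \frac{c-x}{2c}e^{-sc} + \frac{c+x}{2c}e^{sc}\mper
\end{align*}
Taking conditional expectations and using that the conditional mean of $\Delta_i$ is zero gives $\E\sbra{e^{s\Delta_i} \mid \mathcal{F}_{i-1}} \leq \cosh(sc)$.

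\textbf{Taylor comparison.} The bound $\cosh(y) \leq e^{y^2/2}$ follows termwise, because $(2k)! \geq 2^k k!$.

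With this lemma in hand, I peel off the last factor by conditioning on $\mathcal{F}_{m-1}$:
\begin{align*}
    \E\sbra{\prod_{i=1}^m e^{s\Delta_i}} = \E\sbra{\prod_{i=1}^{m-1} e^{s\Delta_i}\, \E\sbra{e^{s\Delta_m}\mid \mathcal{F}_{m-1}}} \leq e^{s^2c^2/2}\,\E\sbra{\prod_{i=1}^{m-1} e^{s\Delta_i}}\mper
\end{align*}
Inducting on $m$ gives $\E\sbra{e^{s(Z_m-Z_0)}} \leq e^{m s^2 c^2/2}$. Hence
\begin{align*}
    \Pr\sbra{Z_m - Z_0 > t} \leq \exp\pbra{-st + \frac{ms^2c^2}{2}}\mper
\end{align*}
This exponent is minimized at $s = t/(mc^2)$, which yields $\exp\pbra{-t^2/(2mc^2)}$. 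Substituting $c = 2$ gives exactly $\exp\pbra{-\frac{1}{8}\cdot\frac{t^2}{m}}$.

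The only step with real content is the conditional Hoeffding lemma. Everything else is bookkeeping: Markov's inequality, the tower property, and a quadratic optimization in $s$.
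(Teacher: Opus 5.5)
Your proof is correct. The paper does not prove this lemma; it simply invokes the standard Azuma--Hoeffding inequality. Your argument is the textbook proof: Markov's inequality applied to $e^{s(Z_m-Z_0)}$, the conditional Hoeffding bound $\E\sbra{e^{s\Delta_i}\mid\mathcal{F}_{i-1}}\leq\cosh(2s)\leq e^{2s^2}$, peeling off factors via the tower property, and optimizing at $s=t/(4m)$. The constant $\frac{1}{8}$ comes out correctly from $c=2$. Using the natural filtration of $(Z_i)$ is also harmless, since a martingale with respect to the paper's edge-exposure filtration is automatically a martingale with respect to its own natural filtration.
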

    All that remains then is to justify the bounded differences assumption $\abs{Z_i - Z_{i-1}} \leq 2$ for all $i \in [\Delta\abs{\tau}]$ in our martingale. To do so, fix exposure results $\{\pi_j\}_{j = 1}^{i-1}$. Note if $i = \pi_j$ for $j = 1, \dots, i-1$, which is possible in the random regular pairing model, then $Z_i = Z_{i-1}$ and we are done. Otherwise, suppose $(v,k) \in C(V) \setminus \{\pi_j\}_{j = 1}^{i-1}$ and define 
    \begin{align*}
        f((v,k)) \coloneqq \E \sbra{|N\sbra{\tau}| \mid \pi_i = (v,k)}\mper
    \end{align*}
    The quantity we want to bound is then exactly $\max_{(v,k)\in C(V) \setminus \{\pi_j\}_{j=1}^{i-1}} |f((v,k)) - \E\sbra{f}|$, but it will suffice to bound $\abs{f((v_1, k_1)) - f((v_2, k_2))} \leq 2$ for any pairs $(v_1, k_1)$, $(v_2, k_2) \in C(V) \setminus \{\pi_j\}_{j = 1}^{i-1}$. 
    
    To do so, we first choose a (perfect) matching $M_1$ on $C(V)$ uniformly at random such that $1,\dots,i-1$ match to $\pi_1,\dots,\pi_{i-1}$ and $i$ matches to $(v_1,k_1)$. We make a new (perfect) matching $M_2$ as follows: say $(v_2,k_2)$ is matched (in $M_1$)  to $(v_3,k_3)$; we match $(v_3,k_3)$ to $(v_1,k_1)$ and we match $i$ to $(v_2,k_2)$, i.e., we switch the match of $(v_1,k_1)$ and $(v_2,k_2)$. It is not hard to see that this changes $|N[\tau]|$ by at most 2.
\end{proof}

\begin{proposition}
        \label{lem:singleregular}
        For any $\gamma < 1$, the pairing model on $V$, and $\tau \subseteq V$ with $\abs{\tau} = \frac{\log \Delta}{(2+\gamma) \Delta} \abs{V}$ we have
        \begin{align*}
             \Pr\sbra{v \notin N\sbra{\tau}} \geq \frac{1-o_\Delta(1)}{\Delta^{1/(2+\gamma)}}, \quad \forall v \in V \setminus \tau\mper
        \end{align*}
    \end{proposition}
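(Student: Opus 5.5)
We will follow the proof of \Cref{lem:singlebipartite}, adapted to the non-bipartite pairing model on $V \times [\Delta]$. Fix $v \in V \setminus \tau$. The event $v \notin N\sbra{\tau}$ says that none of the $\Delta$ copies in $C(v)$ is matched into $C(\tau)$. We expose the matches of $(v,1), \dots, (v,\Delta)$ one at a time. For the lower bound it is convenient to reveal them conditionally on all earlier copies having avoided $C(\tau)$.

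Condition on copies $(v,1),\dots,(v,i)$ having avoided $C(\tau)$, and consider the next unmatched copy of $v$. Its partner is uniform among the currently unmatched $\frac{1}{\Delta}$-vertices other than itself. There are at least $\Delta\abs{V} - 2i - 1$ such vertices and at most $\Delta\abs{V}$. One complication, absent in the bipartite case, is that copies of $v$ may be matched to each other (self-loops). Such a pairing consumes two copies at once without touching $C(\tau)$, so it only helps the event.

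Since $v \notin \tau$, none of the $\Delta\abs{\tau}$ copies in $C(\tau)$ has been used by the previous matches. So the conditional probability of avoiding $C(\tau)$ at this step is at least
$$1 - \frac{\Delta\abs{\tau}}{\Delta\abs{V} - 2\Delta}.$$
Multiplying over at most $\Delta$ steps (self-loops only reduce the number of steps) gives
$$\Pr\sbra{v \notin N\sbra{\tau}} \geq \pbra{1 - \frac{\abs{\tau}}{\abs{V}-2}}^{\Delta}.$$

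The remaining step is the estimate. Use $1-x \geq e^{-x-x^2/2}$ for $x \le 1/2$ with $x = \abs{\tau}/(\abs{V}-2) = \frac{\log\Delta}{(2+\gamma)\Delta}(1+O(1/\abs{V}))$. The exponent becomes
$$-\frac{\log\Delta}{2+\gamma} - O\pbra{\frac{\log^2\Delta}{\Delta}} - O\pbra{\frac{\log\Delta}{\abs{V}}},$$
so the product is at least $(1-o_\Delta(1))\,\Delta^{-1/(2+\gamma)}$, as claimed.

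The only real subtlety is the bookkeeping for self-pairings within $C(v)$ and for the shrinking pool of unmatched vertices. The monotonicity remark above resolves both: self-pairings only help avoid $C(\tau)$, and the denominator correction is $O(\Delta)$, which is negligible against $\Delta\abs{V}$.
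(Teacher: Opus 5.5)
Your proof is correct and is essentially the paper's argument: both expose the copies of $v$ one at a time and multiply the conditional avoidance probabilities. The only difference is self-pairings: the paper sidesteps them by lower-bounding with the sub-event that every copy avoids $C(\tau)\cup C(v)$ (per-step failure probability at most $\frac{\abs{\tau}+1}{\abs{V}-1}$, over exactly $\Delta$ steps), whereas you allow them and note that they only shorten the process.

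One slip, which the paper shares: $1-x\geq e^{-x-x^2/2}$ is false for $x>0$, since every term of $\log(1-x)=-x-\frac{x^2}{2}-\frac{x^3}{3}-\cdots$ is negative. Use $1-x\geq e^{-x-x^2}$ for $0\le x\le \frac12$ instead; this still costs only an $O(\log^2\Delta/\Delta)=o_\Delta(1)$ loss in the exponent.
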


    \begin{proof}
        We consider the probability that all $\frac{1}{\Delta}$-vertices in $C(v)$ avoid matching into $\tau$ or $C(v)$ sequentially, which provides a lower bound on the probability of matching into $\tau$. The probability that $(v,i)$ avoids $\tau \cup C(v)$, conditioned on all previous copies avoiding both, is then at least $1 - \frac{ \abs{\tau}+1}{ \abs{V} - 1}$. Therefore, the probability can be lower bounded by
    \begin{align*}
        \Pr\sbra{v \notin N\sbra{\tau}} \geq \left(1-\frac{\abs{\tau}+1}{|V|-1}\right)^{\Delta} \underset{1-x\geq e^{-x-x^2/2}}{\geq} e^{-\frac{\Delta(\abs{\tau}+1)}{|V|-1} - \frac{\Delta(\abs{\tau}+1)^2}{2(|V|-1)^2}}  \geq \frac{1-o_\Delta(1)}{\Delta^{1/(2+\gamma)}}\mper
    \end{align*}
\end{proof}
\section{Slow Mixing in the One-Sided Slice}
\label{sec:slowmix}

In this section, we show slow mixing of the one-sided chain at fugacity $\lambda = \Theta\pbra{\frac{1}{\sqrt{\Delta}}}$ for a bipartite graph $G = (X, Y, E)$ at slice size $k = \frac{|X|}{\Delta^{c}}$ for all $c \in (1/2,1)$. We show this by finding a graph exhibiting a disjoint partition where both sides have large volume yet low conductance, allowing us to conclude slow mixing using e.g. Cheeger's inequality. First, recall the definition of conductance.

\begin{definition}[Conductance]
For a reversible Markov chain $\P$ defined on vertex set $\Omega$ with stationary distribution $\mu$, the conductance of a subset $S \subseteq \Omega$ is given by 

$$\phi(S) = \frac{\mu (S, \Omega \setminus S)}{\min \{\mu(S), \mu(\Omega \setminus S) \}} $$

where $\mu(S) = \sum_{v \in S} \mu(v)$ and $\mu(A, B) = \sum_{a \in A, b \in B} \mu(a,b)$ where $\mu(a,b) = \mu(a) \P(a,b)$. The conductance $\phi \coloneqq \phi(\P)$ of $\P$ is then $\min_{\varnothing \subsetneq S \subsetneq \Omega} \phi(S) $. 
\end{definition}

The following classical theorem, (see e.g. \cite[Thm 7.4]{LevinPW17}) lower bounds the mixing time by the conductance.

\begin{theorem}
    \label{thm:slowmixing}
For any reversible chain $\P$ we have
$$\taumix(\P, 1/4) \geq \frac{1}{4\phi(\P)}\mper$$
\end{theorem}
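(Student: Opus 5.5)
The plan is the standard "bottleneck" argument. Start the chain from the stationary distribution conditioned on a low-conductance set, and show that it cannot leak out of that set quickly.

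\textbf{Choosing the set.} Let $Q(a,b)=\mu(a)\P(a,b)$. By reversibility, $Q(S,\Omega\setminus S)=Q(\Omega\setminus S,S)$, so $\phi(S)=\phi(\Omega\setminus S)$. Hence we may pick a set $S$ with $\phi(S)=\phi(\P)$ and, after possibly swapping $S$ with its complement, $\mu(S)\le 1/2$. Define the conditioned distribution $\pi_S(x)=\mu(x)\mathbf 1[x\in S]/\mu(S)$.

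\textbf{One-step leakage.} We show that $\dtv(\pi_S\P,\pi_S)=\phi(S)$.
\begin{itemize}
\item For $y\in S$, we have $(\pi_S\P)(y)=\frac{1}{\mu(S)}\sum_{x\in S}\mu(x)\P(x,y)\le \frac{1}{\mu(S)}\sum_{x}\mu(x)\P(x,y)=\frac{\mu(y)}{\mu(S)}=\pi_S(y)$, using stationarity of $\mu$.
\item For $y\notin S$, we have $\pi_S(y)=0$, so $(\pi_S\P)(y)-\pi_S(y)=Q(S,y)/\mu(S)\ge 0$.
\end{itemize}
The total variation distance is the sum of the positive parts of the difference. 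By the two cases above, these positive parts are supported exactly on $\Omega\setminus S$, so
\begin{align*}
\dtv(\pi_S\P,\pi_S)=\frac{Q(S,\Omega\setminus S)}{\mu(S)}=\phi(S).
\end{align*}

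\textbf{Iterating.} Applying a stochastic matrix does not increase total variation distance. Therefore
\begin{align*}
\dtv(\pi_S\P^{s+1},\pi_S\P^{s})\le \dtv(\pi_S\P,\pi_S)=\phi(S)\quad\text{for every } s\ge 0.
\end{align*}
Telescoping these bounds with the triangle inequality gives $\dtv(\pi_S\P^t,\pi_S)\le t\,\phi(S)$.

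\textbf{Lower bound on mixing.} Since $\pi_S(\Omega\setminus S)=0$ while $\mu(\Omega\setminus S)\ge 1/2$, we have $\dtv(\pi_S,\mu)\ge 1/2$. Hence
\begin{align*}
\dtv(\pi_S\P^t,\mu)\ge \tfrac12-t\,\phi(\P).
\end{align*}
Next, $\pi_S\P^t=\sum_{x\in S}\pi_S(x)\,\delta_x\P^t$ is a convex combination of point-mass evolutions, and total variation distance to $\mu$ is convex. So some $x\in S$ satisfies $\dtv(\delta_x\P^t,\mu)\ge \tfrac12-t\phi(\P)$, and this is strictly greater than $1/4$ whenever $t<1/(4\phi(\P))$. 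It follows that $\taumix(\P,1/4)\ge \frac{1}{4\phi(\P)}$.

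\textbf{Main obstacle.} The only delicate step is the one-step leakage identity. It requires using stationarity to see that the difference $\pi_S\P-\pi_S$ is nonpositive on $S$, so that the total variation distance is exactly the mass that escapes to $\Omega\setminus S$. The remaining steps are routine contraction and convexity arguments.
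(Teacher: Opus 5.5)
Your argument is correct and is the standard bottleneck-ratio proof of \cite[Thm 7.4]{LevinPW17}, which the paper cites rather than reproving; your swap to $\mu(S)\le 1/2$ is exactly what reconciles the paper's $\min\{\mu(S),\mu(\Omega\setminus S)\}$ normalization with the bottleneck ratio used there. One small step is left implicit: $t\mapsto \dtv(\delta_x\P^t,\mu)$ is non-increasing, because $\mu\P=\mu$ and $\P$ contracts total variation. This monotonicity is what turns ``some $x$ is still far at time $t$'' into $\tau_x(\P,1/4)>t$ under the paper's first-hitting-time definition of $\taumix$.
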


We give the following simple construction: take a disjoint union of two independent random $\Delta$-regular bipartite graphs $G = G_1 \sqcup G_2$ with $n \coloneqq |V(G_1)| = |V(G_2)|$. For any subgraph $C \subseteq G$ we write $X(C), Y(C)$ to denote the left side and right side of the vertices of $C$ respectively.
For a set $S\subseteq {\cal X}(k)$ we let $\partial S \coloneqq \{ \tau \in S \mid \exists \, \sigma \in \mathcal{X}(k) \setminus S, |\tau \cap \sigma| = k -1 \}$ be the vertex boundary of $S$. For simplicity, we let $\mu$ denote the unnormalized measure of $\mu_\lambda^{(k)}$. We further assume $k$ is even in the analysis.

\begin{lemma}
Consider the one-sided slice of the hardcore model $(\calX, \mu_\lambda^{(k)})$ for $G$ above at fugacity $\lambda = \Theta\left(\frac{1}{\sqrt{\Delta}} \right)$ and $k = \frac{|X|}{\Delta^{c}}$ for $ c \in (1/2, 1)$. We have that with high probability the conductance $\phi$ satisfies $\phi \leq \exp( - \Omega(n/ \sqrt{\Delta}))$. Thus, the down-up walk mixes in exponential time.
\end{lemma}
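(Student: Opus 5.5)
The plan is to exhibit a bottleneck set $S$ via the symmetry between the two components. Write $X = X(G_1)\sqcup X(G_2)$ with $|X(G_i)| = |X|/2$. For a facet $\tau \in \calX(k)$, let $j(\tau) = |\tau \cap X(G_1)|$. Since $G_1,G_2$ are disjoint, the weight factorizes:
\begin{align*}
\mu(\tau) = w_1(\tau\cap X(G_1))\, w_2(\tau \cap X(G_2)),\qquad w_i(T) = \lambda^{|T|}(1+\lambda)^{|Y(G_i)\setminus N[T]|}.
\end{align*}
Take $S = \{\tau : j(\tau) < k/2\}$. By the symmetry swapping the two components (in distribution they are identical; to make it exact I would take $G_2$ an isomorphic copy of $G_1$, which is also allowed in the construction), $\mu(S)$ and $\mu(\Omega\setminus S)$ are both comparable, up to the middle level $j=k/2$, to half the mass. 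A single down-up step changes $j$ by at most one, so $\mu(S,\Omega\setminus S) \le \sum_{\tau:\,j(\tau)\in\{k/2-1,k/2\}}\mu(\tau)$. It therefore suffices to show that the levels near $j=k/2$ carry an $\exp(-\Omega(n/\sqrt\Delta))$ fraction of the total mass. Equivalently, writing $M(j) = \sum_{|\tau\cap X(G_1)|=j}\mu(\tau) = Z_1(j)Z_2(k-j)$, the profile $j\mapsto \log M(j)$ should be bimodal, with modes near $j\approx 0$ and $j\approx k$ (all mass on one side), and exponentially lower in the middle.

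The estimate of $Z_i(j) = \sum_{T\in\binom{X(G_i)}{j}} \lambda^{j}(1+\lambda)^{|Y(G_i)\setminus N[T]|}$ goes through \Cref{lem:bipartiteconc} and \Cref{lem:bipartitelargeconc} applied to each $G_i$ (with $|X(G_i)| = n/2$). At the balanced point $j=k/2 = |X(G_i)|/\Delta^{c}$, the set is well above the threshold $\frac{\log\Delta}{2\Delta}|X(G_i)|$ since $c<1$, so \Cref{lem:bipartitelargeconc} with $b>1/2$ gives $|Y(G_i)\setminus N[T]| \le n\Delta^{-b}$. The upper bound is then
\begin{align*}
M(k/2) \le \binom{n/2}{k/2}^2\lambda^{k}(1+\lambda)^{2n\Delta^{-b}}.
\end{align*}
Compare this with the unbalanced configuration $j=0$, where the term $T_1=\varnothing$ alone contributes $(1+\lambda)^{n/2}$ from $G_1$. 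This gives
\begin{align*}
M(0) \ge (1+\lambda)^{n/2}\binom{n/2}{k}\lambda^{k}.
\end{align*}
Since $c>1/2$, we have $k\log(\Delta) \approx n\Delta^{-c}\log\Delta = o(n/\sqrt\Delta)$. So all the binomial and $\lambda^k$ factors are $\exp(o(n/\sqrt\Delta))$, while $(1+\lambda)^{n/2} = \exp(\Theta(n/\sqrt\Delta))$. The ratio is therefore $M(k/2)/M(0) \le \exp(-\Omega(n/\sqrt\Delta))$. The same bound holds at $j=k/2\pm1$, and more generally for all $j$ with both $j,k-j \ge \frac{\log\Delta}{\Delta}\cdot\frac n2$, using \Cref{lem:bipartiteconc} in the intermediate range, where the free $Y$ is at most $O(\log\Delta/\sqrt\Delta)\,n$ and so contributes only $\exp(O(n\log\Delta/\Delta))$.

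Finally, $\mu(S)\ge M(0)$ and $\mu(\Omega\setminus S)\ge M(k)=M(0)$ by symmetry. Hence $\phi(S)\le \poly(n)\, M(k/2)/M(0) \le \exp(-\Omega(n/\sqrt\Delta))$, and \Cref{thm:slowmixing} gives exponential mixing time.

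The main obstacle is the careful comparison of the exponents. The gain $(1+\lambda)^{n/2}\approx e^{\Theta(n/\sqrt\Delta)}$ must beat the entropy terms $\binom{n/2}{k}$, which are $e^{O(n\Delta^{-c}\log\Delta)}$, and this is exactly where $c>1/2$ is used. One also has to check that the high-probability events of \Cref{lem:bipartitelargeconc} hold simultaneously for both components and all relevant $T$, which follows by a union bound over the two graphs.
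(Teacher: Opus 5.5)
Your proposal is correct and follows essentially the paper's argument: cut along the majority component, bound the cut flow by the mass at the balanced level $j=k/2$ using \Cref{lem:bipartitelargeconc}, and lower-bound each side by the facets lying entirely in one component, where the other component's free $Y$-side contributes $(1+\lambda)^{n/2}$. As in the paper, $c>1/2$ makes the entropy $\exp(O(n\Delta^{-c}\log\Delta))$ negligible against $\exp(\Theta(n/\sqrt{\Delta}))$. Your appeal to symmetry, and to taking $G_2$ as an isomorphic copy of $G_1$, is unnecessary, since $M(k)\ge(1+\lambda)^{n/2}\binom{n/2}{k}\lambda^k$ follows by the same direct argument you gave for $M(0)$.
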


\begin{proof}
 We prove the first assertion; the second assertion follows  by \Cref{thm:slowmixing}. 
 Let $S \coloneqq \{ \tau \in \mathcal{X}(k) \mid  |\tau \cap X(G_1)| > {k} / 2\}$. In words, they are the $k$-subsets of $X$ that share more than $k/2$ vertices with $X(G_1)$. Now, by definition, the conductance of $S$ is given by
 \begin{align*}
    \phi(S) = \frac{\mu(S, \mathcal{X}(k) \setminus S)}{\min \{\mu(S), \mu(\mathcal{X}(k) \setminus S) \}} \leq \frac{\mu(\partial S)}{\min \{\mu(S), \mu(\mathcal{X}(k) \setminus S) \}}\mper
 \end{align*}
 where the inequality is because the only edges out of $S$ comes from its boundary vertices. We claim that with high probability over randomness of  $G_1, G_2$ we have 
\begin{align}
     \mu(S), \mu (\mathcal{X}(k) \setminus S) &\geq \binom{n}{k} (1 + \lambda)^{n} \label{eq:conddenom}\\
     \mu( \partial S) &\leq \binom{n}{k/2} \binom{n}{k/2} (1 + \lambda)^{n/\sqrt{\Delta}} \mper\label{eq:condnum}
 \end{align}
Then,
 \begin{align*}
    \phi(S) &  \leq \frac{\binom{n}{k/2} \binom{n}{k/2} (1 + \lambda)^{o_\Delta(n)}}{{\binom{n}{k} (1 + \lambda)^{n}}} \\
    &\leq \frac{\left(\frac{en}{k/2}\right)^{k/2} \left(\frac{en}{k/2} \right)^{k/2}}{ \left(\frac{n}{k}\right)^k} (1 + \lambda)^{ - \Omega(n)} \\
    &= \exp(\Theta(k) - \Omega(\lambda n))
    \underset{k = \frac{|X|}{\Delta^c}, c \in (1/2, 1), \lambda = \Theta(\frac{1}{\sqrt{\Delta}})}{\leq} \exp(-\Omega(\frac{n}{\sqrt{\Delta}}))\mper
 \end{align*}
 as desired.
 
 It remains to prove \eqref{eq:conddenom}, \eqref{eq:condnum}. To see \eqref{eq:conddenom}, note
 $$ \mu(S)\geq \sum_{\tau\in {\cal X}(k) : \tau\subseteq X(G_1)} \mu(\tau) \geq \sum_{\tau\in {\cal X}(k): \tau\subseteq X(G_1)} (1+\lambda)^n = \binom{n}{k}(1+\lambda)^n\mcom$$
 where the second inequality uses that all vertices of $Y(G_2)$ have no edges to $X(G_1)$ (and $\tau$ in particular). The second inequality can be shown similarly.

 Next we show \eqref{eq:condnum}. First, since $k/2 = \frac{n}{2 \Delta^c}$, setting the constants appropriately, the condition of \Cref{lem:bipartitelargeconc} applies to $G_1$ and $k/2$ (and similarly $G_2$ and $k/2$) and we have that with high probability over the draw of $G_1, G_2$, it holds that for all $A\subseteq X(G_1), B\subseteq X(G_2)$ of size $|A| = k/2, |B| = k/2$, $\abs{Y(G_1) \setminus N\sbra{A}},\abs{Y(G_2) \setminus N\sbra{B}} \leq \frac{n}{2 \sqrt{\Delta}}$.

 Now recall by definition of $S$ we have $\partial S = \{\tau \in \mathcal{X}(k) \mid |\tau \cap X(G_1)| = \frac{k}{2}, |\tau \cap X(G_2)| = \frac{k}{2}\}$. Thus for all sets $\tau \in \partial S$, we have that 
 $$|Y \setminus N\sbra{\tau}| = |Y(G_1)\setminus N[\tau\cap X(G_1)]| + |Y(G_2)\setminus N[\tau\cap X(G_2)]|\leq \frac{n}{\sqrt{\Delta}} \mper$$ 
 Therefore, $\mu(\tau) \leq (1 + \lambda)^{\frac{n}{\sqrt{\Delta}}}$. \eqref{eq:condnum} follows since $|\partial(S)|=\binom{n}{k/2}\cdot \binom{n}{k/2}$. \end{proof}

\end{document}